\documentclass[acmsmall,screen]{acmart}

\citestyle{acmauthoryear} 

\usepackage[T1]{fontenc} 
\usepackage[utf8]{inputenc} 
\usepackage{multirow}

\usepackage{graphicx}

\usepackage{url}
\usepackage{verbatim}
\usepackage{color}
\usepackage{amsmath}
\usepackage{latexsym}
\usepackage{stmaryrd}
\usepackage{listings}
\usepackage{enumerate}
\usepackage{xspace}
\usepackage{multicol}
\usepackage{proof}
\usepackage{mathpartir}
\usepackage{multirow}
\usepackage{hhline}
\usepackage{array}
\usepackage{balance}
\definecolor{light-gray}{gray}{0.80}
\lstset{
  language=[Objective]Caml,
  basicstyle=\ttfamily\small,
  escapeinside={@}{@},
  numbers=left,
  numberstyle=\scriptsize,
  xleftmargin=1.5em,
  keywordstyle=\color{blue},
  commentstyle=\color{darkgreen},
  stringstyle=\color{red},
}

\usepackage{algorithm}
\usepackage{algpseudocode}
\usepackage{enumitem}

\usepackage{qtree}
\usepackage{comment}
\usepackage{semantic}
\usepackage{galois}
\usepackage{pgffor}
\usepackage{float}

\usepackage{bm}
\definecolor{listinggray}{gray}{0.9}
\usepackage{layout}
\usepackage{caption}
\usepackage{mathtools}

\usepackage{svgcolor}
\definecolor{darkblue}{rgb}{0,0,0.35}


\usepackage{hyperref}

\newcommand{\vtag}{\mathit{tag}}

\newcommand{\restr}[2]{#1|_{#2}}

\newcommand{\vars}{\mi{Var}}

\newcommand{\locs}{\mi{Loc}}

\newcommand{\ZZ}{\mathbb{Z}}
\newcommand{\NN}{\mathbb{N}}

\newcommand{\pto}{\rightharpoonup}

\newtheorem{zproof*}{Proof}



\newif\iftr
\trtrue
\newcommand{\less}[1]{\iftr \else #1 \fi}
\newcommand{\more}[1]{\iftr {#1} \else \fi}
\newcommand{\lessmore}[2]{\iftr #2 \else #1 \fi}

\newcommand{\m}[1]{\mathsf{#1}}
\newcommand{\mi}[1]{\mathit{#1}}

\newcommand{\kw}[1]{\textsf{\textbf{#1}}} 

\newcommand{\massert}[1]{\kw{assert}(#1)}

\newcommand{\mbind}[2]{\kw{bind}(#1, #2)}
\newcommand{\mread}[1]{\,!#1}
\newcommand{\mto}{\leftarrow}
\newcommand{\menv}{\kw{env}}
\newcommand{\mupd}[2]{#1 \mathrel{:=} #2}
\newcommand{\Mlet}{\kw{let}\,}
\newcommand{\Mdo}{\kw{do}\,}
\newcommand{\Mreturn}{\kw{return}\,}
\newcommand{\Min}{\kw{in}\,}
\newcommand{\Mif}{\kw{if}\,}
\newcommand{\Melse}{\kw{else}\,}

\newcommand{\Mthen}{\kw{then}\,}

\newcommand{\Mend}{\kw{;}\;}
\newcommand{\Mfor}{\kw{for}\;}

\newcommand{\bdiamond}{\mathbin{\diamond}}

\newcommand{\stacklabel}[1]{\stackrel{\smash{\scriptscriptstyle \mathrm{#1}}}}
\newcommand{\Def}{\stacklabel{\mathrm{def}}}


\newcommand{\tool}[1]{\textsc{#1}\xspace}
\newcommand{\ourtool}{\tool{Drift}}

\newcommand{\dom}{\operatorname{\mathsf{dom}}}

\newcommand{\set}[1]{\{#1\}}
\newcommand{\pset}[2]{\set{\,#1\mid#2\,}}

\newcommand{\sqleq}{\sqsubseteq}

\newcommand{\Zrepeat}[2]{\foreach \n in {1, ..., #1}{#2}}
\newcommand{\setc}[2]{\{#1 \mid #2\}}
\newcommand{\pwset}[1]{\mathcal{}}
\newcommand{\powerset}{\mathcal{\wp}}

\newcommand{\pair}[2]{\langle #1, #2 \rangle}
\newcommand{\vecop}[1]{{\dot #1}}
\newcommand{\dvecop}[1]{{\ddot #1}}

\newcommand{\ind}[1][1]{\quad\Zrepeat{#1}{\quad}}
\newcommand{\sind}[1][1]{\;\Zrepeat{#1}{\;}}

\newcommand{\elabels}{Loc}
\newcommand{\elabel}{\ell}
\newcommand{\elabela}{i}
\newcommand{\elabelb}{j}

\newcommand{\Exp}{\mi{Exp}}
\newcommand{\const}{c}
\newcommand{\consts}{\mi{Cons}}

\newcommand{\btrue}{\mi{true}}

\newcommand{\ifte}[3]{#1\,?\,#2:#3}


\newcommand{\cn}[2]{#1{\bdiamond}#2}
\newcommand{\cns}[3]{#1{\bdiamond}#2{\bdiamond}#3}
\newcommand{\cvn}[3]{#1{\bdiamond}#2{\bdiamond}#3}

\newcommand{\nenv}{env}
\newcommand{\nloc}{loc}
\newcommand{\nstack}{stack}

\newcommand{\cerr}{\top}
\newcommand{\cval}{v}
\newcommand{\cvalues}{\mathcal{V}}
\newcommand{\ctable}{\bm{\cval}}
\newcommand{\ctables}{\mathcal{T}}
\newcommand{\tin}[1]{#1_{\mathsf{in}}}
\newcommand{\tout}[1]{#1_{\mathsf{out}}}
\newcommand{\safe}{\mathit{safe}}

\newcommand{\cmap}{M}
\newcommand{\cmaps}{\mathcal{M}}

\newcommand{\cenvironments}{\mathcal{E}}
\newcommand{\cenv}{E}
\newcommand{\cnodes}{\mathcal{N}}
\newcommand{\cnode}{n}
\newcommand{\cenodes}{\mathcal{N}_e}

\newcommand{\cvnodes}{\mathcal{N}_{x}}

\newcommand{\ccallsites}{\cstacks}
\newcommand{\ccsite}{\cstack_{\m{cs}}}
\newcommand{\cstack}{S}
\newcommand{\cstacks}{\mathcal{S}}

\newcommand{\cvord}{\sqsubseteq}

\newcommand{\cvjoin}{\sqcup}

\newcommand{\cmord}{\mathrel{\vecop{\cvord}}}

\newcommand{\cmjoinb}{\vecop{\bigsqcup}}

\newcommand{\cprop}{\mathsf{prop}}
\newcommand{\ciprop}{\ltimes}
\newcommand{\cstepbody}{\mathsf{body}}
\newcommand{\fuel}{k}
\newcommand{\fuels}{\mathbb{N}}

\newcommand{\ctransname}{\m{step}}
\newcommand{\Ctransname}{\m{Step}}
\newcommand{\ctrans}[2][k+1]{\ctransname\llbracket#2\rrbracket}
\newcommand{\ccoll}{\mathbf{C}}
\newcommand{\csem}{\mathbf{S}}
\newcommand{\cdomain}{\mathcal{P}}

\newcommand{\psafe}{P_{\m{safe}}}

\newcommand{\tentry}[3]{#1 \triangleleft #2\to#3}
\newcommand{\rdesignation}{\mathsf{r}}

\newcommand{\rscope}{X}

\newcommand{\rbot}{\bot^{\rdesignation}}
\newcommand{\rerr}{\top^{\rdesignation}}
\newcommand{\rval}{u}
\newcommand{\rvalues}[1][\rscope]{\cvalues^{\rdesignation}_{#1}}
\newcommand{\rtable}{\bm{\rval}}
\newcommand{\rtables}[1][\rscope]{\ctables^{\rdesignation}_{#1}}

\newcommand{\scmap}{\Gamma}

\newcommand{\relname}{r}
\newcommand{\relsname}[1][\rscope]{\mathcal{R}_{#1}}
\newcommand{\rel}{\relname}
\newcommand{\rels}[1][\rscope]{\relsname^{\rdesignation}}

\newcommand{\dmapname}{d}
\newcommand{\dmapsname}{\mathcal{D}}
\newcommand{\dmap}{\dmapname}
\newcommand{\dmaps}{\dmapsname^{\rdesignation}}
\newcommand{\rdf}{\mathsf{F}}

\newcommand{\rmap}{\cmap^{\rdesignation}}
\newcommand{\rmaps}{\cmaps^{\rdesignation}}

\newcommand{\rvord}{\sqleq^{\rdesignation}}
\newcommand{\rvmeet}{\sqcap^{\rdesignation}}
\newcommand{\rvjoin}{\sqcup^{\rdesignation}}
\newcommand{\rvmeetb}{\mathsf{\bigsqcap}^{\rdesignation}}

\newcommand{\rvecord}{\vecop{\sqleq}^{\rdesignation}}

\newcommand{\rmord}{\mathrel{\vecop{\sqleq^{\rdesignation}}}}
\newcommand{\rmmeet}{\mathop{\vecop{\sqcap^{\rdesignation}}}}
\newcommand{\rmjoin}{\mathop{\vecop{\sqcup^{\rdesignation}}}}
\newcommand{\rmjoinb}{\vecop{\bigsqcup^{\rdesignation}}}
\newcommand{\rmmeetb}{\vecop{\bigsqcap^{\rdesignation}}}

\newcommand{\dgamma}{\gamma^{\m{d}}}
\newcommand{\rgamma}[1][\rscope]{\gamma^{\rdesignation}_{#1}}
\newcommand{\ralpha}[1][\rscope]{\alpha^{\rdesignation}_{#1}}
\newcommand{\rmgamma}{\vecop{\gamma^{\rdesignation}}}
\newcommand{\rmalpha}{\vecop{\alpha^{\rdesignation}}}

\newcommand{\riprop}{\ltimes^{\rdesignation}}
\newcommand{\rprop}{\cprop^{\rdesignation}}
\newcommand{\rupdate}[3]{#1\rupd{#2}{#3}}
\newcommand{\rupd}[2]{[#1 \leftarrow #2]}
\newcommand{\rrescope}[2]{\rstrengthen{#1}{#2}}
\newcommand{\rstrengthen}[2]{#1[#2]}
\newcommand{\requality}[3]{\rstrengthen{#1}{#2 {=} #3}}

\newcommand{\rconst}{[\nu\!=\!c]^{\rdesignation}}

\newcommand{\rstepbody}{\mathsf{body}^{\rdesignation}}
\newcommand{\rtransname}{\ctransname^{\rdesignation}}
\newcommand{\Rtransname}{\Ctransname^{\rdesignation}}
\newcommand{\rtrans}[2][k+1]{\rtransname\llbracket #2 \rrbracket}
\newcommand{\Rtrans}[2][k+1]{\Rtransname\llbracket #2 \rrbracket}
\newcommand{\rcoll}{\mathbf{C}^{\rdesignation}}

\newcommand{\rdomain}{\cdomain^{\rdesignation}}
\newcommand{\rdomelem}{P^{\rdesignation}}
\newcommand{\rdgamma}{{\dvecop{\gamma}}^{\rdesignation}}
\newcommand{\rdalpha}{{\dvecop{\alpha}}^{\rdesignation}}
\newcommand{\rdord}{\mathrel{\dvecop{\sqleq^{\rdesignation}}}}

\newcommand{\penvironments}{\hat{\cenvironments}}
\newcommand{\penv}{\hat{\cenv}}
\newcommand{\pnodes}{\hat{\cnodes}}
\newcommand{\pnode}{\hat{\cnode}}
\newcommand{\pvnodes}{\hat{\cnodes}_{x}}
\newcommand{\penodes}{\hat{\cnodes}_{e}}

\newcommand{\pstack}{\hat{\cstack}}
\newcommand{\pstacks}{\hat{\cstacks}}
\newcommand{\pconcat}{\mathop{\hat{\cdot}}}

\newcommand{\pcsite}{\pstack_{\m{cs}}}

\newcommand{\absn}{x}

\newcommand{\pdesignation}{\hat{\mathsf{r}}}

\newcommand{\pscope}{{\rscope}}

\newcommand{\perr}{\top^{\pdesignation}}
\newcommand{\pbot}{\bot^{\pdesignation}}
\newcommand{\pval}{\hat{\rval}}
\newcommand{\pvalues}[1][\pscope]{\cvalues^{\pdesignation}_{#1}}
\newcommand{\ptable}{\bm{\pval}}
\newcommand{\ptables}[1][\pscope]{\ctables^{\pdesignation}_{#1}}
\newcommand{\ptab}[3][\absn]{#1\!:#2\to#3}

\newcommand{\prel}{\relname^{\pdesignation}}
\newcommand{\prels}[1][\pscope]{\relsname[#1]^{\pdesignation}}

\newcommand{\pdmap}{\dmapname^{\pdesignation}}

\newcommand{\pmap}{\cmap^{\pdesignation}}
\newcommand{\pmaps}{\cmaps^{\pdesignation}}

\newcommand{\pvord}[1][\pscope]{\sqleq^{\pdesignation}_{#1}}
\newcommand{\pvecord}{\mathrel{\vecop{\sqleq}^{\pdesignation}}}
\newcommand{\pvmeet}[1][\pscope]{\sqcap^{\pdesignation}_{#1}}
\newcommand{\pvjoin}[1][\pscope]{\sqcup^{\pdesignation}_{#1}}

\newcommand{\pmord}{\vecop{\sqleq^{\pdesignation}}}
\newcommand{\pmmeet}{\vecop{\sqcap^{\pdesignation}}}
\newcommand{\pmjoin}{\vecop{\sqcup^{\pdesignation}}}

\newcommand{\pgammac}{\gamma^{\pdesignation}}
\newcommand{\palphac}{\alpha^{\pdesignation}}
\newcommand{\pgamma}[1][\pscope]{\gamma^{\pdesignation}_{#1}}
\newcommand{\palpha}[1][\pscope]{\alpha^{\pdesignation}_{#1}}
\newcommand{\pmgamma}{\vecop{\gamma^{\pdesignation}}}
\newcommand{\pmalpha}{\vecop{\alpha^{\pdesignation}}}

\newcommand{\piprop}{\ltimes^{\pdesignation}}
\newcommand{\pprop}{\cprop^{\pdesignation}}

\newcommand{\pstrengthen}[2]{#1[#2]}

\newcommand{\ptransname}{\ctransname^{\pdesignation}}

\newcommand{\ptrans}[2][k+1]{\ctransname^{\pdesignation}\llbracket#2\rrbracket}

\newcommand{\pcoll}[1][\cdot]{\mathbf{C}^{\pdesignation}\llbracket#1\rrbracket}

\newcommand{\pdomain}{\mathcal{P}^{\pdesignation}}

\newcommand{\tdesignation}{\mathsf{t}}

\newcommand{\tbot}{\bot^{\tdesignation}}
\newcommand{\terr}{\top^{\tdesignation}}
\newcommand{\ttop}{\terr}
\newcommand{\tval}{t}
\newcommand{\tvalues}[1][\pscope]{\cvalues^{\tdesignation}_{#1}}
\newcommand{\ttable}{\bm{\tval}}
\newcommand{\ttables}[1][\pscope]{\ctables^{\tdesignation}_{#1}}
\newcommand{\ttab}[3][\absn]{#1\!:#2 \to #3}

\newcommand{\bdesignation}{b}
\newcommand{\bdomain}[1][\tdesignation]{\mathcal{R}^{#1}}
\newcommand{\bval}{\mathit{b}}
\newcommand{\lia}{\mathsf{lia}}
\newcommand{\bbot}{\bot^{\bdesignation}}
\newcommand{\btop}{\top^{\bdesignation}}
\newcommand{\bord}{\mathrel{\sqleq^{\bdesignation}}}
\newcommand{\bordinv}{\sqsupseteq^{\bdesignation}}
\newcommand{\bmeet}{\mathop{\mathsf{\sqcap}^{\bdesignation}}}

\newcommand{\bjoin}{\mathop{\mathsf{\sqcup}^{\bdesignation}}}
\newcommand{\balpha}{\alpha^{\bdesignation}}
\newcommand{\bgamma}{\gamma^{\bdesignation}}
\newcommand{\bwid}[1][\pscope]{\mathop{\triangledown^{\tdesignation}_{#1}}}

\newcommand{\basetype}{\mathsf{b}}

\newcommand{\paramelem}{\mi{a}}

\newcommand{\paramdesignation}{\mathsf{a}}

\newcommand{\paramjoin}[1][\pscope]{\mathsf{\sqcup}^{\paramdesignation}_{#1}}
\newcommand{\paramalpha}[1][\pscope]{\alpha^{\paramdesignation}}
\newcommand{\paramgamma}[1][\pscope]{\gamma^{\paramdesignation}}

\newcommand{\tyrel}[2][\basetype]{\{\nu\!:#1\;|\;#2\}}
\newcommand{\tyrelv}[1][\pscope]{\relname^{\tdesignation}}
\newcommand{\tyrels}[1][\pscope]{\relsname[#1]^{\tdesignation}}

\newcommand{\tmap}{\cmap^{\tdesignation}}
\newcommand{\tmaps}{\cmaps^{\tdesignation}}

\newcommand{\tvord}[1][\pscope]{\sqleq^{\tdesignation}_{#1}}
\newcommand{\tvecord}{\vecop{\sqleq}^{\tdesignation}}
\newcommand{\tvmeet}[1][\pscope]{\sqcap^{\tdesignation}_{#1}}
\newcommand{\tvjoin}[1][\pscope]{\sqcup^{\tdesignation}_{#1}}
\newcommand{\tvmeetb}{\bigsqcap^{\tdesignation}}
\newcommand{\tvjoinb}{\bigsqcup^{\tdesignation}}

\newcommand{\tmord}{\mathrel{\dot{\sqleq}^{\tdesignation}}}
\newcommand{\tmmeet}{\dot{\sqcap}^{\tdesignation}}
\newcommand{\tmjoin}{\mathop{\dot{\sqcup}^{\tdesignation}}}

\newcommand{\tvgamma}[1][\pscope]{\gamma^{\tdesignation}_{#1}}

\newcommand{\tiprop}{\ltimes^{\tdesignation}}
\newcommand{\tprop}{\cprop^{\tdesignation}}
\newcommand{\tupdate}[3]{#1\tupd{#2}{#3}}
\newcommand{\tupd}[2]{[#1 \leftarrow #2]}

\newcommand{\tstrengthen}[2]{#1[#2]}
\newcommand{\tequality}[3]{\tstrengthen{#1}{#2=#3}}
\newcommand{\tsubst}[3]{#1[#2/#3]}

\newcommand{\tconst}{[\nu\!=\!c]^{\tdesignation}}
\newcommand{\tvareq}[1]{[\nu\!=\!#1]^{\tdesignation}}

\newcommand{\tstepbody}{\mathsf{body}^{\tdesignation}}
\newcommand{\ttransname}{\ctransname^{\tdesignation}}

\newcommand{\ttrans}[2][k+1]{\mathsf{step}^{\tdesignation}\llbracket#2\rrbracket}

\newcommand{\tcoll}[1][\cdot]{\mathbf{C}^{\tdesignation}\llbracket#1\rrbracket}
\newcommand{\tcollwid}[1][\cdot]{\mathbf{C}^{\tdesignation}_{\wid}\llbracket#1\rrbracket}

\newcommand{\tdomain}{\mathcal{P}^{\tdesignation}}

\newcommand{\wid}{\mathop{\triangledown}}
\newcommand{\shapewid}[1][\pscope]{\mathop{\triangledown^{\m{sh}}_{#1}}}
\newcommand{\shape}{\m{sh}}
\newcommand{\twid}[1][\pscope]{\mathop{\triangledown^{\tdesignation}_{#1}}}

\newcommand{\trelwid}[1][\pscope]{\mathop{\triangledown^{\m{ra}}_{#1}}}

\newcommand{\btype}[1]{\mathsf{#1}}
\newcommand{\tint}{\btype{int}}

\newcommand{\rtype}[2]{\{\nu\!: #1 \mid #2\}}

\newcommand{\subtype}{\mathrel{<:}}
\newcommand{\tenv}{\Gamma^{\tdesignation}}
\newcommand{\typrel}{\vdash}








\more{\settopmatter{printccs=false,printacmref=false}}

\newcommand{\smartparagraph}[1]{\smallskip\noindent
{\bf #1}\ }
\newcommand{\smartparagraphnb}[1]{\smallskip\noindent
{\it #1}\ }


\newcommand\nocolour{}


\usepackage[normalem]{ulem}

\newcommand{\zp}[1]  {\ifdefined\nocolour{#1}\else{\color{blue}{#1}}\fi}

\newcommand{\tw}[1]  {\ifdefined\nocolour{#1}\else{\color{blue}{#1}}\fi}

\newcommand{\ys}[1]  {\ifdefined\nocolour{#1}\else{\color{blue}{#1}}\fi}


\lessmore{
\setcopyright{acmcopyright}
\acmPrice{}
\acmDOI{10.1145/3434300}
\acmYear{2021}
\copyrightyear{2021}
\acmSubmissionID{popl21main-p144-p}
\acmJournal{PACMPL}
\acmVolume{5}
\acmNumber{POPL}
\acmArticle{19}
\acmMonth{1}
}{
  \setcopyright{none}
  \settopmatter{printacmref=false}
  \acmConference{Technical Report}
  \acmNumber{} 
  \acmYear{2020}
  \acmMonth{11}
  \startPage{1}
}

\begin{document}

\title{Data Flow Refinement Type Inference}
\more{\titlenote{This technical report is an extended version
    of~\cite{PavlinovicETAL21Drift}.}}

\author{Zvonimir Pavlinovic}
\affiliation{
  \institution{New York University}            
  \country{USA}
}
\affiliation{
  \institution{Google}
  \country{USA}
}
\email{zvonimir.pavlinovic@gmail.com}

\author{Yusen Su}
\affiliation{
  \institution{New York University}            
  \country{USA}
}
\affiliation{
  \institution{University of Waterloo}
  \country{Canada}
}
\email{ys3547@nyu.edu}

\author{Thomas Wies}
\affiliation{
  \institution{New York University}            
  \country{USA}
}
\email{wies@cs.nyu.edu}

%
%

\begin{abstract}

  Refinement types enable lightweight verification of functional programs.
  Algorithms for statically inferring refinement
  types typically work by reduction to solving systems of constrained
  Horn clauses extracted from typing derivations. An example is Liquid
  type inference, which solves the extracted constraints using
  predicate abstraction. However, the reduction to constraint solving
  in itself already signifies an abstraction of the program semantics
  that affects the precision of the overall static analysis. To better
  understand this issue, we study the type inference problem in its
  entirety through the lens of abstract interpretation. We propose a
  new refinement type system that is parametric with the choice of the
  abstract domain of type refinements as well as the degree to which
  it tracks context-sensitive control flow information. We then derive
  an accompanying parametric inference algorithm as an abstract
  interpretation of a novel data flow semantics of functional
  programs. We further show that the type system is sound and complete
  with respect to the constructed abstract semantics.
  Our theoretical development reveals the key abstraction steps
  inherent in refinement type inference algorithms. The trade-off
  between precision and efficiency of these abstraction steps is controlled
  by the parameters of the type system. Existing refinement type
  systems and their respective inference algorithms, such as Liquid
  types, are captured by concrete parameter instantiations.
  We have implemented our framework in a prototype tool and evaluated
  it for a range of new parameter instantiations (e.g., using octagons and
  polyhedra for expressing type refinements).
  \tw{The tool compares favorably against other existing tools. Our
    evaluation indicates that our approach can be used to systematically construct new
    refinement type inference algorithms that are both robust and precise.}

\end{abstract}

\begin{CCSXML}
<ccs2012>
<concept>
<concept_id>10003752.10010124.10010138.10010143</concept_id>
<concept_desc>Theory of computation~Program analysis</concept_desc>
<concept_significance>500</concept_significance>
</concept>
<concept>
<concept_id>10003752.10003790.10011740</concept_id>
<concept_desc>Theory of computation~Type theory</concept_desc>
<concept_significance>500</concept_significance>
</concept>
</ccs2012>
\end{CCSXML}

\ccsdesc[500]{Theory of computation~Program analysis}
\ccsdesc[500]{Theory of computation~Type theory}

\keywords{refinement type inference, abstract interpretation, Liquid types}

\maketitle



\section{Introduction}
\label{sec:introduction}

Refinement types are at the heart of static type systems that can
check a range of safety properties of functional
programs~\cite{DBLP:conf/pldi/FreemanP91,DBLP:conf/popl/DunfieldP04,DBLP:conf/fossacs/DunfieldP03,DBLP:conf/popl/XiP99,DBLP:conf/pldi/RondonKJ08,DBLP:conf/icfp/VazouSJVJ14,DBLP:conf/vmcai/ZhuJ13,DBLP:conf/pldi/VekrisCJ16,
  DBLP:conf/oopsla/ChughHJ12, DBLP:conf/tacas/ChampionC0S18}. Here, basic types are
augmented with \emph{refinement predicates} that
express relational dependencies between
inputs and outputs of functions. For example, the contract of an array
read operator can be expressed using the refinement type
\setlength{\abovedisplayskip}{6pt}
\setlength{\belowdisplayskip}{6pt}
\[\mathsf{get} :: (a:\alpha\; \mathsf{array}) \to (i:\{\nu:
  \mathsf{int} \mid 0 \leq \nu < \mathsf{length}\; a\}) \to \alpha \enspace.\]
This type indicates that $\mathsf{get}$ is a function that takes an
array $a$ over some element type $\alpha$ and an index $i$ as input
and returns a value of type $\alpha$. The type
$\{\nu: \mathsf{int} \mid 0 \leq \nu < \mathsf{length}\; a\}$ of the
parameter $i$ refines the base type $\mathsf{int}$ to indicate that
$i$ must be an index within the bounds of the array $a$. This type can
then be used to statically check the absence of erroneous array reads
in a program. However, such a check will only succeed if the index
expressions used in calls to $\mathsf{get}$ are also constrained by
appropriate refinement types. Therefore, a number of type inference
algorithms have been proposed that relieve programmers of the burden to provide such
type annotations manually. These
algorithms deploy a variety of analysis techniques ranging from
predicate abstraction~\cite{DBLP:conf/pldi/RondonKJ08, DBLP:conf/esop/VazouRJ13, DBLP:conf/icfp/VazouSJVJ14} to
interpolation~\cite{DBLP:conf/ppdp/UnnoK09, DBLP:conf/vmcai/ZhuJ13}
and machine
learning~\cite{DBLP:conf/icfp/ZhuNJ15,DBLP:conf/pldi/ZhuPJ16,
  DBLP:conf/tacas/ChampionC0S18}.  However, a common intermediate step
is that they reduce the inference problem to solving a system of
constrained Horn clauses that is induced by a typing derivation for the
program to be analyzed. However, this reduction already represents an
abstraction of the program's higher-order control flow and affects the
precision of the overall static analysis.



To better understand the interplay between the various abstraction steps
underlying refinement type inference algorithms, this paper forgoes
the usual reduction to constraints and instead studies the inference
problem in its entirety through the lens of abstract
interpretation~\cite{cousot1977abstract,cousot1979systematic}.
We start by introducing a parametric \emph{data flow refinement type
  system}. The type system generalizes from the specific choice of
logical predicates by allowing for the use of
arbitrary relational abstract domains as type refinements (including
e.g. octagons~\cite{DBLP:journals/corr/abs-cs-0703084},
polyhedra~\cite{DBLP:conf/popl/SinghPV17,DBLP:journals/scp/BagnaraHZ08,
  DBLP:conf/popl/CousotH78} and automata-based
domains~\cite{DBLP:conf/ictac/ArceriOCM19,DBLP:conf/vmcai/KimC11}).
Moreover, it is parametric in the degree to which it tracks
context-sensitive control flow information. This is achieved through
intersection function types, where the granularity at which such
intersections are considered is determined by how stacks are
abstracted at function call sites. Next, we propose a novel concrete
data flow semantics of functional programs that captures the program
properties abstracted by refinement type inference algorithms. From
this concrete semantics we then
construct an abstract semantics through a series of Galois
abstractions and show that the type system is sound and complete with
respect to this abstract semantics. Finally, we combine the abstract
semantics with an appropriate widening strategy to obtain an
accompanying parametric refinement type inference algorithm that is
sound by construction. The resulting analysis framework enables the
exploration of the broader design space of refinement type inference
algorithms. Existing algorithms such as Liquid type
inference~\cite{DBLP:conf/pldi/RondonKJ08} represent specific points
in this design space. \less{Additional details, including proofs,
  are available in a companion technical
  report~\cite{techreport}.}

To demonstrate the versatility of our framework, we have implemented
it in a verification tool targeting a subset of OCaml. We
have evaluated this tool for a range of new parameter instantiations and
compared it against existing verification tools for
functional programs. \tw{Our evaluation shows that the tool compares
favorably against the state of the art. In particular, for the 
higher-order programs over integers and lists in our benchmark suite, the tool improves over
existing tools both in terms of precision (more benchmarks solved) as
well as robustness (no analysis timeouts)}.


\section{Motivation}
\label{sec:df-overview}

To motivate our work, we provide an overview of common approaches to
inferring refinement types and discuss their limitations.

\subsection{Refinement Type Inference}
\label{sec:df-overview-liquid-algo}

\renewcommand{\lstlistingname}{Program}



\noindent
Consider the following definition of the Fibonacci
function in OCaml:
\begin{lstlisting}[language=Caml,aboveskip=0.5em,belowskip=0.2em,numbers=none]
let rec fib x = if x >= 2 then fib (x - 1) + fib (x - 2) else 1
\end{lstlisting}
%
The typical refinement type inference algorithm works as follows. First, the
analysis performs a standard Hindley-Milner type inference to infer
the basic shape of the refinement type for every subexpression of the
program. For instance, the inferred type for the
function \lstinline+fib+ is $\tint \to \tint$. 
For every function type $\tau_1 {\to} \tau_2$, where
$\tau_1$ is a base type such as $\tint$, the analysis  next introduces a fresh
dependency variable $x$ which stands for the function parameter,
$x: \tau_1 {\to} \tau_2$. The scope of $x$ is the result type
$\tau_2$, i.e., refinement predicates inferred for $\tau_2$ can 
express dependencies on the input value of type $\tau_1$ by referring to $x$. 
Further, every base type $\tau$ is replaced by a
refinement type, $\rtype{\tau}{\phi(\nu, \vec{x})}$, with a placeholder
refinement predicate $\phi(\nu, \vec{x})$ that expresses a relation between the 
members $\nu$ of $\tau$ and the other variables $\vec{x}$ in scope of the
type. For example, the augmented type for function \lstinline+fib+ is
\begin{align*}
x:\rtype{\tint}{\phi_1(\nu)} \to \rtype{\tint}{\phi_2(\nu, x)} \enspace.
\end{align*}
The algorithm then derives, either explicitly or implicitly, a system
of Horn clauses modeling the subtyping constraints imposed on the
refinement predicates by the program \emph{data flow}. For example,
the body of \lstinline+fib+ induces the following Horn clauses over
the refinement predicates in \lstinline+fib+'s type:
\begin{align*}
\phi_1(x) \land x \ge 2 \land \nu = x - 1 & \Rightarrow \phi_1(\nu)
&(1)&&
\phi_1(x) \land x \ge 2 \land \nu = x - 2 & \Rightarrow \phi_1(\nu)
&(2)\\
\multicolumn{5}{r}{
  $\phi_1(x) \land x \ge 2 \land \phi_2(\nu_1,x-1) \land \phi_2(\nu_2,x-2)
  \land \nu = \nu_1 + \nu_2$} & \Rightarrow \phi_2(\nu,x) & (3) \\
      \phi_1(x) \land x < 2 \land \nu = 1 & \Rightarrow \phi_2(\nu, x)
  & (4) &&
  0 \leq \nu & \Rightarrow \phi_1(\nu) & (5)
\end{align*}
Clauses (1) and (2) model the data flow from 
\lstinline+x+ to the two recursive calls in the \emph{then} branch of
the conditional. Clauses (3) and (4) capture the constraints
on the result value returned by \lstinline+fib+ in the \emph{then} and
\emph{else} branch. Clause (5) captures an assumption that we
make about the \emph{external calls} to \lstinline+fib+, namely, that
these calls always pass non negative values. We note that the
inference algorithm performs a whole program analysis. Hence, when one
analyzes a program fragment or individual function as in this case,
one has to specify explicitly any assumptions made about the
context.

%

The analysis then solves the obtained Horn clauses to derive the
refinement predicates $\phi_i$. For instance, Liquid type inference
uses monomial predicate abstraction for this purpose. That is, the analysis assumes a given set of atomic
predicates $Q = \{p_1(\nu,\vec{x}_1), \dots, p_n(\nu,\vec{x}_n)\}$, which
are either provided by the programmer or derived from the program
using heuristics, and then infers an assignment for each $\phi_i$ to a
conjunction over $Q$ such that all Horn clauses are valid. This can be
done effectively and efficiently using the Houdini
algorithm~\cite{DBLP:conf/fm/FlanaganL01,DBLP:conf/cade/LahiriQ09}.
For example, if we choose
$Q = \{0 \leq \nu, 0 > \nu, \nu < 2, \nu \ge 2\}$, then the final type
inferred for function \lstinline+fib+ will be:
\begin{align*}
x:\rtype{\tint}{0 \leq \nu} \to \rtype{\tint}{0 \leq \nu}\enspace.
\end{align*}
The meaning of this type is tied to the current program, or in this
case the assumptions made about the context of the program fragment
being analyzed. In particular, note that the
analysis does not infer a more general type 
that would leave the input parameter \lstinline+x+ of \lstinline+fib+
unconstrained.

\subsection{Challenges in Inferring Precise Refinement Types}

Now, suppose that the goal of the analysis is to verify that
function \lstinline+fib+ is increasing, which can be done by inferring
a refinement predicate $\phi_2(\nu, x)$ for the return type of
\lstinline+fib+ that implies $x \leq \nu$. Note that $x \leq \nu$ itself
is not inductive for the system of Horn clauses derived
above \tw{because clause (3) does not hold for $x=2$, $v_1=1$, and $v_2=0$.}
However, if we strengthen $\phi_2(\nu, x)$ to $x \leq \nu \land 1 \leq
\nu$, then it is inductive.

One issue with using predicate abstraction for inferring type
refinements is that the analysis needs to guess in advance which
auxiliary predicates will be needed, here $1 \leq \nu$. Existing tools
based on this approach, such as \tool{DSolve}~\cite{DBLP:conf/pldi/RondonKJ08} and \tool{Liquid
Haskell}~\cite{DBLP:conf/haskell/VazouBKHH18}, use heuristics for this
purpose. However, these heuristics tend to be brittle. In fact, both tools
fail to verify that \lstinline+fib+ is increasing if the
predicate $1 \leq \nu$ is not explicitly provided by the user. Other
tools such as \tool{R\_Type}~\cite{DBLP:conf/tacas/ChampionC0S18} are based on more complex analyses that
use counterexamples to inductiveness to automatically infer the
necessary auxiliary predicates. However, these tools no longer
guarantee that the analysis terminates. Instead, our approach enables
the use of \tw{expressive numerical abstract domains such
as polyhedra to infer sufficiently precise refinement types in practice, without giving up on
termination of the analysis or requiring user annotations (see \S~\ref{sec:implementation})}.

If the goal is to improve precision, one may of course ask why it is
necessary to develop a new refinement type inference analysis from
scratch. Is it not sufficient to improve the deployed Horn clause
solvers, e.g. by using better abstract domains? Unfortunately, the
answer is ``no''\zp{~\cite{DBLP:conf/popl/UnnoTK13}}. The derived Horn clause system already signifies an
abstraction of the program's semantics and, in general, entails an inherent
loss of precision for the overall analysis. 

\begin{figure}[t]
\begin{lstlisting}[language=Caml,aboveskip=0.5em,belowskip=0.2em,escapechar=|]
let apply f x = f x and g y = 2 * y and h y = -2 * y
let main z =
  let v = if 0 <= z then (apply|${}_i$| g)|${}_j$| z else (apply|${}_k$| h)|${}_\ell$| z in |\label{line:v-decl}|
  assert (0 <= v) |\label{line:assert}|
\end{lstlisting}
\vspace*{-1em}
\captionof{lstlisting}{\label{prg:apply}}
\end{figure}

To motivate this issue, consider Program~\ref{prg:apply}. You may
ignore the program location labels $i,j,k,\ell$ for now.
Suppose that the goal is to verify that the \lstinline+assert+ statement in the last line is
safe. The templates for the refinement types of the top-level
functions are as follows:
\begin{align*}
\mathtt{apply}::&\; (y:\rtype{\tint}{\phi_1(\nu)} \to
\rtype{\tint}{\phi_2(\nu, y)}) \to x:{\rtype{\tint}{\phi_3(\nu)}} \to
\rtype{\tint}{\phi_4(\nu, x)}\\
\mathtt{g}::&\; y:\rtype{\tint}{\phi_5(\nu)} \to \rtype{\tint}{\phi_6(\nu,y)}
\quad \mathtt{h}:: y:\rtype{\tint}{\phi_7(\nu)} \to \rtype{\tint}{\phi_8(\nu,y)}
\end{align*}
Moreover, the key Horn clauses are:
\begin{align*}
0 \leq z \land \nu = z & \Rightarrow \phi_3(\nu)
& \phi_5(y) \land \nu = 2 y & \Rightarrow \phi_6(\nu, y)
& \phi_3(x) & \Rightarrow \phi_1(\nu)\\
0 \leq z \land \phi_1(\nu) & \Rightarrow
\phi_5(\nu)
& 0 \leq z \land \phi_6(\nu, y) & \Rightarrow
\phi_2(\nu, y)
&
\phi_2(\nu, x) & \Rightarrow \phi_4(\nu, x)
\\
0 > z \land \nu = z & \Rightarrow \phi_3(\nu)
& \phi_7(y) \land \nu = - (2 y) & \Rightarrow \phi_8(\nu, y)
\\
0 > z \land \phi_1(\nu) & \Rightarrow
\phi_7(\nu)
& 0 > z \land \phi_8(\nu, y) & \Rightarrow
\phi_2(\nu, y)
\end{align*}
Note that the least solution of these Horn clauses satisfies
$\phi_1(\nu)=\phi_3(\nu)=\phi_5(\nu)=\phi_7(\nu)=\mathsf{true}$ and
$\phi_2(\nu, x)=\phi_4(\nu,x)=(\nu = 2x \lor \nu =
-2x)$. Specifically, $\phi_4(\nu,x)$ is too weak to conclude that
the two calls to \lstinline+apply+ on line~\ref{line:v-decl} always return positive integers. Hence, any analysis based on deriving a
solution to this Horn clause system will fail to infer refinement
predicates that are sufficiently strong to entail the safety of the
assertion in \lstinline+main+.  The problem is that the generated Horn
clauses do not distinguish between the two functions \lstinline+g+
and \lstinline+h+ that \lstinline+apply+ is called with. All existing
refinement type inference tools that follow this approach of
generating a context-insensitive Horn clause abstraction of the
program therefore fail to verify Program~\ref{prg:apply}.

To obtain a better understanding where existing refinement type inference
algorithms lose precision, we take a fresh look at this problem through
the lens of abstract interpretation.
\section{Preliminaries}
\label{sec:preliminaries}

We introduce a few notations and basic definitions that we use throughout the paper.

\smartparagraph{Notation.}
We often use meta-level $\Mlet x = t_1\, \Min t_0$ and conditional
$\Mif t_0\, \Mthen t_1\, \Melse t_2$ constructs in mathematical
definitions.
We compress consecutive let bindings
$\Mlet x_1 = t_1 \;\Min \dots \Mlet x_n = t_n \;\Min t_0$ as $\Mlet
x_1 = t_1 \Mend \dots \Mend x_n = t_n \;\Min t_0$.
We use capital lambda notation
($\Lambda x.\;\dots$) for defining mathematical functions. For a function
$f:X\to Y$, $x \in X$, and $y \in Y$, we write $f[x \mapsto y]$
to denote a function that maps $x$ to $y$ and otherwise agrees with $f$ on
every element of $X\setminus\{x\}$. \tw{We use the notation $f.x\!:y$ instead of $f[x
\mapsto y]$ if $f$ is an environment mapping variables $x$ to their
bindings $y$.}
For a set $X$ we denote its powerset by $\powerset(X)$. 
For a relation
$R \subseteq X \times Y$ over sets $X$, $Y$ and a natural number $n > 0$, we use
$\vecop{R}_n$ to refer to the point-wise lifting of $R$ to a relation
on $n$-tuples $X^n \times Y^n$. That is $\pair{(x_1, \dots, x_n)}{(y_1, \dots, y_n)} \in \vecop{R}_n$
iff $(x_i, y_i) \in R$ for all $1 \leq i \leq n$. Similarly, for any nonempty set $Z$ we
denote by $\vecop{R}_Z$ the point-wise lifting of $R$ to a relation
over $(Z \to X) \times (Z \to Y)$. More precisely, if $f_1: Z \to X$ and $f_2: Z \to Y$, then
$(f_1, f_2) \in \vecop{R}_Z$ iff $\forall z \in Z.\,(f_1(z), f_2(z)) \in R$.
Typically, we drop the subscripts from these lifted relations when they are clear from 
the context.

For sets $X$, $Y$ and a function $d: X \to \powerset(Y)$, we use the notation
$\Pi x \in X. d(x)$ to refer to the set
$\pset{f: X \to Y}{\forall x \in X.\, f(x) \in d(x)}$ of all dependent
functions with respect to $d$. Similarly, for given sets $X$ and $Y$ 
we use the notation $\Sigma x \in X. d(x)$ to refer to the set
$\pset{\pair{x}{y}: X \times Y}{y \in d(x)}$ of all dependent pairs
with respect to $d$. We use the operators $\pi_1$ and $\pi_2$ to
project to the first, resp., second pair component.

\smartparagraph{Abstract interpretation.}
A partially ordered set (poset) is a pair $(L, \sqleq)$ consisting of
a set $L$ and a binary relation $\sqleq$ on $L$
that is reflexive, transitive, and antisymmetric.
Let $(L_1, \sqleq_1)$ and $(L_2, \sqleq_2)$ be two posets.  
We say that two functions
$\alpha \in L_1 \rightarrow L_2$ and $\gamma \in L_2 \rightarrow L_1$
form a \textit{Galois connection} iff
\[
  \forall x \in L_1, \forall y \in L_2.\; \alpha(x) \sqleq_2 y \iff x
  \sqleq_1 \gamma(y) \enspace.
\]
We call $L_1$ the \textit{concrete} domain and $L_2$ the
\textit{abstract} domain of the Galois connection. 
Similarly, $\alpha$ is called \textit{abstraction} function (or left
adjoint) and
$\gamma$ \textit{concretization} function (or right adjoint). 
Intuitively, $\alpha(x)$ is the most precise approximation of $x \in L_1$ in $L_2$
while $\gamma(y)$ is the least precise element of $L_1$ that can be approximated by $y \in L_2$.

A \emph{complete lattice} is a tuple
$\langle L, \sqleq, \bot, \top, \sqcup, \sqcap \rangle$ where $(L, \sqleq)$ is a
poset such that for
any $X \subseteq L$, the least upper bound $\sqcup X$ (join)
and greatest lower bound $\sqcap X$ (meet) with respect to $\sqleq$
exist. In particular, we have $\bot = \sqcap L$ and
$\top = \sqcup L$. We often identify a complete lattice with
its carrier set $L$.

Let $(L_1, \sqleq_1, \bot_1, \top_1, \sqcup_1, \sqcap_1)$ and $(L_2, \sqleq_2, \bot_2, \top_2, \sqcup_2, \sqcap_2)$
be two complete lattices and let $(\alpha, \gamma)$ be a Galois
connection between $L_1$ and $L_2$.
Each of these functions uniquely determines the other:
\[\alpha(x) = \sqcap_2  \setc{y \in L_2}{x \sqleq_1 \gamma(y)} \ind[5] \gamma(y) = \sqcup_1 \setc{x \in L_1}{\alpha(x) \sqleq_2 y} \]
\noindent Also, $\alpha$ is a \textit{complete join-morphism}
$\forall S \subseteq L_1.\; \alpha(\sqcup_1 S) = \sqcup_2 \setc{\alpha(x)}{x \in S}, \alpha(\bot_1) = \bot_2$
and $\gamma$ is a \textit{complete meet-morphism}
$\forall S \subseteq L_2.\; \gamma(\sqcap_2 S) = \sqcap_1 \setc{\gamma(x)}{x \in S}, \gamma(\top_2) = \top_1.$ A similar result holds in the other 
direction: if $\alpha$ is a complete join-morphism and $\gamma$ is defined
as above, then $(\alpha, \gamma)$ is a Galois connection between $L_1$ and $L_2$. 
Likewise, if $\gamma$ is a complete meet-morphism and $\alpha$ is defined 
as above, then $(\alpha, \gamma)$ is a Galois connection~\cite{cousot1979systematic}.
\section{Parametric Data Flow Refinement Types}
\label{sec:types}

We now introduce our parametric data flow refinement type system. The
purpose of this section is primarily to build intuition. The remainder
of the paper will then formally construct the type system as an
abstract interpretation of our new data flow semantics. As a reward, we
will also obtain a parametric algorithm for inferring data flow
refinement types that is sound by construction.


\smartparagraph{Language.}
Our formal presentation considers a simple untyped lambda calculus:
\begin{align*}
  \qquad e \in \Exp ::= \;  \const \;
  \mid \; x \;
  \mid \;  e_1 \, e_2 \;
  \mid \; \lambda x.\, e
\end{align*}
The language supports constants $c \in \consts$ (e.g. integers and Booleans), variables $x \in
\vars$, lambda abstractions, and function applications.
An expression $e$ is \emph{closed} if all using occurrences of
variables within $e$ are bound in lambda abstractions $\lambda x.\, e$.
A \emph{program} is a closed expression. 

Let $e$ be an expression. Each subexpression of $e$ is uniquely 
annotated with a location drawn from the set $\elabels$. We denote
locations by $\elabel,\elabela,\elabelb$ and use subscript
notation to indicate the location identifying a (sub)expression as in
$(e_\elabela \, e_\elabelb)_\elabel$ and
$(\lambda x.e_\elabela)_\elabel$. The location annotations are
omitted whenever possible to avoid notational clutter.
Variables are also locations, i.e. $\vars \subseteq \elabels$.

In our example programs we often use let constructs. Note that
these can be expressed using lambda abstraction and function
application \tw{as usual:
\[\texttt{let}\; x = e_1 \;\texttt{in}\; e_2 \; \Def{=} \;
  (\lambda x.\,e_2)\,e_1 \enspace.\]
}
\smartparagraph{Types.}
Our data flow refinement type system takes two parameters: (1) a
finite set of \emph{abstract stacks} $\pstacks$, and (2) a (possibly
infinite) complete lattice of \emph{basic refinement types}
$\langle \bdomain, \bord, \bbot, \btop, \bjoin, \bmeet \rangle$. We
will discuss the purpose of abstract stacks in a moment. A basic
refinement type $\bval \in \bdomain$ comes equipped with an implicit
scope $\pscope \subseteq \vars$. Intuitively, $\bval$ represents a
relation between
primitive constant values (e.g. integers) and other values 
bound to the variables in $\pscope$. The partial order $\bord$ is an
abstraction of subset inclusion on such relations. We will make this
intuition formally precise later. For $\pscope \subseteq \vars$, we
denote by $\bdomain_\pscope$ the set of all basic refinement types with scope
$\pscope$.
\begin{example}
  \label{ex:lia-basic-types}
  Let $\phi(\pscope)$ stand for a convex linear constraint over (integer)
  variables in $\pscope \cup \set{\nu}$. Then define the set of basic
  refinement types
  \[\bdomain[\mathsf{lia}]_\pscope ::= \bbot \mid \btop \mid \setc{\nu:
      \tint}{\phi(\pscope)}
  \]
  An example of a basic type in $\bdomain[\mathsf{lia}]$ with scope
  $\set{x}$ is $\setc{\nu : \tint}{x \leq \nu \land 1 \leq \nu}$. The
  order $\bord$ on $\bdomain[\mathsf{lia}]$ is obtained by lifting
  the entailment ordering on linear constraints to
  $\bdomain[\mathsf{lia}]$ in the expected way. If we identify linear
  constraints up to equivalence, then $\bord$ induces a complete
  lattice on $\bdomain[\mathsf{lia}]$.
\end{example}

The basic refinement types are extended to data flow refinement types as follows:
\[\begin{array}{c}
\tval \in \tvalues ::= \tbot \mid \ttop \mid \bval \mid x:\ttable  \ind[1.9]
\bval \in \bdomain_\pscope \ind[1.9]
\absn:\ttable \in \ttables \Def= \Sigma \absn \in \vars
\setminus \pscope.\,
\pstacks \to \tvalues[\pscope] \times \tvalues[\pscope \cup \{\absn\}]
\end{array}\]
A data flow refinement type $\tval
\in \tvalues$
also has an implicit scope $\pscope$. We denote by $\tvalues[] =
\bigcup_{\pscope \subseteq \vars} \tvalues$ the set of all such types
for all scopes.  There are four kinds of types. First, the type $\tbot$ should be
interpreted as \emph{unreachable} or \emph{nontermination} and the type
$\ttop$ stands for a \emph{type error}. We introduce these types explicitly
so that
we can later endow $\tvalues[]$ with a partial order to form a complete lattice.
In addition to $\tbot$ and $\ttop$, we have basic refinement types $\bval$ and
(dependent) function types $x:\ttable$. The latter are pairs consisting
of a dependency variable $x \in \vars \setminus \pscope$ and a \emph{type table}
$\ttable$ that maps abstract stack $\pstack \in \pstacks$ to pairs of
types $\ttable(\pstack)=\pair{\tval_i}{\tval_o}$. That is, $x:\ttable$
can be understood as capturing a separate dependent function type $x:\tval_i
\to \tval_o$ per abstract stack $\pstack$. Abstract stacks
represent abstractions of concrete call stacks, enabling function
types to case split on different calling contexts of the
represented functions. In this sense, function types \zp{resemble}
intersection types \zp{with \emph{ad hoc} polymorphism on contexts}.
Note that the scope of the
output type $\tval_o$ includes the dependency variable $x$, thus
enabling the output type to capture input/output dependencies.

Let $t=x:\ttable$ be a function type and $\pstack \in \pstacks$ such
that $\ttable(\pstack)=\pair{\tval_i}{\tval_o}$ for some $\tval_i$ and
$\tval_o$. We denote $\tval_i$
by $\tin{\ttable(\pstack)}$ and $\tval_o$ by
$\tout{\ttable(\pstack)}$. We say that $\tval$ has been called at
$\pstack$, denoted $\pstack \in \tval$, if 
$\tin{\ttable(\pstack)}$ is not $\tbot$. We denote by $\ttable_{\bot}$
the \emph{empty type table} that maps every abstract stack to the pair
$\pair{\tbot}{\tbot}$ and write $[\tentry{\pstack}{\tval_i}{\tval_o}]$
as a short hand for the \emph{singleton type table}
$\ttable_{\bot}[\pstack \mapsto \pair{\tval_i}{\tval_o}]$. We extend
this notation to tables obtained by explicit enumeration of table
entries. Finally, we denote by $\restr{\tval}{\pstack}$ the function type
$x:[\tentry{\pstack}{\tval_i}{\tval_o}]$ obtained from $\tval$ by
restricting it to the singleton table for $\pstack$.

\begin{example}
  \label{ex:lia-types}
  \arraycolsep=0pt%
  Consider again the set $\bdomain[\lia]$ from
  Example~\ref{ex:lia-basic-types}.  In what follows, we write $\tint$
  for $\setc{\nu:\tint}{\btrue}$ and $\setc{\nu}{\phi(X)}$ for
  $\setc{\nu:\tint}{\phi(X)}$. Let further $\pstacks^0 = \set{\epsilon}$ be a
  trivial set of abstract stacks. We then instantiate $\tvalues[]$
  with $\bdomain[\lia]$ and $\pstacks^0$. Since tables only have a
  single entry, we use the more familiar notation
  $x:\tval_i \to \tval_o$ for function types, instead of
  $x:[\tentry{\epsilon}{\tval_i}{\tval_o}]$. We can then represent the
  type of \lstinline+fib+ in \S~\ref{sec:df-overview-liquid-algo} by the data flow
  refinement type
  \[x: \setc{\nu}{0 \leq \nu} \to \setc{\nu}{x \leq \nu \land 1 \leq \nu}\enspace.\]
  The most precise type that we can infer for function \lstinline+apply+ in
  Program~\ref{prg:apply} is
  \[f: (y: \tint \to \tint) \to x: \tint \to \tint\enspace.\]
  \tw{Here, the variables $f$, $x$, and $y$ refer to the corresponding
    parameters of the function
  \lstinline+apply+, respectively, the functions \lstinline+g+ and
  \lstinline+h+ that are passed to parameter $f$ of \lstinline+apply+.}
  Now, define the set of abstract stacks $\pstacks^1 =
  \elabels \cup \set{\epsilon}$. Intuitively, we can use the elements of
  $\pstacks^1$ to distinguish table entries in function types 
  based on the program locations where the represented functions are
  called. Instantiating our set of types $\tvalues[]$ with
  $\bdomain[\lia]$ and $\pstacks^1$, we can infer a more
  precise type for function \lstinline+apply+ in
  Program~\ref{prg:apply}:
  \[f: \begin{array}[t]{ll}
    [&i \triangleleft y:[j \triangleleft \setc{\nu}{0
      \leq \nu} \to \setc{\nu}{\nu = 2y}] 
    \to x:[j \triangleleft \setc{\nu}{0 \leq \nu} \to \setc{\nu}{\nu =
2x}],\\
   &k \triangleleft y:[\ell \triangleleft \setc{\nu}{0
      > \nu} \to \setc{\nu}{\nu = -2y}] 
    \to x:[\ell \triangleleft \setc{\nu}{0 > \nu} \to \setc{\nu}{\nu =
-2x}
]\enspace.
\end{array}
\]
  Note that the type provides sufficient information to distinguish
  between the calls to \lstinline+apply+ with functions \lstinline+g+
  and \lstinline+h+ at call sites $i$ and $k$, respectively. This
  information is sufficient to guarantee the correctness of the
  assertion on line~\ref{line:assert}.
\end{example}

\smartparagraph{Typing environments and operations on types.} 
Before we can define the typing rules, we first need to introduce a
few operations for constructing and manipulating types. We here only
provide the intuition for these operations through examples. In
\S~\ref{sec:arefinement} we will then explain how to define these
operations in terms of a few simple primitive operations provided by the
domain $\bdomain$.

First, the (basic) refinement type abstracting a single constant
$\const \in \consts$ is denoted by $\tconst$. For instance, for our
basic type domain $\bdomain[\lia]$ from
Example~\ref{ex:lia-basic-types} and an integer constant $c \in \ZZ$,
we define $\tconst = \setc{\nu}{\nu = \const}$. Next, given a type
$\tval$ over scope $\pscope$ and a type $\tval'$ over scope
$\pscope' \subseteq \pscope \setminus \set{x}$, we denote by
$\rupdate{\tval}{\absn}{\tval'}$ the type obtained from $\tval$ by
strengthening the relation to $x$ with the information provided by
type $\tval'$. Returning to Example~\ref{ex:lia-basic-types}, for two
basic types $\bval = \setc{\nu}{\phi(\pscope)}$ and
$\bval' = \setc{\nu}{\phi'(\pscope')}$, we have
$\rupdate{\bval}{\absn}{\bval'} = \setc{\nu}{\phi(X) \land
  \phi'(X')[x/\nu]}$. Finally, for a variable $x \in \pscope$ and
$\tval \in \tvalues$, we assume an operation $\tval\tvareq{x}$ that
strengthens $\tval$ by enforcing equality between the value bound to
$x$ and the value represented by $\tval$. \tw{For instance, for a base
  type $\bval = \setc{\nu}{\phi(\pscope)}$ from
  Example~\ref{ex:lia-basic-types} we have
  $\bval\tvareq{x}=\setc{\nu}{\phi(X) \land \nu = x}$.}

A \emph{typing environment} for scope $\pscope$ is a function
$\tenv \in (\Pi x \in \pscope.\, \tvalues[\pscope \setminus \{x\}])$. We lift
$\rupdate{\tval}{\absn}{\tval'}$ to an operation
$\rstrengthen{\tval}{\tenv}$ that strengthens $\tval$ with 
the constraints on variables in $\pscope$ imposed by the types bound to
these variables in $\tenv$.

We additionally assume that abstract stacks $\pstacks$ come equipped
with an \emph{abstract concatenation operation}
$\pconcat: \elabels \times \pstacks \to \pstacks$ that prepends a
call site location $\elabela$ onto an abstract stack $\pstack$,
denoted $\elabela \pconcat \pstack$. For instance, consider again the
sets of abstract stacks $\pstacks^0$ and $\pstacks^1$ introduced in
Example~\ref{ex:lia-types}. We define
$\pconcat$ on $\pstack \in \pstacks^0$ as
$\elabela \pconcat \pstack = \epsilon$ and on $\pstack \in \pstacks^1$
we define it as $\elabela \pconcat \pstack = \elabela$. The general
specification of $\pconcat$ is given in \S~\ref{sec:acollapsed}.

\smartparagraph{Typing rules.}  Typing judgements take the form
$\tenv,\pstack \typrel e : \tval$ and rely on the \emph{subtype
  relation} $\tval \subtype \tval'$ defined in
Fig.\ref{fig:subtyping-rules}. The rule \textsc{s-bot} states that
$\tbot$ is a subtype of all other types except $\ttop$. \tw{Since
  $\ttop$ denotes a type error, the rules must ensure that we
  do not have $\tval \subtype \top$ for any type $\tval$}. The rule \textsc{s-base} defines
subtyping on basic refinement types, which simply defers to the
partial order $\bord$ on $\bdomain$. Finally, the rule \textsc{s-fun} is
reminiscent of the familiar contravariant subtyping rule for dependent
function types, except that it quantifies over all
entries in the type tables. \tw{In \S~\ref{sec:typing-correctness}, we
  will establish a formal correspondence between subtyping and the
  propagation of values along data flow paths.}

\begin{figure}[t]
\begin{mathpar}
\inferrule*[lab=s-bot,vcenter]{\tval \neq \ttop}{\tbot \subtype
  \tval} \ind[2]
\inferrule*[lab=s-base,vcenter]{\bval_1 \bord \bval_2}{\bval_1 \subtype
  \bval_2} \ind[2]
\inferrule*[left=s-fun,right=$\forall \pstack \in \pstacks$,vcenter,width=8cm]
{
\ttable_1(\pstack) = \pair{\tval_{i1}}{\tval_{o1}} \\
\ttable_2(\pstack) = \pair{\tval_{i2}}{\tval_{o2}} \\
\tval_{i2} \subtype \tval_{i1} \\
\tupdate{\tval_{o1}}{x}{\tval_{i2}} \subtype
\tupdate{\tval_{o2}}{x}{\tval_{i2}}}
{x:\ttable_1 \subtype x:\ttable_2}
\end{mathpar}
\caption{Data flow refinement subtyping rules}
\label{fig:subtyping-rules}
\end{figure}

\begin{figure}[t]
\begin{mathpar}
  \inferrule*[lab=t-var]
{\rstrengthen{\tenv(x)\tvareq{x}}{\tenv} \subtype
  \rstrengthen{\tval\tvareq{x}}{\tenv} }
{\tenv,\pstack \typrel x : \tval} \ind[2]
\inferrule*[lab=t-app]
{  \tenv,\pstack \typrel e_\elabela : \tval_\elabela \\
  \tenv,\pstack \typrel e_\elabelb :
\tval_\elabelb \\
  \tval_\elabela \subtype x:[\elabela \pconcat \pstack \triangleleft
   \tval_\elabelb \to \tval]}
 {\tenv,\pstack \typrel e_\elabela \, e_\elabelb : \tval}\\
 \inferrule*[lab=t-const]
 {\tconst[\tenv] \subtype
\tval}{\tenv,\pstack \typrel \const : \tval} \ind[1]
\inferrule*[lab=t-abs,right=$\forall \pstack' \in \tval$]
{
\tenv_\elabela = \tenv.x\!:\tval_x \\ \tenv_\elabela,\pstack' \typrel
e_\elabela : \tval_\elabela \\ x\!:[\pstack' \triangleleft \tval_x \to
\tval_\elabela] \subtype \restr{\tval}{\pstack'}
}
{\tenv,\pstack \typrel \lambda x.\,e_\elabela : \tval}
\end{mathpar}
\caption{Data flow refinement typing rules}
\label{fig:typing-rules}
\end{figure}

The rules defining the typing relation
$\tenv,\pstack \typrel e : \tval$ are shown in
Fig.~\ref{fig:typing-rules}. We implicitly require that all free
variables of $e$ are in the scope of $\tenv$. The rule
\textsc{t-const} for typing constants requires that $\tconst$ is a
subtype of the type $\tval$, after strengthening it with all
constraints on the variables in scope obtained from $\tenv$. That is,
we \emph{push} all environmental assumptions into the types. This way,
subtyping can be defined without tracking explicit typing
environments. We note that this formalization does not preclude an
implementation of basic refinement types that tracks typing
environments explicitly. The rule \textsc{t-var} for typing variable
expressions $x$ is similar to the rule \textsc{t-const}. That is, we
require that the type $\tenv(x)$ bound to $x$ is a subtype of $\tval$,
modulo strengthening with the equality constraint $\nu = x$ and all
environmental constraints. Note that we here strengthen both sides of
the subtype relation, which is necessary for completeness of the
rules, due to the bidirectional nature of subtyping for function
types.

The rule \textsc{t-app} for typing function applications
$e_\elabela \, e_\elabelb$ requires that the type $\tval_\elabela$ of
$e_\elabela$ must be a subtype of the function type
$x:[\elabela \pconcat \pstack: \tval_\elabelb \to \tval]$ where
$\tval_\elabelb$ is the type of the argument expression $e_\elabelb$
and $\tval$ is the result type of the function application. Note that
the rule extends the abstract stack $\pstack$ with the call site
location $\elabela$ \tw{identifying $e_\elabela$}. The subtype relation then forces $\tval_\elabela$
to have an appropriate entry for the abstract call stack
$\elabela \pconcat \pstack$. The rule \textsc{t-abs} for typing lambda
abstraction is as usual, except that it universally quantifies over
all abstract stacks $\pstack'$ at which $\tval$ has been called. 
The side condition $\forall \pstack' \in \tval$ implicitly
constrains $\tval$ to be a function type.


\section{Data flow Semantics}
\label{sec:df-dataflow}

Our goal is to construct our type system from a concrete semantics of
functional programs, following the usual calculational approach taken
in abstract interpretation. This imposes some constraints on our
development. In particular, given that the typing rules are defined by
structural recursion over the syntax of the evaluated expression, the
same should be true for the concrete semantics that we take as the
starting point of our construction. This requirement rules out
standard operational semantics for higher-order programs (e.g. based on function closures)
because they evaluate function bodies at call sites rather than
definition sites, making these semantics nonstructural. A more natural
choice is a standard denotational semantics (e.g. one that interprets lambda
terms by mathematical functions). However, the problem with
denotational semantics is that it is inherently compositional;
functions are given meaning irrespective of the context in which they
appear. However, as we have discussed in \S~\ref{sec:df-overview}, the
function types inferred by algorithms such as Liquid types only
consider the inputs to functions that are observed in the program under
consideration. Denotational semantics are therefore ill-suited for capturing
the program properties abstracted by these type systems.

Hence, we introduce a new \emph{data flow refinement type semantics}.
Like standard denotational semantics, it is fully structural in the
program syntax but without being compositional. That is, it captures
the intuition behind refinement type inference algorithms that view a function value as a \emph{table}
that records all inputs the function will receive \emph{from this
  point onwards} as it continues to flow through the program.

\subsection{Semantic Domains}
\label{sec:dfs-sdomains}
We start with the semantic domains used for giving meaning to expressions $e \in \Exp$:
\begin{align*}
\cnode \in \cnodes \; \Def= {}  & \cenodes \cup \cvnodes &&&
\cenodes \; \Def= {}  & \elabels  \times \cenvironments &&&
\cvnodes \; \Def= {}  & \vars \times \cenvironments \times \cstacks \\
\cstack \in \cstacks \; \Def= {}  & \elabels^{*} &&&
\cenv \in \cenvironments \; \Def= {}  & \vars \pto_{\mi{fin}} \cvnodes &&&
\cmap \in \cmaps \; \Def= {}  &\cnodes \to \cvalues \\
\cval \in \cvalues \; ::= {} & \bot \mid \cerr \mid \const \mid \ctable &&&
\ctable \in \ctables \; \Def= {} & \ccallsites \to \cvalues \times \cvalues 
\end{align*}
{\textbf{Nodes, stacks, and environments.\;}} Every intermediate
point of a program's execution is uniquely identified by an
\textit{(execution) node}, $\cnode \in \cnodes$, a
concept which we adapt
from~\cite{Jagannathan:1995:UTF:199448.199536}.  We distinguish
\emph{expression nodes} $\cenodes$ and \emph{variable nodes}
$\cvnodes$. An expression node $\pair{\elabel}{\cenv}$, denoted
$\cn{\elabel}{\cenv}$, captures the execution point where a
subexpression $e_\elabel$ is evaluated in the environment $\cenv$.
An environment $\cenv$ is a (finite) partial map binding variables to
variable
nodes. 
A variable node $\pair{x}{\cenv,\cstack}$, denoted $\cns{x}{\cenv}{\cstack}$, is
created at each execution point where an argument value is bound to a
formal parameter $x$ of a function at a call site. Here, $\cenv$ is the environment
at the point where the function is defined and $\cstack$ is the \emph{call
site stack}\footnote{referred to as \emph{contour} in~\cite{Jagannathan:1995:UTF:199448.199536}}
of the variable node. \zp{Note that we define nodes and environments using
mutual recursion where the base cases are defined using the empty
environment $\epsilon$}. The call site stack captures the
sequence of program locations of all pending function calls before
this variable node was created. That is, intuitively, $\cstack$ can be
thought of as recording the return addresses of these pending
calls. We write $\elabel \cdot \cstack$ to denote the stack obtained
from $\cstack$ by prepending $\elabel$. Call site stacks are used to
uniquely identify each variable binding.

We explain the role of expression and variable nodes in more detail
later. For any node $\cnode$, we denote by $\nloc(\cnode)$ the
location of $\cnode$ and by $\nenv(\cnode)$ its environment. If $n$ is a variable node, we denote its stack
component by $\nstack(\cnode)$.
A pair $\pair{e}{\cenv}$ is called well-formed if $\cenv(x)$ is defined for
every variable $x$ that occurs free in $e$.

\smartparagraph{Values and execution maps.}
Similar to our data flow refinement types, there are four kinds of
(data flow) values $\cval \in \cvalues$. First, every constant
$\const$ is also a value. The value $\bot$ stands for nontermination
or unreachability of a node, and the value $\cerr$ models execution
errors. Functions are represented by \emph{tables}. A table $\ctable$
maintains an input/output value pair $T(S)=\pair{v_i}{v_o}$ for each
call site stack $\cstack$. We adapt similar notation for concrete
function tables as we introduced for type tables in
\S~\ref{sec:types}. In particular, we denote by $\ctable_{\bot}$
the table that maps every call site stack to the pair
$\pair{\bot}{\bot}$. We say that a value $\cval$ is safe, denoted
$\safe(\cval)$, if \tw{$\cerr$ does not occur anywhere in $\cval$, i.e.,} $\cval \neq \cerr$ and if $\cval \in \ctables$ then
for all $\cstack$, both $\tin{\cval(\cstack)}$ and
$\tout{\cval(\cstack)}$ are safe.

The data flow semantics computes \emph{execution maps}
$\cmap \in \cmaps$, which map nodes to values.
We write $\cmap_{\bot}$ ($\cmap_{\cerr}$) for
the execution map that assigns $\bot$ ($\cerr$) to every node.


As a precursor to defining the data flow semantics, we define a
\emph{computational order} $\cvord$ on values. In this order,
$\bot$ is the smallest element, $\cerr$ is the largest element, and
tables are ordered recursively on the pairs of values for each call
site:
\begin{align*}
  \cval_1 \cvord \cval_2 {\Def\iff} &  \cval_1 = \bot \;\lor\; \cval_2 = \cerr
                                       \;\lor\; (\cval_1, \cval_2 \in \consts
                                       \land \cval_1 = \cval_2) \;\lor\;
 (\cval_1,\cval_2 \in \ctables
                                   \land \forall \cstack.\,
                                   \cval_1(\cstack) \mathrel{\vecop{\cvord}}
                                   \cval_2(\cstack))
\end{align*}
Defining the error value as the largest element of the order is
nonessential but simplifies the presentation. This
definition ensures that the least upper bounds (lub) of arbitrary sets of
values exist, denoted by the join operator $\cvjoin$. In 
fact, $\cvord$ is a partial order that induces a complete lattice on values
which can be lifted point-wise to execution maps.

\begin{example}
  \label{ex:dataflow-semantics}
Let us provide intuition for the execution maps through an
example. To this end, consider
Program~\ref{ex:concrete-semantics-example}. \tw{The middle shows the
corresponding expression in our simple language with each
subexpression annotated with its unique location. E.g., the
using occurrence of \lstinline+x+ on line 1 is annotated with location
$k$.

The program's execution map is given on the right.
We abbreviate call stacks occurring in table entries by the last
location pushed onto the stack (e.g. writing just $a$ instead of $adh$).
\zp{This simplification preserves the uniqueness of call stacks for
  this specific program.}
We similarly denote a node just
by its location if this already uniquely identifies the node. During the execution of
Program~\ref{ex:concrete-semantics-example}, the lambda abstraction at
location
$o$ is called twice whereas all other functions are called once. Due
to the two calls to the function (\lstinline+id+) at $o$, the variable
$\texttt{x}$ is bound twice and the subexpression at location
$k$ is also evaluated two times. This is reflected in the execution
map by entries for two distinct variable nodes associated with
$\texttt{x}$ and two distinct expression nodes associated with
$k$. We use superscript notation to indicate the environment
associated with these nodes. For instance,
$k^q$ is the expression node that records the value obtained from the
subexpression at location
$k$ when it is evaluated in the environment binding
$\texttt{x}$ to the variable node $\texttt{x}^q$. In turn,
$\texttt{x}^q$ records the binding of
$\texttt{x}$ for the function call at location $q$. }

%

\begin{figure}[t]
\begin{minipage}[t]{.27\textwidth}
\begin{flushleft}
{\begin{lstlisting}[language=Caml,mathescape=true,
  columns=fullflexible]
let id x = x${}_{k}$ in
let u = (id${}_{q}$ 1${}_{f}$)${}_{g}$ in
(id${}_{a}$ 2${}_{b}$)${}_{c}$
\end{lstlisting}}
\end{flushleft}
\end{minipage}%
\begin{minipage}[t]{.27\textwidth}
  \small
  \arraycolsep=1pt
   \[\begin{array}{l}
       {\color{red}(}{\color{blue}(}\lambda \texttt{id}.\, {\color{purple}(}\lambda \texttt{u}.\, (\texttt{id}_a\;
2_b)_c{\color{purple})}_d\\[.5ex]
       \qquad \quad\; (\texttt{id}_q\; 1_f)_g{\color{blue})}_h\\[.5ex]
       \quad (\lambda x.\, x_k)_o{\color{red})}_p
     \end{array}\]
\end{minipage}
\begin{minipage}[t]{.45\textwidth}
  \footnotesize
  \begin{flushright}
    \[
      \begin{array}{@{}l@{}}
        h \mapsto [ \tentry{h}{[ \tentry{q}{1}{1},\; \tentry{a}{2}{2}]}{2}]\\[1ex]
          \texttt{id},o \mapsto [ \tentry{q}{1}{1},\; \tentry{a}{2}{2}]
        \quad\; d \mapsto [\tentry{d}{1}{2}]\\[1ex]
          q \mapsto [ \tentry{q}{1}{1} ] \quad\; a \mapsto [
\tentry{a}{2}{2}] \\[1ex]
 \texttt{u},f,g,\texttt{x}^q,k^q \mapsto 1 \quad\; b,c,d,h,\texttt{x}^a,k^a,p \mapsto 2
        \end{array}
    \]
  \end{flushright}
\end{minipage}
\captionof{lstlisting}{The right side shows the program's execution
  map~\label{ex:concrete-semantics-example}}

\small
\[\arraycolsep=10pt
\begin{array}{cccc}
        \begin{array}{@{}l@{}l@{}}
	\mathbf{(0)}  & {}\; 
        \begin{array}{@{}r@{}l@{}}
          \_ \mapsto &\;\; \bot
        \end{array}
	\end{array}
&
        \begin{array}{@{}l@{}l@{}}
	\mathbf{(1)}  & {}\; 
        \begin{array}{@{}r@{}l@{}}
          h \mapsto &\;\; [\tentry{h}{\ctable_\bot}{\bot}]\\
          o \mapsto &\;\; \ctable_\bot
        \end{array}
        \end{array}
&
        \begin{array}{@{}l@{}l@{}}
	\mathbf{(2)}  & {}\; 
        \begin{array}{@{}r@{}l@{}}
          \texttt{id} \mapsto &\;\; \ctable_\bot\\
          d \mapsto &\;\; \ctable_\bot
        \end{array}
        \end{array}
&
        \begin{array}{@{}l@{}l@{}}
	\mathbf{(3)}  & {}\; 
        \begin{array}{@{}r@{}l@{}}
          q \mapsto &\;\; [\tentry{q}{1}{\bot}]\\
          f \mapsto &\;\; 1
        \end{array}
        \end{array}
\end{array}
\]
\noindent\makebox[\linewidth]{\rule{.99\textwidth}{0.2pt}}
\[\arraycolsep=0pt
\begin{array}{cccc}
        \begin{array}{@{}l@{}l@{}}
	\mathbf{(4)}  & {}\; 
        \begin{array}{@{}r@{}l@{}}
          \texttt{id} \mapsto &\;\; [\tentry{q}{1}{\bot}]\\
          h \mapsto &\;\; [\tentry{h}{[\tentry{q}{1}{\bot}]}{\bot}]\\
          o \mapsto &\;\; [\tentry{q}{1}{\bot}]
        \end{array}
	\end{array}\;\;\;{}
&
        \begin{array}{@{}l@{}l@{}}
	\mathbf{(5)}  & {}\; 
        \begin{array}{@{}r@{}l@{}}
          \texttt{x}^q \mapsto &\;\; 1
        \end{array}
	\end{array}\;\;\;{}
&
        \begin{array}{@{}l@{}l@{}}
	\mathbf{(6)}  & {}\; 
        \begin{array}{@{}r@{}l@{}}
          k^q \mapsto &\;\; 1\\
          o \mapsto &\;\; [\tentry{q}{1}{1}]\\
          h \mapsto &\;\; [\tentry{h}{[\tentry{q}{1}{1}]}{\bot}]
        \end{array}
	\end{array}\;\;\;{}
&
        \begin{array}{@{}l@{}l@{}}
	\mathbf{(7)}  & {}\; 
        \begin{array}{@{}r@{}l@{}}
          \texttt{id} \mapsto &\;\; [\tentry{q}{1}{1}]\\
        \end{array}
	\end{array}
\end{array}
\]
\noindent\makebox[\linewidth]{\rule{.99\textwidth}{0.2pt}}
\[\arraycolsep=0pt
\begin{array}{cccc}
        \begin{array}{@{}l@{}l@{}}
	\mathbf{(8)}  & {}\; 
        \begin{array}{@{}r@{}l@{}}
          q \mapsto &\;\; [\tentry{q}{1}{1}]\\
          g \mapsto &\;\; 1\\
          d \mapsto &\;\; [\tentry{d}{1}{\bot}]
        \end{array}
	\end{array}\;\;\;{}
&
        \begin{array}{@{}l@{}l@{}}
	\mathbf{(9)}  & {}\; 
        \begin{array}{@{}r@{}l@{}}
          \texttt{u} \mapsto &\;\; 1\\
          b \mapsto &\;\; 2\\
          a \mapsto &\;\; [\tentry{a}{2}{\bot}]\\
        \end{array}
	\end{array}\;\;\;{}
&
        \begin{array}{@{}l@{}l@{}}
	\mathbf{(10)}  & {}\; 
        \begin{array}{@{}r@{}l@{}}
          \texttt{id} \mapsto &\;\; [\tentry{q}{1}{1},\tentry{a}{2}{\bot}]\\
          h \mapsto &\;\; [\tentry{h}{[\tentry{q}{1}{1},\tentry{a}{2}{\bot}]}{\bot}]\\
          o \mapsto &\;\; [\tentry{q}{1}{1},\tentry{a}{2}{\bot}]          
        \end{array}
	\end{array}\;\;\;{}
&
        \begin{array}{@{}l@{}l@{}}
	\mathbf{(11)}  & {}\; 
        \begin{array}{@{}r@{}l@{}}
          \texttt{x}^a \mapsto &\;\; 2
        \end{array}
	\end{array}
\end{array}
\]
\noindent\makebox[\linewidth]{\rule{.99\textwidth}{0.2pt}}
\[\arraycolsep=0pt
\begin{array}{ccc}
        \begin{array}{@{}l@{}l@{}}
	\mathbf{(12)}  & {}\; 
        \begin{array}{@{}r@{}l@{}}
          k^a \mapsto &\;\; 2\\
          o \mapsto &\;\; [\tentry{q}{1}{1},\tentry{a}{2}{2}]\\
          h \mapsto &\;\; [\tentry{h}{[\dots,\tentry{a}{2}{2}]}{\bot}]
        \end{array}
	\end{array}\quad{}
&
        \begin{array}{@{}l@{}l@{}}
	\mathbf{(13)}  & {}\; 
        \begin{array}{@{}r@{}l@{}}
          \texttt{id} \mapsto &\;\; [\tentry{q}{1}{1},
          \tentry{a}{2}{2}]
        \end{array}
	\end{array}\quad{}
&
        \begin{array}{@{}l@{}l@{}}
	\mathbf{(14)}  & {}\; 
        \begin{array}{@{}r@{}l@{}}
          c,g,p \mapsto &\;\; 2\\
          a \mapsto &\;\; [\tentry{a}{2}{2}]\\
          d \mapsto &\;\; [\tentry{d}{1}{2}]\\
          h \mapsto &\;\; [\tentry{h}{[\dots]}{2}]
        \end{array}
	\end{array}\;\;\;{}
\end{array}
\]
\caption{Fixpoint iterates of the data flow semantics for Program~\ref{ex:concrete-semantics-example}}
\label{fig:running-example-maps}
\vspace{-1em}
\end{figure}

The cumulative entry $\texttt{u},f,g,\texttt{x}^q,k^q \mapsto 1$ in the execution
map indicates that the result computed
at nodes $\texttt{u}$, $f$, $g$, and $k^q$ is $1$, respectively, that $x$ is
bound to $1$ for the call at location $q$. 
Some of the nodes are mapped to function values. The function
\lstinline+id+ is represented by the table
$[ \tentry{q}{1}{1},\; \tentry{a}{2}{2} ]$ that stores input-output
values for the two calls to \lstinline+id+ at $q$ and $a$.  The nodes
corresponding to the two usages of \lstinline+id+ are also mapped to
tables. However, these tables only have a single entry
each. Intuitively, \lstinline+id+ takes two separate data flow paths
in the program starting from its definition at $o$. For each node on
these two paths, the associated table captures how \lstinline+id+ will
be used at the nodes on any data flow path that continues from that
point onward. The tables stored at nodes $q$ and $a$ thus only contain
information about the input and output for the respective call site
whereas the table stored at \lstinline+id+ captures both call sites
because the node for the definition of \lstinline+id+ occurs on both
paths.

Some additional examples involving higher-order functions and recursion can be found
in \lessmore{the companion report~\cite[\S~A]{techreport}}{\S~\ref{sec:appendix-additional-examples}}.
\end{example}

\subsection{Concrete Semantics}

We define the data flow semantics of a higher-order program $e$ formally as the
least fixpoint of a concrete transformer, $\ctransname$, on execution maps.

\smartparagraph{Concrete Transformer.} The idea is that we start with
the map $\cmap_\bot$ and then use $\ctransname$ to consecutively
update the map with new values as more and more nodes are reached
during the execution of $e$. The
signature of $\ctransname$ is as follows:
\[
  \ctransname: \Exp \to \cenvironments \times \cstacks \to \cmaps \to
  \cvalues \times \cmaps
\]
It takes an expression $e$, an environment
$\cenv$, and a stack $\cstack$, and returns a transformer
$\ctrans[k]{e_\elabel}(\cenv,\cstack): \cmaps \to \cvalues \times \cmaps $ on
execution maps.  Given an input map $\cmap$, the transformer
$\ctrans[k]{e_\elabel}(\cenv,\cstack)$ returns the new value $\cval'$ computed at node
$\cn{\elabel}{\cenv}$ together with the updated execution map $\cmap'$. That is, we always have
$\cmap'(\cn{\elabel}{\cenv}) = \cval'$. We
could have defined the transformer so that it returns only the updated map $\cmap'$,
but returning $\cval'$ in addition yields a more concise definition:
observe that $\cmaps \to \cvalues \times \cmaps$ is the type of the
computation of a state monad~\cite{DBLP:conf/lfp/Wadler90}. We exploit
this observation and define $\ctransname$ using
monadic composition of primitive state transformers. This allows us to hide
the stateful nature of the definition and make it easier to see
the connection to the type system later. 

The primitive state transformers and composition operations are
defined in Fig.~\ref{fig:monadic-operations}.  For instance, the
transformer $\mread{\cnode}$ reads the value at node $\cnode$ in the
current execution map $\cmap$ and returns that value together with the
unchanged map. Similarly, the transformer $\mupd{\cnode}{\cval}$
updates the entry at node $\cnode$ in the current map $\cmap$ by
taking the join of the current value at $\cnode$ with $\cval$,
returning the obtained new value $\cval'$ and the updated map. We compress a
sequence of update operations $\mupd{\cnode_1}{\cval_1}, \dots, \mupd{\cnode_n}{\cval_n}$,
by using the shorter notation $\mupd{\cnode_1, \dots, \cnode_n}{\cval_1, \dots, \cval_n}$
to reduce clutter. We point out that the result of this sequenced
update operation is the result of the last update $\mupd{\cnode_n}{\cval_n}$.

The operation $\mbind{F}{G}$ defines the usual composition of stateful
computations $F$ and $G$ in the state monad, where
$F \in \cmaps \to \alpha \times \cmaps$ and
$G \in \alpha \to \cmaps \to \beta \times \cmaps$ for some $\alpha$
and $\beta$. Note that the composition is short-circuiting in the case where
the intermediate value $u$ produced by $F$ is $\top$ (i.e. an error
occurred).
We use Haskell-style monad comprehension syntax for applications of
$\kw{bind}$ at the meta-level. That is, we write
$\Mdo x \mto F \Mend G$ for $\mbind{F}{\Lambda x.\,G}$. We similarly
write $\Mdo x \mto F \,\Mif P \Mend G$ for $\mbind{F}{\Lambda x.\, \Mif
  P \,\Mthen G \, \Melse \Mreturn \bot}$ and we shorten
$\Mdo x \mto F \Mend G$ to just $\Mdo F \Mend G$ in cases where $x$ does not occur
free in $G$. Moreover, we write
$\Mdo x_1 \mto F_1 \Mend \dots \Mend x_n \mto F_n \Mend G$ for the
comprehension sequence
$\Mdo x_1 \mto F_1 \Mend (\dots \Mend (\Mdo x_n \mto F_n \Mend
G)\dots)$. We also freely mix the monad comprehension syntax with
standard let binding syntax and omit the semicolons whenever this
causes no confusion.


The definition of $\ctrans[k]{e}(\cenv,\cstack)$ is given in
Fig.~\ref{fig:concrete-transformer} using induction over the structure
of $e$. As we discussed earlier, the structural definition of the
transformer enables an easier formal connection to the data flow
refinement typing rules. Note that in the definition we implicitly
assume that $\pair{e}{\cenv}$ is well-formed. We discuss the cases of
the definition one at a time \tw{using
  Program~\ref{ex:concrete-semantics-example} as our running
  example. Figure~\ref{fig:running-example-maps} shows the fixpoint
  iterates of $\ctrans[k]{e}(\epsilon,\epsilon)$ starting from the
  execution map $\cmap_\bot$ where $e$
  is Program~\ref{ex:concrete-semantics-example} and $\epsilon$ refers
  to both the empty environment and empty stack. For each
  iterate $(i)$, we only show the entries in the execution map that
  change in that iteration. We will refer to this figure throughout our
  discussion below.}

\begin{figure}
  \begin{align*}
    \mread{\cnode} \Def= {} & \Lambda \cmap.\, \pair{\cmap(\cnode)}{\cmap}\\
    \mupd{\cnode}{\cval} \Def= {} & \Lambda \cmap.\, \Mlet \cval'
= \cmap(\cnode) \cvjoin \cval \,\Min \Mif \safe(\cval') \,\Mthen
\pair{\cval'}{\cmap[\cnode \mapsto \cval']} \,\Melse \pair{\cerr}{\cmap_\top}\\
    \menv(\cenv) \Def= {} & \Lambda \cmap.\, \pair{\cmap \circ \cenv}{\cmap}\\
    \massert{P} \Def= {} & \Lambda \cmap.\, \Mif P
\,\Mthen  \pair{\bot}{\cmap} \, \Melse \pair{\cerr}{\cmap_\top}\\
    \Mfor \ctable \, \Mdo F \Def= {} & \Lambda 
\cmap.\, \cmjoinb_{\cstack \in \ctable} F(\cstack)(\cmap)\\
   \Mreturn \cval \, \Def= {} & \Lambda \cmap.\, \pair{\cval}{\cmap}\\
   \mbind{F}{G} \Def= {} & \Lambda \cmap.\, \Mlet
\pair{u}{{\cmap}'} = F(\cmap) \, \Min \, \Mif u = \top \, \Mthen
\pair{\cerr}{\cmap_\top} \, \Melse G(u)(\cmap')
%
%
  \end{align*}
  \vspace{-1em}
  \caption{Primitive transformers on state monad for computations over
    execution maps}
  \label{fig:monadic-operations}
\[\begin{array}[t]{@{}l|l}
\begin{array}[t]{@{}l}
\ctrans{c_\elabel}(\cenv, \cstack) \Def= \\
\sind \Mdo \cnode = \cn{\elabel}{\cenv} \,\Mend\;
\cval' \mto \mupd{\cnode}{c} \,\Mend\; \Mreturn \cval'\\[2ex]
\ctrans{x_\elabel}(\cenv, \cstack) \Def= \\
\sind \Mdo  \cnode = \cn{\elabel}{\cenv} \, \Mend \cval \mto
\mread{\cnode} \, \Mend \, \cnode_x = \cenv(x)
\, \Mend \, \Gamma \mto \menv(\cenv) \, \\[0.4ex]
\sind[6] \cval' \mto \mupd{\cnode_x,\,
\cnode}{\Gamma(x) \ciprop \cval}\\
\sind \Mreturn \cval'\\[2ex]
%
\ctrans{(e_\elabela \, e_\elabelb)_\elabel}(\cenv, \cstack) \Def=  \\
\sind 	\Mdo
\arraycolsep=0pt
\begin{array}[t]{l}
\cnode = \cn{\elabel}{\cenv} \,\Mend\; \cnode_\elabela =
\cn{\elabela}{\cenv}
		\,\Mend\; \cnode_\elabelb = \cn{\elabelb}{\cenv} \,
\Mend\; \cval \mto \mread{\cnode}\\[0.4ex]
\cval_\elabela \mto \ctrans[k]{e_\elabela}(\cenv, \cstack) \,\Mif \cval_\elabela \neq \bot\\[0.4ex]
\massert{\cval_\elabela \in \ctables} \\[0.4ex]
\cval_\elabelb \mto \ctrans[k]{e_\elabelb}(\cenv, \cstack)\\[0.4ex]
\cval'_\elabela,\, [\tentry{\elabela \cdot \cstack}{\cval_\elabelb'}{\cval'}] = \cval_\elabela \ciprop
   [\tentry{\elabela \cdot \cstack}{\cval_\elabelb}{\cval}]\\[0.4ex]
\cval''	\mto \mupd{\cnode_\elabela,\, \cnode_\elabelb,\,
\cnode}{\cval'_\elabela,\, \cval'_\elabelb,\, \cval'}
\end{array}\\[.4ex]
\sind \Mreturn \cval''
\end{array}
&
\begin{array}[t]{@{}l}
\ctrans{(\lambda x.e_\elabela)_\elabel}(\cenv, \cstack) \Def= \\ 
\sind	\Mdo \cnode = \cn{\elabel}{\cenv} \,\Mend\; \ctable \mto
\mupd{\cnode}{\ctable_{\bot}} \\[0.5ex]
\sind[6] \ctable' \mto \Mfor \ctable \;\Mdo \cstepbody(x,
e_\elabela, \cenv, \ctable)\\[0.5ex]
\sind[6] \ctable'' \mto \mupd{\cnode}{\ctable'}\\[0.5ex] 
\sind[2] \Mreturn \ctable'' \\[2ex]
\cstepbody(x, e_\elabela, \cenv, \ctable)(\cstack') \Def= \\
\sind  \Mdo
\arraycolsep=0pt
\begin{array}[t]{l}
\cnode_x = \cns{x}{\cenv}{\cstack'} \,\Mend\;
                  \cenv_\elabela = \cenv.x\!:\!\cnode_x \,\Mend\;
\cnode_\elabela = \cn{\elabela}{\cenv_\elabela}\\[0.5ex]
\cval_x \mto \mread{\cnode_x} \\[0.5ex]
\cval_\elabela \mto \ctrans[k+1]{e_\elabela}(\cenv_\elabela,
\cstack') \\[0.5ex]
[\ttab[\cstack']{\cval_x'}{\cval_\elabela'}], \ctable' = [\ttab[\cstack']{\cval_x}{\cval_\elabela}] \ciprop \restr{\ctable}{\cstack'} \\[0.5ex]
\mupd{\cnode_x,\, \cnode_\elabela}{\cval'_x,\, \cval'_\elabela}
\end{array}\\[0.5ex]
\sind  \Mreturn \ctable'
\end{array}
\end{array}\]
\vspace{-1em}
\caption{Transformer for the concrete data flow semantics\label{fig:concrete-transformer}}
%
\begin{minipage}[t]{0.55\textwidth}
\[\begin{array}{r l}
\ctable_1 \ciprop \ctable_2 \Def=& \\ 
\multicolumn{2}{l}{\sind \Mlet \ctable' = \Lambda \cstack. } \\[0.5ex]
\multicolumn{2}{l}{\sind[2]\Mif \cstack \not\in \ctable_2 \;\Mthen \pair{\ctable_1(\cstack)}{\ctable_2(\cstack)} \;\Melse} \\[0.5ex]
\multicolumn{2}{l}{\sind[2]\Mlet \pair{\cval_{1i}}{\cval_{1o}} = \ctable_1(\cstack) \Mend \pair{\cval_{2i}}{\cval_{2o}} = \ctable_2(\cstack)} \\ [0.5ex]
\multicolumn{2}{l}{\sind[3]\;\;\pair{\cval'_{2i}}{\cval'_{1i}} = \cval_{2i} \ciprop \cval_{1i} \Mend \pair{\cval'_{1o}}{\cval'_{2o}} = \cval_{1o} \ciprop \cval_{2o}} \\[0.5ex]
\multicolumn{2}{l}{\sind[2] \Min (\pair{\cval'_{1i}}{\cval'_{1o}}, \pair{\cval'_{2i}}{\cval'_{2o}}) \quad\quad\quad \ind[4]} \\[0.5ex]
\multicolumn{2}{l}{\sind \Min \pair{\Lambda \cstack.\; \pi_1(\ctable'(\cstack))}{\Lambda \cstack.\; \pi_2(\ctable'(\cstack))}} 
\end{array}\]
\end{minipage}%
\begin{minipage}[t]{0.44\textwidth}
\begin{align*}
\ctable \ciprop \bot \Def= {} &  \pair{\ctable}{\ctable_{\bot}}\\[5.5ex]
\ctable \ciprop \cerr \Def= {} &  \pair{\cerr}{\cerr} \\[5.5ex]
\cval_1 \ciprop \cval_2 \Def= {} &  \pair{\cval_1}{\cval_1 \cvjoin
\cval_2} \quad \textbf{(otherwise)}
\end{align*}
\end{minipage}
\caption{Value propagation in the concrete data flow semantics\label{fig:concrete-prop}}
\vspace{-.8em}
\end{figure}



\smartparagraphnb{Constant $e = c^\elabel$.} Here, we simply set the current node
$\cnode$ to the join of its current value $\cmap(\cnode)$ and
the value $c$. \tw{For example, in
  Fig.~\ref{fig:running-example-maps}, when execution reaches the
  subexpression at location $f$ in iteration $(2)$, the corresponding
  entry for the node $n$ identified by $f$ is updated to
  $\cmap(n) \cvjoin 1 = \bot \cvjoin 1 = 1$.}

\smartparagraphnb{Variable $e = x_\elabel$.} This case implements the data flow propagation between the
variable node $\cnode_x$ binding $x$ and the current expression node
$\cnode$ where $x$ is used. This is realized using the
\emph{propagation function} $\ciprop$. Let
$v_x = \Gamma(x)=\cmap(\cnode_x)$ and $v = \cmap(\cnode)$ be the
current values stored at the two nodes in $\cmap$. The function
$\ciprop$ takes these values as input and propagates information
between them, returning two new values $v_x'$ and $v'$ which are then
stored back into $\cmap$. The propagation function is defined in
Fig.~\ref{fig:concrete-prop} and works as follows. If $v_x$ is a
constant or the error value and $v$ is still $\bot$, then we simply
propagate $v_x$ forward, replacing $v$ and leaving $v_x$
unchanged. The interesting cases are when we propagate information
between tables. The idea is that inputs in a table $v$ flow backward
to $v_x$ whereas outputs for these inputs flow forward from $v_x$ to
$v$. \tw{For example, consider the evaluation of the occurrence of
  variable $\texttt{id}$ at location $a$ in step $(10)$ of
  Fig.~\ref{fig:running-example-maps}. Here, the expression node $\cnode$ is
  identified by $a$ and the variable node $\cnode_x$ by
  $\texttt{id}$. Moreover, we have $v=[\tentry{a}{2}{\bot}]$
  from step $(9)$ and $v_x=[\tentry{q}{1}{1}]$ from step $(7)$. We
  then obtain
  \[v_x \tiprop v = [\tentry{q}{1}{1}] \tiprop [\tentry{a}{2}{\bot}] =
    \pair{[\tentry{q}{1}{1},
      \tentry{a}{2}{\bot}]}{[\tentry{a}{2}{\bot}]}\] That is, the
  propagation causes the entry in the execution map for the node
  identified by $\texttt{id}$ to be updated to
  $[\tentry{q}{1}{1}, \tentry{a}{2}{\bot}]$.  }

In general, if $v_x$
is a table but $v$ is still $\bot$, we initialize $v$ to the empty
table $\ctable_\bot$ and leave $v_x$ unchanged (because we have not
yet accumulated any inputs in $v$). If both $v_x$ and $v$ are tables,
$v_x \tiprop v$ propagates inputs and outputs as described above by
calling $\ciprop$ recursively for every call site $\cstack \in
\cval$. Note how the recursive call for the propagation of the inputs
$v_{2i} \ciprop v_{1i}$ inverts the direction of the
propagation. \tw{This has the affect that information about argument
  values propagate from the call site to the definition site of the
  function being called, as is the case for the input value $2$ at
  call site $a$ in our example above. Conversely, output values are
  propagated in the other direction from function definition sites to
  call sites. For example, in step $(14)$ of
  Fig.~\ref{fig:running-example-maps}, the occurrence of $\texttt{id}$
  at location $a$ is evaluated again, but now we have
  $v_x=[\tentry{q}{1}{1},\tentry{a}{2}{2}]$ from step $(13)$ whereas
  $v$ is as before. In this case, the propagation yields
  \[v_x \tiprop v = [\tentry{q}{1}{1},\tentry{a}{2}{2}] \tiprop [\tentry{a}{2}{\bot}] =
    \pair{[\tentry{q}{1}{1}, \tentry{a}{2}{2}]}{[\tentry{a}{2}{2}]}\]
  That is, the information about the output value $2$ for the input
  value $2$ has finally arrived at the call site $a$.
As we shall see, the dataflow propagation between tables closely
relates to contravariant subtyping of function types.}

\smartparagraphnb{Function application $e = (e_\elabela \, e_\elabelb)_\elabel$.}We 
first evaluate $e_\elabela$ to obtain the updated map and extract
the new value $v_\elabela$ stored at the corresponding expression node $\cnode_\elabela$.
If $v_\elabela$ is not a table, then we must be attempting
an unsafe call, in which case the monadic operations return the error map $\cmap_{\cerr}$. 
If $v_\elabela$ is a table, we continue evaluation of $e_\elabelb$ obtaining the new value $v_\elabelb$
at the associated expression node $\cnode_\elabelb$.
We next need to propagate the information between this call site $\cnode_\elabela$ and $v_\elabela$. To this
end, we use the return value $v$ for the node $\cnode$ of $e$
computed thus far and create a singleton table 
$[\cstack':v_\elabelb \rightarrow v]$ where $\cstack'= \elabela \cdot
\cstack$ is the extended call stack that will be used for the
evaluation of this function
call. We then propagate between $v_\elabela$
and this table to obtain the new value $v'_\elabela$ and table $\ctable'$. Note that this propagation
boils down to (1) the propagation between $v_\elabelb$ and the input of $v_\elabela$ at 
$\cstack'$ and (2) the propagation between the output of $v_\elabela$ at
$\cstack'$ and the return value $v$.
The updated table $\ctable'$ hence contains
the updated input and output values $v'_\elabelb$ and $v'$ at $\cstack'$. All of these
values are stored back into the execution map. Intuitively, $v'_\elabela$ contains the
information that the corresponding function received an input coming 
from the call site identified by $\cstack'$. This information is ultimately
propagated back to the function definition where the call is actually evaluated.

\tw{As an example, consider the evaluation of the function application
  at location $g$ in step $(3)$ of
  Fig.~\ref{fig:running-example-maps}. That is, node $\cnode$ is
  identified by $g$ and we have $\elabela=q$, 
  $\elabelb=f$, $S=h$, and $v=\bot$. Moreover, we initially have 
  $\cmap(\cnode_\elabela)=\bot$ and $\cmap(\cnode_\elabelb)=\bot$. The recursive evaluation of
  $e_\elabela$ will propagate the information that $\texttt{id}$ is a
  table to location $q$, i.e., the recursive call to $\ctransname$
  returns $\cval_\elabela = \ctable_\bot$. Since $\cval_\elabela$ is a
  table, we proceed with the recursive evaluation of $1_f$,
  after which we obtain $\cmap(\cnode_f)=\cval_\elabelb=1$. Next we
  compute
  \[\cval_\elabela \ciprop
    [\tentry{\elabela \cdot \cstack}{\cval_\elabelb}{\cval}]
    = \ctable_\bot \ciprop [\tentry{qh}{1}{\bot}]
    = \pair{[\tentry{qh}{1}{\bot}]}{[\tentry{qh}{1}{\bot}]}
  \]
  That is, $\cmap(\cnode_\elabela)$ is updated to
  $\cval_\elabela'=[\tentry{qh}{1}{\bot}]$\footnote{Recall that in
    Fig.~\ref{fig:running-example-maps} we abbreviate the call stack
    $qh$ by just $q$.}. Note that we still have $\cval'=\bot$ at this
  point. The final value $1$ at location $g$ is obtained later when
  this subexpression is reevaluated in step $(8)$.  
}

\smartparagraphnb{Lambda abstraction $e = (\lambda
  x.e_\elabela)_\elabel$.}
We first extract the table $\ctable$ computed for the function thus
far. Then, for every call stack $\cstack'$ for which an input has
already been back-propagated to $\ctable$, we analyze the body by evaluating
$\cstepbody(x, e_\elabela, \cenv, \ctable)(\cstack')$, as follows.
First, we create a variable node $\cnode_x$ that will store the input value
that was recorded for the call site stack $\cstack'$. Note that if this value
is a table, indicating that the function being evaluated was called
with another function as input, then any inputs to the argument
function that will be seen while evaluating the body $e_\elabela$ will be
back-propagated to the table stored at $n_x$ and then further to
the node generating the call stack $\cstack'$. By incorporating
$\cstack'$ into variable nodes, we guarantee that there is a unique node for each call.

We next evaluate the function body $e_\elabela$ with the input
associated with stack $\cstack'$. To this end, we extend the
environment $\cenv$ to $\cenv_\elabela$ by binding $x$ to the variable
node $\cnode_x$.
We then propagate the information between the values
stored at the node $\cnode_\elabela$, i.e., the result of evaluating the
body $e_\elabela$, the nodes $\cnode_x$ for the bound variable, and the table $\ctable$. That is, we propagate
information from (1) the input of $\ctable$ at $\cstack'$ and the
value at node $\cnode_x$, and (2) the value assigned to the function body under 
the updated environment $\cenv_\elabela$ and the output of $\ctable$
at $\cstack'$
Finally, the updated tables $\ctable'$ for all call site stacks $\cstack'$ are joined
together and stored back at node $\cnode$.

\tw{
As an example, consider the evaluation of the lambda abstraction at
location $o$ in step $(5)$ of
Fig.~\ref{fig:running-example-maps}. Here, $\cnode$ is identified by
$o$ and we initially have
$\cmap(\cnode)=\ctable=[\tentry{qh}{1}{\bot}]$. Thus, we evaluate a single call to
$\cstepbody$ for $e_\elabela=\texttt{x}_k$ and $\cstack'=qh$. In this
call we initially have $\cmap(\cnode_x)=\cmap(\texttt{x}^q)=\bot$. Hence, the
recursive evaluation of $e_\elabela$ does not yet have any effect and
we still obtain $\cval_\elabela=\bot$ at this point. However, the
final propagation step in $\cstepbody$ yields:
\[[\ttab[\cstack']{\cval_x}{\cval_\elabela}] \ciprop
  \ctable(\cstack') = [\ttab[qh]{\bot}{\bot}] \ciprop
  [\ttab[qh]{1}{\bot}] = \pair{[\ttab[qh]{1}{\bot}]}{[\ttab[qh]{1}{\bot}]}
\]
and we then update $M(\cnode_x)$ to $\cval_x'=1$. In step $(6)$, when
the lambda abstraction at $o$ is once more evaluated, we now have
initially $\cmap(\cnode_x)=\cmap(\texttt{x}^q)=1$ and the recursive
evaluation of $e_i=\texttt{x}_k$ will update the entry for $k^q$ in
the execution map to $1$. Thus, we also obtain $\cval_\elabela=1$. The
final propagation step in $\cstepbody$ now yields:
\[[\ttab[\cstack']{\cval_x}{\cval_\elabela}] \ciprop
  \ctable(\cstack') = [\ttab[qh]{1}{1}] \ciprop
  [\ttab[qh]{1}{\bot}] = \pair{[\ttab[qh]{1}{1}]}{[\ttab[qh]{1}{1}]}
\]
which will cause the execution map entry for $\cnode$ (identified by
$o$) to be updated to $[\ttab[qh]{1}{1}]$.

Observe that the evaluation of a lambda abstraction for a new input
value always takes at least two iterations of $\ctransname$. This can
be optimized by performing the propagation in $\cstepbody$ both before and after the
recursive evaluation of $e_i$. However, we omit this optimization here
for the sake of maintaining a closer resemblance to the typing rule
for lambda abstractions.}

\begin{lemma}
  \label{prop-monotone}
  The function $\ciprop$ is monotone and increasing.
\end{lemma}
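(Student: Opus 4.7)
The plan is to prove both properties simultaneously by well-founded induction on the combined structural depth of the two input values $\cval_1$ and $\cval_2$. The reason the two properties must be proved together is the inversion of arguments in the recursive call for table inputs: for tables, $\ciprop$ is defined via $\pair{\cval'_{2i}}{\cval'_{1i}} = \cval_{2i} \ciprop \cval_{1i}$ on input positions, so monotonicity of the first output component in $\cval_1$ will depend on monotonicity of $\ciprop$ in its \emph{second} argument at an inner recursive call, while the increasing property on $\cval_1$'s component will require the increasing property on the second component at the inner call.

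First I would dispatch the base cases, which are immediate from the defining equations in Fig.~\ref{fig:concrete-prop}: (i) if $\cval_2 = \bot$ then $\cval_1 \ciprop \bot = \pair{\cval_1}{\ctable_\bot}$ (or $\pair{\cval_1}{\bot}$ for non-tables), so the first component equals $\cval_1$ and the second is $\bot$ or $\ctable_\bot$, both of which are below any value $\cval_2' \cvgeq \cval_2$; (ii) if $\cval_2 = \cerr$, monotonicity in $\cval_2$ is trivial because $\cerr$ is the top element; (iii) for constants and the ``otherwise'' case, $\cval_1 \ciprop \cval_2 = \pair{\cval_1}{\cval_1 \cvjoin \cval_2}$, and both monotonicity and extensiveness follow from the corresponding properties of $\cvjoin$.

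For the inductive case, where both $\cval_1 = \ctable_1$ and $\cval_2 = \ctable_2$ are tables, I would fix a stack $\cstack$ and let $\pair{\cval_{ji}}{\cval_{jo}} = \ctable_j(\cstack)$. The inner computations $\cval_{2i} \ciprop \cval_{1i}$ and $\cval_{1o} \ciprop \cval_{2o}$ are performed on strictly smaller values (one level of table nesting removed), so the induction hypothesis gives both monotonicity and the increasing property for them. Increasing then unfolds as follows: $\cval_{1i} \cvord \pi_2(\cval_{2i} \ciprop \cval_{1i}) = \cval'_{1i}$ and $\cval_{1o} \cvord \pi_1(\cval_{1o} \ciprop \cval_{2o}) = \cval'_{1o}$, so $\ctable_1(\cstack) \vecop{\cvord} \pair{\cval'_{1i}}{\cval'_{1o}}$; the symmetric statement for $\ctable_2$ is analogous. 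Since $\cvord$ on tables is the pointwise lifting, this gives $\ctable_1 \vecop{\cvord} \pi_1(\ctable_1 \ciprop \ctable_2)$. For monotonicity, a pointwise argument using IH monotonicity of the two inner calls (in the appropriate argument positions, accounting for the input inversion) yields the result.

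The main technical obstacle is justifying the well-foundedness of the induction, since tables are defined as \emph{total} functions on the infinite set $\cstacks$ of call stacks, so values can in principle be infinitely nested. I would handle this by defining $\ciprop$ itself as the greatest fixpoint of a monotone operator on $\cvalues \times \cvalues \to \cvalues \times \cvalues$ and then proving both properties by coinduction, using the fact that the set of pairs $\pair{\cval_1 \ciprop \cval_2}{\cval_1' \ciprop \cval_2'}$ satisfying monotonicity, together with pairs $\pair{\cval_1}{\cval_2}, \pair{\pi_1(\cval_1 \ciprop \cval_2)}{\cval_1}$ etc.\ witnessing extensiveness, is a post-fixpoint of the corresponding functional. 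Equivalently, and more concretely, one can stratify values by the fixpoint stage at which they are produced by $\ctransname$ and do induction on that stage, which suffices since the concrete semantics is obtained as a least fixpoint starting from $\cmap_\bot$.
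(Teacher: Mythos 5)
Your proof takes essentially the same route as the paper, whose entire argument is a mutual structural induction on the two input values; your base cases and the table case (with the input-direction inversion handled by invoking the hypothesis in the swapped argument position) match that plan. The extra machinery you propose for well-foundedness is not needed, since the paper reads the grammar of $\cvalues$ inductively (values have finite table-nesting depth; programs that would require infinitely deep tables are simply mapped to $\cmap_{\cerr}$), so plain structural induction already applies.
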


\begin{lemma}
  \label{step-monotone}
  For every  $e \in \Exp$, $\cenv \in \cenvironments$, and $\cstack
  \in \cstacks$ such that $\pair{e}{\cenv}$ is well-formed,
  $\ctrans{e}(\cenv,\cstack)$ is monotone and increasing.
\end{lemma}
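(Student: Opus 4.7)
The plan is to prove the lemma by structural induction on $e$, after first lifting monotonicity and increasingness from values to state-monad computations and verifying that all combinators appearing in Fig.~\ref{fig:concrete-transformer} preserve these properties. Concretely, call a computation $F : \cmaps \to \cvalues \times \cmaps$ \emph{monotone} if $\cmap_1 \cmord \cmap_2$ implies $\pi_1 F(\cmap_1) \cvord \pi_1 F(\cmap_2)$ and $\pi_2 F(\cmap_1) \cmord \pi_2 F(\cmap_2)$, and \emph{increasing} if $\cmap \cmord \pi_2 F(\cmap)$ for every $\cmap$. Once these notions are in place, the proof reduces to three independent pieces: (i) each primitive transformer of Fig.~\ref{fig:monadic-operations} is monotone and increasing, which is essentially inspection, using that $\cvjoin$ is monotone in both arguments and that $\cmap[\cnode \mapsto \cval']$ with $\cval' = \cmap(\cnode) \cvjoin \cval$ is both $\cmord$-above $\cmap$ and monotone in $\cmap$ and $\cval$; (ii) $\kw{bind}$ preserves monotonicity/increasingness, provided its second argument is monotone jointly in its value parameter and its map argument; and (iii) $\Mfor \ctable \Mdo F$, where $\ctable$ is itself read from the current state, is monotone/increasing whenever each $F(\cstack)$ is, because $\cmjoinb$ of monotone/increasing computations over a monotone index set remains monotone/increasing.

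With this scaffolding, the base case $e = c_\elabel$ is immediate from (i) applied to $\mupd{\cnode}{c}$. For $e = x_\elabel$, the only nontrivial step is that the update value $\Gamma(x) \ciprop \cval$ is monotone and increasing in $\cmap$, which is precisely Lemma~\ref{prop-monotone} combined with $\menv$ and $\mread{\cnode}$ being monotone/increasing. For the application case $e = (e_\elabela\, e_\elabelb)_\elabel$, the induction hypothesis furnishes monotonicity and increasingness for the recursive calls $\ctrans{e_\elabela}$ and $\ctrans{e_\elabelb}$, the $\massert$ step is monotone (crashing to $\cmap_\top$ is the $\cmord$-top), and the final $\ciprop$-mediated update is handled by Lemma~\ref{prop-monotone} together with (ii) and the monotonicity of the sequential $\mupd{\cdot}{\cdot}$ chain.

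The main obstacle is the lambda case $e = (\lambda x.e_\elabela)_\elabel$. Here $\cstepbody(x, e_\elabela, \cenv, \ctable)(\cstack')$ invokes the recursive call $\ctrans[k+1]{e_\elabela}(\cenv_\elabela, \cstack')$ on an environment containing the fresh variable node $\cns{x}{\cenv}{\cstack'}$, so by the induction hypothesis that subcall is monotone and increasing. The subtle point is that the outer $\Mfor \ctable \Mdo \cstepbody(\dots)$ iterates over exactly those $\cstack'$ with $\cstack' \in \ctable$, and $\ctable$ itself is $\cmap(\cnode) \cvjoin \ctable_\bot$, which grows with $\cmap$: as the state increases, more call stacks become active. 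I would prove monotonicity of the whole $\Mfor$ by splitting $\cstack' \in \ctable_2 \setminus \ctable_1$ from $\cstack' \in \ctable_1$; for the latter, $\cstepbody$ is pointwise monotone by IH and Lemma~\ref{prop-monotone} (which governs how the pair $[\ttab[\cstack']{\cval_x}{\cval_\elabela}] \ciprop \restr{\ctable}{\cstack'}$ varies with its arguments), and for the former, the added contributions can only push the joined map upward. Increasingness follows because every $\mupd{\cdot}{\cdot}$ is increasing and the outer join is above each of its arguments; combined with the final $\mupd{\cnode}{\ctable'}$, the composite transformer is increasing. Once these observations are lined up, the inductive step closes routinely.
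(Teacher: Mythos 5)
Your proposal is correct and follows essentially the same route as the paper, whose entire proof is a structural induction on $e$ combined with Lemma~\ref{prop-monotone}; your additional scaffolding (lifting monotonicity/increasingness to state-monad computations and checking the primitives, $\kw{bind}$, and the $\Mfor$-join) simply makes explicit what the paper leaves implicit. No substantive difference in approach or gap to report.
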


We define the semantics $\csem \llbracket e \rrbracket$ of a program $e$
as the least fixpoint of $\ctransname$ over the complete lattice of
execution maps:
\[
  \csem \llbracket e \rrbracket \Def= \mathbf{lfp}^{\cmord}_{
     \cmap_{\bot}} \Lambda \cmap.\, \Mlet \pair{\_}{\cmap'} = \ctrans[k]{e}(\epsilon,\epsilon)(\cmap) \;\Min \cmap'
\]

Lemma~\ref{step-monotone} guarantees that
$\csem \llbracket e \rrbracket$ is well-defined.
We note that the above semantics does not precisely model
  certain non-termina\-ting
programs where tables grow infinitely deep. The semantics of such programs
is simply $\cmap_{\cerr}$. A more precise semantics 
can be defined using
step-indexing~\cite{DBLP:journals/toplas/AppelM01}. However, we omit
this for ease of presentation and note that our semantics is adequate
for capturing refinement type systems and inference algorithms
\emph{\`a la} Liquid types, which do not support
infinitely nested function types.
\smartparagraph{Properties and collecting semantics.}
As we shall see, data flow refinement types abstract programs
  by \emph{properties} $P \in \cdomain$, which are sets of execution
  maps: $\cdomain \Def= \powerset(\cmaps)$. Properties form the
  concrete lattice of our abstract interpretation and are ordered by
  subset inclusion. That is, the concrete semantics of our abstract
  interpretation is the \emph{collecting semantics}
  $\ccoll: \Exp \to \cdomain$ that maps programs to properties:
  $\ccoll \llbracket e \rrbracket \Def= \{\csem \llbracket e \rrbracket\}.$
An example of a property is \emph{safety}: let
$\psafe$ be the property consisting of all execution maps that map all
nodes to safe values. Then a
program $e$ is \emph{safe} if $\ccoll \llbracket e \rrbracket \subseteq \psafe$.

\section{Intermediate Abstract Semantics}

We next present two abstract semantics that represent crucial
abstraction steps in calculationally obtaining our data flow refinement
type system from the concrete data flow semantics. Our formal exposition focuses
mostly on the aspects of these semantics that are instrumental in
understanding the loss of precision introduced by these intermediate
abstractions. Other technical details can be found in the designated
appendices included in the supplementary materials.

\subsection{Relational Semantics}
\label{sec:arelational}
%
%
As we shall see later, a critical abstraction step performed by the
type system is to conflate the information in tables that are propagated
back from different call sites to function definitions. That is, a pair of
input/output values that has been collected from one call site will
also be considered as a possible input/output pair at other call sites
to which that function flows. To circumvent a catastrophic loss of
precision caused by this abstraction step, we first introduce an
intermediate semantics that explicitly captures the relational
dependencies between input and output values of functions. 
We refer to this semantics as the \textit{relational data flow semantics}.

\smartparagraph{Abstract domains.}
The definitions of nodes, stacks, and environments in the relational
semantics are the same as in the concrete semantics. Similar to
data flow refinement types, the relational
abstractions of values, constants, and tables are defined with respect
to \emph{scopes} $\rscope \subseteq_{\mathit{fin}} \vars$. The variables in scopes are
used to track how a value computed at a specific node in the execution
map relates to the other values bound in the current environment. We
also use scopes to capture how function output values depend on
the input values, similar to the way input-output relations are
captured in dependent function types. The scope of a node $\cnode$, denoted
$\rscope_{\cnode}$, is the domain of $\cnode$'s environment:
$\rscope_{\cnode} \Def= \dom(\nenv(\cnode))$.  The new semantic domains are
defined as follows:
\begin{align*}
\rval \in \rvalues & ::= {} \; \rbot \mid \rerr \mid \rel \mid x:\rtable &
\dmap \in \dmaps_\rscope & \Def= {} \; \rscope \cup \set{\nu} \to \consts \cup \{\rdf\} \\
\rel \in \rels & \Def= {}\; \powerset(\dmaps_{\rscope}) &
\absn:\rtable \in \rtables & \Def= {} \; \Sigma \absn \in (\vars
\setminus \rscope).\,
\cstacks \to \rvalues[\rscope] \times \rvalues[\rscope \cup \{\absn\}]\\
&&\rmap \in \rmaps & \Def= \; \Pi{\cnode \in \cnodes}.\;\rvalues[\rscope_{\cnode}]
\end{align*}
Relational values $\rval \in \rvalues$ model how concrete values, stored at
some node, depend on the concrete values of nodes in the current scope
$\rscope$. The relational value $\rbot$ again models nontermination or
unreachability and imposes no constraints on the values in its
scope. The relational value $\rerr$ models every possible concrete
value, including $\cerr$. Concrete constant values are abstracted by
relations $\rel$, which are sets of \emph{dependency vectors}
$\dmap$. A dependency vector tracks the dependency between a
constant value associated with the special
symbol $\nu$, and the values bound to the variables in
scope $\rscope$. Here, we assume that $\nu$ is never contained in $\rscope$.

We only track dependencies between constant values precisely: if a
node in the scope stores a table, we abstract it by the symbol $\rdf$
which stands for an arbitrary concrete table. We assume $\rdf$ to be
different from all other constants $\consts$. We also require that for all
$\dmap \in \rel$, $\dmap(\nu) \in \consts$.
Relational tables $\pair{\absn}{\rtable}$, denoted $\absn:\rtable$,
are defined analogously to concrete tables except that $\rtable$ now
maps call site stacks $\cstack$ to pairs of relational values
$\pair{\rval_i}{\rval_o}$. As for dependent function types, we add a
dependency variable $\absn$ to every relational table to track
input-output dependencies.
Note that we consider relational tables to be equal up to
$\alpha$-renaming of dependency variables.
%
Relational execution maps $\rmap$ assign each node $\cnode$ a relational
value with scope $\rscope_\cnode$. 


The relational semantics of a program is the relational execution map
obtained as the least fixpoint of a Galois abstraction of the concrete
transformer $\ctransname$. The formal definition is mostly straightforward, so we delegate it to 
\lessmore{the companion report~\cite[\S~B]{techreport}}{\S~\ref{appendix:relational-semantics}}.


\begin{example}
  \label{ex:relational-semantics}
  The relational execution map obtained for
  Program~\ref{ex:concrete-semantics-example} is as follows (we only
  show the entries for the nodes $\texttt{id}$, $q$, and $a$):
\begin{small}
\begin{align*}
\texttt{id} \mapsto {} & \absn:[ \tentry{q}{\{(\nu\!:1)\}}{\{(\absn\!:1, \nu\!:1)\}},\;
			 \tentry{a}{\{(\nu\!:2)\}}{\{(\absn\!:2, \nu\!:2)\}} ] \\
q \mapsto {} & \absn:[ \tentry{q}{\{(\texttt{id}\!:\rdf, \nu\!:1)\}}{\{(\texttt{id}\!:\rdf, \absn\!:1, \nu\!:1)\}}] \\
a \mapsto {} & \absn:[\tentry{a}{\{(\texttt{id}\!:\rdf, \texttt{u}\!:1, \nu\!:2)\}}{\{(\texttt{id}\!:\rdf, \texttt{u}\!:1, \absn\!:2, \nu\!:2)\}} ]
\end{align*}
\end{small}%
Each concrete value $\cval$ in the concrete execution map shown to the
right of Program~\ref{ex:concrete-semantics-example} is abstracted by
a relational value that relates $\cval$ with the values bound to the
variables that are in the scope of the node where $\cval$ was
observed.  Consider the entry for node \lstinline+id+. As expected,
this entry is a table that has seen inputs at call site stacks
identified with $q$ and $a$. The actual input stored for call site
stack $q$ is now a relation consisting of the single row $(\nu:1)$,
and similarly for call site stack $a$. As the node \lstinline+id+ has
no other variables in its scope, these input relations are simply
representing the original concrete input values $1$ and $2$.  We
associate these original values with the symbol $\nu$. The output
relation for $q$ consists of the single row $(\absn:1, \nu:1)$,
stating that for the input value $1$ (associated with $\absn$), the
output value is also $1$ (associated with $\nu$). Observe how we use
$\absn$ to capture explicitly the dependencies between the input and
output values.

The entry in the relational execution map for the node identified by
$q$ is similar to the one for \lstinline+id+, except that
the relation also has an additional entry $\texttt{id}:\rdf$.  This is
because \lstinline+id+ is in the scope of $q$. The value of
$\texttt{id}$ in the execution map is a table, which the relational
values abstract by the symbolic value $\rdf$. That is, the
relational semantics only tracks relational dependencies between
primitive values precisely whereas function values are abstracted by
$\rdf$. The relational table stored at node $a$ is similar, except
that we now also have the variable $\texttt{u}$ which is bound to value $1$. As
in the concrete execution map, the relational tables for $q$
and $a$ contain fewer entries than the table stored at
\lstinline+id+.
\end{example}

\smartparagraph{Abstraction.}
We formalize the meaning of relational execution maps in terms of a
Galois connection between $\rmaps$ and the complete lattice of sets of
concrete execution maps $\powerset(\cmaps)$.
The details of this construction and the resulting abstract
transformer of the relational semantics can be found
in~\lessmore{\cite[\S~B]{techreport}}{\S~\ref{appendix:relational-semantics}}.
%
We here focus on the key idea of the abstraction by formalizing the
intuitive meaning of relational values given above. Our formalization
uses a family of concretization functions 
$\rgamma[\rscope]: \rvalues \to \powerset((\rscope \to
\cvalues) \times \cvalues)$,
parameterized by scopes $\rscope$, that map relational values to
sets of pairs $\pair{\scmap}{\cval}$ where $\scmap$ maps the
variables in scope $\rscope$ to values in $\cvalues$:
\begin{align*}
\rgamma[\rscope](\rbot) \Def=\; & (\rscope \to \cvalues) \times \set{\bot} \ind[7] \quad \rgamma[\rscope](\rerr) \Def=\; (\rscope \to \cvalues) \times \cvalues \\
\rgamma[\rscope](\rel) \Def=\; & \setc{\pair{\scmap}{c}}{\dmap {\in} \rel \wedge \dmap(\nu) = c \land  
 \forall x {\in} \rscope.\; \scmap(x) \in \dgamma(\dmap(x))} \cup \rgamma[\rscope](\rbot)\\
\rgamma[\rscope](\absn: \rtable) \Def=\; & \{ \pair{\scmap}{\ctable}\;|\; \forall \cstack.\;
  \ctable(\cstack) = \pair{\cval_i}{\cval_o} \wedge \rtable(\cstack) = \pair{\rval_i}{\rval_o} 
  \wedge \\ 
  &\ind[4] \pair{\scmap}{\cval_i} \in \rgamma[X](\rval_i) \wedge \pair{\scmap[\absn \mapsto \cval_i]}{\cval_o} \in \rgamma[X \cup \{\absn\}](\rval_o)\} \cup \rgamma[\rscope](\rbot)
\end{align*}
Here, the function $\dgamma$, which we use to give meaning to dependency
relations, is defined by $\dgamma(c) = \set{c}$ and $\dgamma(\rdf)=\ctables$.
The meaning of relational execution maps is then given by the
function
\[\rgamma[](\rmap) \Def= \setc{\cmap}{\forall \cnode \in \cnodes.\,
    \pair{\scmap}{\cval} \in \rgamma[\rscope_\cnode](\rmap(\cnode))
    \land \scmap = \cmap \circ \nenv(\cnode) \land \cval =
    \cmap(\cnode)} \enspace.\]


\subsection{Collapsed Semantics}
\label{sec:acollapsed}

We now describe a key abstraction step in the construction of our
data flow refinement type semantics. We formalize this step in terms of
a \textit{collapsed (relational data flow) semantics}, which collapses function tables to a bounded number of entries while controlling how much stack information is being
lost, thereby allowing for different notions of call-site context
sensitivity.

\smartparagraph{Abstract domains.}
The collapsed semantics is parameterized by a finite set of
\emph{abstract stacks} $\pstack \in \pstacks$, a \emph{stack
  abstraction} function $\rho: \cstacks \to \pstacks$, and an \emph{abstract
stack concatenation} operation $\pconcat : \elabels \times \pstacks
\to \pstacks$. Stack abstraction must be homomorphic with
respect to concatenation: for all $\elabel \in \elabels$ and $\cstack
\in \cstacks$, $\rho(\elabel \cdot \cstack) = \elabel \pconcat \rho(\cstack)$.

Abstract stacks induce sets of abstract nodes and abstract
environments following the same structure as in the concrete semantics
\begin{align*}
\pnode \in \pnodes \Def= {} & \penodes \cup \pvnodes &&&
\penodes \Def= {} & \locs \times \penvironments &&&
\pvnodes \Def= {} & \vars \times \penvironments \times \pstacks &&&
\penv \in \penvironments \Def= {} & \vars \pto_{\mi{fin}} \pvnodes 
\end{align*}
We lift $\rho$ from stacks to nodes and environments in the expected
way. In particular, for variable nodes
$\cnode_x = \cns{\elabel}{\cenv}{\cstack}$, we recursively define
$\rho(\cnode_x) \Def= \cns{\elabel}{(\rho \circ
  \cenv)}{\rho(\cstack)}$. Analogous to concrete nodes, we define the
scope of an abstract node as $\rscope_{\pnode} = \dom(\nenv(\pnode))$.

The definition of values and execution maps remains largely
unchanged. In particular, dependency vectors and relational values are
inherited from the relational semantics. Only the definition of tables
changes, which now range over abstract stacks rather than concrete stacks:
\[\begin{array}{c}
\pval \in \pvalues[\rscope] ::=  \pbot \mid \perr \mid \rel \mid x:\ptable \ind 
x:\ptable \in \ptables[\rscope] \Def= \Sigma \absn \in (\vars
\setminus \pscope).\, \pstacks
\to \pvalues[\pscope] \times \pvalues[\rscope \cup \{\absn\}] \\[1.2ex]
\pmap \in \pmaps \Def= \Pi{\pnode \in \pnodes}.\;\pvalues[\pscope_{\pnode}]
\end{array}\]
%


\begin{example}
\label{ex:collapsed-semantics}
We use Program~\ref{ex:concrete-semantics-example} again to provide
intuition for the new semantics. To this end, we first define a family
of sets of abstract stacks which we can use to instantiate the
collapsed semantics. Let $k \in \NN$ and define
$\pstacks^k = \elabels^k$ where $\elabels^k$ is the set of all
sequences of locations of length at most $k$. Moreover, for
$\pstack \in \pstacks^k$ define
$\ell \pconcat \pstack = (\ell \cdot \pstack)[0,k]$ where
$(\ell_1\dots\ell_n)[0,k]$ is $\epsilon$ if $k=0$, $\ell_1\dots\ell_k$
if $0 < k < n$, and $\ell_1\dots\ell_n$ otherwise. Note that this
definition generalizes the definitions of $\pstacks^0$ and
$\pstacks^1$ from Example~\ref{ex:lia-types}. An abstract stack in
$\pstacks^k$ only maintains the return locations of the $k$ most recent
pending calls, thus yielding a $k$-context-sensitive analysis. In particular, instantiating our collapsed semantics with $\pstacks^0$ yields a
context-insensitive analysis. Applying this analysis to
Program~\ref{ex:concrete-semantics-example}, we obtain the following
\emph{collapsed execution map}, which abstracts the relational
execution map for this program shown in
Example~\ref{ex:relational-semantics}:
%
\begin{align*}
\small
%
%
\texttt{id} \mapsto\; & \ptab[\absn]{\{(\nu\!:1), (\nu\!:2)\}}{\{(\absn\!:1, \nu\!:1), (\absn\!:2, \nu\!:2)\}} \\
q \mapsto\; &\ptab[\absn]{\{(\texttt{id}\!:\rdf,\nu\!:1)\}}{\{(\texttt{id}\!:\rdf,\absn\!:1, \nu\!:1)\}} \\
a \mapsto\; &\ptab[\absn]{\{(\texttt{id}\!:\rdf,\texttt{u}:1,\nu\!:2)\}}{\{(\texttt{id}\!:\rdf,\texttt{u}:1,\absn\!:2, \nu\!:2)\}}
\end{align*}
%
Again, we only show some of the entries and for economy of
notation, we omit the abstract stack $\epsilon$ in the singleton tables. Since the
semantics does not maintain any stack information, the
\emph{collapsed} tables no longer track where functions are being
called in the program. For instance, the entry for \lstinline+id+
indicates that \lstinline+id+ is a function called at some concrete
call sites with inputs 1 and 2. While the precise call site stack
information of \lstinline+id+ is no longer maintained, the symbolic
variable $\absn$ still captures the relational dependency between the
input and output values for all the calls to \lstinline+id+.

If we chose to maintain more information in $\pstacks$, the collapsed
semantics is also more precise. For instance, a 1-context-sensitive
analysis is obtained by instantiating the collapsed semantics with
$\pstacks^1$. Analyzing Program~\ref{ex:concrete-semantics-example}
using this instantiation of the collapsed semantics yields a collapsed
execution map that is isomorphic to the relational execution map shown
in Example~\ref{ex:relational-semantics} (i.e., the analysis does not
lose precision in this case).

\end{example}


\smartparagraph{Abstraction.}
Similar to the relational semantics, we formalize the meaning of the
collapsed semantics in terms of a Galois connection between the
complete lattices of relational execution maps $\rmaps$ and collapsed
execution maps $\pmaps$. Again, we only provide the definition of the
right adjoint $\pgammac: \pmaps \to \rmaps$ here. Similar to the relational
semantics, $\pgammac$ is defined in terms of a family of concretization
functions $\pgamma[\rscope]: \pvalues[\rscope] \to \rvalues[\rscope]$
mapping collapsed values to relational values:
\begin{gather*}
\pmgamma(\pmap) \Def = \Lambda \cnode \in \cnodes.\,
  (\pgamma[\rscope_{\rho(\cnode)}] \circ \pmap \circ \rho)(\cnode) \ind
\pgamma[\rscope](\pbot) \Def= \rbot \ind \pgamma[\rscope](\perr) \Def=
\rerr \ind
\pgamma[\rscope](\rel) \Def=\; \rel
\\
\pgamma[\rscope](\absn:\ptable) \Def=\; \absn: \Lambda \cstack \in
\cstacks.\, \Mlet \, \pair{\pval_i}{\pval_o} = (\ptable \circ
\rho)(\cstack) \, \Min
\pair{\pgamma[\rscope](\pval_i)}{\pgamma[\rscope \cup \{\absn\}](\pval_o)}
\end{gather*}
%
More details on the collapsed
semantics including its abstract transformer are provided in
\lessmore{the companion report~\cite[\S~C]{techreport}}{\S~\ref{appendix:collapsed-semantics}}.
\section{Parametric Data Flow Refinement Type Semantics}
\label{sec:arefinement}

At last, we obtain our parametric data flow refinement
type semantics from the collapsed relational semantics by abstracting
dependency relations over concrete constants by abstract relations
drawn from some relational abstract domain. In
\S~\ref{sec:typing-correctness}, we will then show that our data flow
refinement type system introduced in \S~\ref{sec:types} is sound and
complete with respect to this abstract semantics. Finally, in
\S~\ref{sec:type-inference} we obtain a generic type
inference algorithm from our abstract semantics by using
widening to enforce finite convergence of the fixpoint iteration sequence.


\subsection{Type Semantics}

\smartparagraph{Abstract domains.}
The abstract domains of our data flow refinement type semantics
build on the set of types $\tvalues$ defined in
\S~\ref{sec:types}. Recall that $\tvalues$ is
parametric with a set of abstract stacks $\pstacks$ and a complete lattice of basic
refinement types
$\langle \bdomain, \bord, \bbot, \btop, \bjoin, \bmeet
\rangle$, which can be viewed as a union of sets $\bdomain_\pscope$ for each
scope $\pscope$. We require that each $\bdomain_\pscope$ forms a
complete sublattice of $\bdomain$ and that there exists a family of Galois connections\footnote{In fact, 
we can relax this condition and only require a concretization function, thus supporting
abstract refinement domains such as polyhedra~\cite{DBLP:conf/popl/CousotH78}.}
$\pair{\balpha_\pscope}{\bgamma_\pscope}$ between
$\bdomain_\pscope$ and the complete lattice of dependency
relations
$\langle \rels[], \subseteq, \emptyset, \dmaps_{\pscope}, \cup, \cap
\rangle$.  For instance, for the domain
$\bdomain[\lia]$ from Example~\ref{ex:lia-basic-types}, the concretization
function $\bgamma_\pscope$ is naturally obtained from the satisfaction
relation for linear integer constraints.

We lift the partial order $\bord$ on basic refinement types to a
preorder $\tvord[]$ on types as follows:
\begin{align*}
\tval_1 \tvord[] \tval_2 \Def\iff {} & \tval_1 = \tbot \vee \tval_2 = \terr \vee
	(\tval_1, \tval_2 \in \bdomain \land \tval_1 \bord \tval_2)
\vee  (\tval_1, \tval_2 \in \ttables[] \wedge \forall \pstack.\; 
			\tval_1(\pstack) \;\tvecord\; \tval_2(\pstack)) 
\end{align*}
By implicitly taking the quotient of types modulo $\alpha$-renaming of dependency variables
in function types we obtain a partial order that induces a complete
lattice $\langle\tvalues, \tvord[], \tbot, \terr, \tvjoin[],
\tvmeet[]\rangle$. We lift this partial order point-wise to
refinement type maps $\tmaps \Def= \Pi{\pnode \in
  \pnodes}.\;\tvalues[\pscope_{\pnode}]$ and obtain a complete lattice
$\langle\tmaps, \tmord, \tmap_{\bot}, \tmap_{\top}, \tmjoin, \tmmeet\rangle$.

\smartparagraphnb{Galois connection.}  The meaning of refinement types
is given by a function $\tvgamma: \tvalues \to \pvalues$ that
extends $\bgamma$ on basic refinement types.  This function is then
lifted to type maps as before:
\[\begin{array}{c}
\tvgamma[\pscope](\tbot) \Def= \pbot \ind[1] \tvgamma[\pscope](\terr) \Def= \perr  
\ind[1]
\tvgamma[\pscope](\absn:\ttable) \Def= \absn:\Lambda
\pstack.\; \Mlet \, \pair{\tval_i}{\tval_o} = \ttable(\pstack) \, \Min
\pair{\tvgamma[\pscope](\tval_i)}{\tvgamma[\pscope
\cup \{\absn\}](\tval_o)} \\[1ex]
\tvgamma[](\tmap) \Def= \Lambda
\pnode.\, (\tvgamma[X_{\pnode}] \circ \tmap)(\pnode)
\end{array}
\]


\smartparagraphnb{Abstract domain operations.}
We briefly revisit the abstract domain operations on types introduced
in \S~\ref{sec:types} and provide their formal specifications needed
for the correctness of our data flow refinement type semantics. 

We define these operations in terms of three simpler operations on
basic refinement types. First, for $x,y \in \pscope
\cup \{\nu\}$ and $\bval \in \bdomain_\pscope$, let
$\tequality{\bval}{x}{y}$ be an abstraction of the concrete operation
that strengthens the dependency relations described by $\bval$
with an equality constraint $x=y$. That is, we require
$\bgamma_\pscope(\tequality{\bval}{x}{y}) \supseteq \setc{\dmap \in \bgamma_\pscope(\bval)}{\dmap(x) = \dmap(y)}$.
Similarly, for $c \in \consts \cup \{\rdf\}$ we assume that
$\tequality{\bval}{x}{c}$ is an abstraction of the concrete operation
that strengthens $\bval$ with the equality $x=c$, i.e. we require
$\bgamma_\pscope(\tequality{\bval}{x}{c}) \supseteq \setc{\dmap \in \bgamma_\pscope(\bval)}{\dmap(x) = c}$. Lastly,
we assume an \emph{abstract variable substitution} operation,
$\tsubst{\bval}{x}{\nu}$, which must be an abstraction of variable
substitution on dependency relations: $\bgamma_\pscope(\tsubst{\bval}{x}{\nu})
\supseteq \setc{\dmap[\nu \mapsto c, x
  \mapsto \dmap(\nu)]}{\dmap \in \bgamma_\pscope(\bval), c \in \consts}$. We lift these operations
to general refinement types $\tval$ in the expected way. For instance,
we define
\[\tequality{\tval}{x}{c} \Def= \begin{cases}
    \tequality{\tval}{x}{c} & \kw{if}\; \tval \in \bdomain[]\\
    z:\Lambda \pstack.\, \pair{
      \tequality{\tin{\ttable(\pstack)}}{x}{c}}{\tequality{\tout{\ttable(\pstack)}}{x}{c}}
    & \kw{if}\; \tval = z:\ttable \land x \neq \nu\\
    \tval & \kw{otherwise}
  \end{cases}\]
Note that in the second case of the definition, the fact that $x$ is
in the scope of $\tval$ implies $x \neq z$.

We then define the function that yields the abstraction of a constant $c
\in \consts$ as $\tconst \Def= \tequality{\btop}{\nu}{c}$.
The strengthening operation $\tupdate{\tval}{x}{\tval'}$ is defined
recursively over the structure of types as follows:
\[
  \tupdate{\tval}{x}{\tval'} \Def= \begin{cases}
    \tbot & \kw{if}\; \tval' = \tbot\\
    \tequality{\tval}{x}{\rdf} & \kw{else\;if}\; \tval' \in \ttables[]\\
    \tval \bmeet \tsubst{\tval'}{x}{\nu} & \kw{else\;if}\; \tval \in \bdomain[]\\
    z:\Lambda \pstack.\, \pair{ \tupdate{\tin{\ttable(\pstack)}}{x}{\tval'}}{\tupdate{\tout{\ttable(\pstack)}}{x}{\tval'}} & \kw{else\;if}\;
    \tval = z:\ttable\\
    \tval & \kw{otherwise}
  \end{cases}
\]
Finally, we lift $\tupdate{\tval}{x}{\tval'}$ to the operation
$\tstrengthen{\tval}{\tenv}$ that
strengthens $\tval$ with respect to a type environment
$\tenv$ by defining $\tstrengthen{\tval}{\tenv} \Def= \tvmeetb_{x \in
  \dom(\tenv)} \tupdate{\tval}{x}{\tenv(x)}$.

\smartparagraph{Abstract propagation and transformer.}
%
%
The propagation operation $\tiprop$ on refinement types, shown in
Fig.~\ref{fig:type-propagation}, is then obtained from
$\ciprop$ in Fig.~\ref{fig:concrete-prop} by
replacing all operations on concrete values with their counterparts
on types.
\begin{figure}[t]
\[\begin{array}{@{}l@{}l|l}
\absn:\ttable_1 \tiprop \absn:\ttable_2 \Def={} & 
\Mlet \ttable = \Lambda \pstack.	\\[0.4ex]
&\sind
 \Mlet \pair{\tval_{1i}}{\tval_{1o}} = \ttable_1(\pstack) \Mend 
       \pair{\tval_{2i}}{\tval_{2o}} = \ttable_2(\pstack) 
& x:\ttable \tiprop \tbot \Def={} \pair{x:\ttable}{x:\ttable_\bot} \\[0.4ex]
&\sind[2] \pair{\tval'_{2i}}{\tval'_{1i}} = \tval_{2i} \tiprop \tval_{1i} &
x:\ttable \tiprop \terr \Def={} \pair{\terr}{\terr} \\[0.4ex]  
&\sind[2] \pair{\tval'_{1o}}{\tval'_{2o}} = \rupdate{\tval_{1o}}{\absn}{\tval_{2i}} \tiprop \rupdate{\tval_{2o}}{\absn}{\tval_{2i}} &
\tval_1 \tiprop \tval_2 \Def={} \pair{\tval_1}{\tval_1 \tvjoin[] \tval_2} \\[0.4ex]
&\sind \Min \pair{ \pair{\tval'_{1i}}{\tval_{1o} \tvjoin[] \tval'_{1o}}  }{ \pair{\tval'_{2i}}{\tval_{2o} \tvjoin[] \tval'_{2o}  }} 
& \ind\textbf{(otherwise)} \\[0.4ex]
&\;
\Min \pair{\absn:\Lambda \pstack.\; \pi_1(\ttable(\pstack))}
          {\absn:\Lambda \pstack.\; \pi_2(\ttable(\pstack))}  &
\end{array}\]
\caption{Abstract value propagation in the data flow refinement type semantics}
\label{fig:type-propagation}
\vspace{-1em}
\end{figure}
In a similar fashion, we obtain the new abstract transformer
$\ttransname$ for the refinement type semantics from the concrete
transformer $\ctransname$. We again use a state monad to hide the
manipulation of type execution maps. The corresponding operations are
variants of those used in the concrete transformer, which are
summarized in Fig.~\ref{fig:refinement-transformer}. The abstract
transformer closely resembles the concrete one. The only major
differences are in the cases for constant values and variables. Here,
we strengthen the computed types with the relational information about
the variables in scope obtained from the current environment $\tenv =
\tmap \circ \penv$.

\begin{figure}[t]
\[\begin{array}[t]{@{}l|l}
\begin{array}[t]{@{}l}
\ttrans{c_\elabel}(\penv, \pstack) \Def= \\
\sind \Mdo \pnode = \cn{\elabel}{\penv} \,\Mend \tenv \mto
\menv(\penv) \,\Mend
\tval' \mto \mupd{\pnode}{\rstrengthen{\tconst}{\tenv}} \\[0.4ex]
\sind \Mreturn \tval'\\[2ex]
\ttrans{x_\elabel}(\penv, \pstack) \Def= \\
\sind \Mdo  \pnode = \cn{\elabel}{\penv} \, \Mend \, \tval \mto \mread{\pnode}
\, \Mend \, \pnode_x = \penv(x)
\, \Mend \, \tenv \mto \menv(\penv) \, \\[0.4ex]
\sind[6] \tval' \mto \mupd{\pnode_x,\,
\pnode}{\rstrengthen{\tenv(x)\tvareq{x}}{\tenv} \tiprop \rstrengthen{\tval\tvareq{x}}{\tenv}}\\
\sind \Mreturn \tval'\\[2ex]
%
\ttrans{(e_\elabela \, e_\elabelb)_\elabel}(\penv, \pstack) \Def=  \\
\sind 	\Mdo  \pnode = \cn{\elabel}{\penv} \,\Mend \pnode_\elabela =
\cn{\elabela}{\penv}
		\,\Mend \pnode_\elabelb = \cn{\elabelb}{\penv} \,
\Mend \tval \mto \mread{\pnode}\\[0.4ex]
\sind[6] \tval_\elabela \mto \ttrans[k]{e_\elabela}(\penv, \pstack) \,\Mend
\massert{\tval_\elabela \in \ttables[]}\\[0.4ex]
\sind[6] \tval_\elabelb \mto \ttrans[k]{e_\elabelb}(\penv, \pstack)\\[0.4ex]
\sind[6] \tval'_\elabela,\, \absn:[\tentry{\elabela \pconcat \pstack}{\tval_\elabelb'}{\tval'}] = \tval_\elabela \tiprop
\absn:[\tentry{\elabela \pconcat \pstack}{\tval_\elabelb}{\tval}]\\[0.4ex]
\sind[6] \tval''	\mto \mupd{\pnode_\elabela,\, \pnode_\elabelb,\,
\pnode}{\tval'_\elabela,\, \tval'_\elabelb,\, \tval'}\\[0.4ex]
\sind \Mreturn \tval''
\end{array}
&
\begin{array}[t]{@{}l}
\ttrans{(\lambda x.e_\elabela)_\elabel}(\penv, \pstack) \Def= \\ 
\sind	\Mdo \pnode = \cn{\elabel}{\penv} \,\Mend \tval \mto
\mupd{\pnode}{x:\ttable_{\bot}} \\[0.5ex]
\sind[6] \tval' \mto \Mfor \tval \;\Mdo \tstepbody(x,
e_\elabela, \penv, \tval)\\[0.5ex]
\sind[6] \tval'' \mto \mupd{\pnode}{\tval'}\\[0.5ex] 
\sind[2] \Mreturn \tval'' \\[2ex]
\tstepbody(x, e_\elabela, \penv, \tval)(\pstack') \Def= \\
\sind  \Mdo  \pnode_x = \cns{x}{\penv}{\pstack'} \,\Mend
                  \penv_\elabela = \penv.x\!:\!\pnode_x \,\Mend
\pnode_\elabela = \cn{\elabela}{\penv_\elabela}\\[0.5ex]
\sind[6] \tval_x \mto \mread{\pnode_x} \,\Mend \tval_\elabela \mto \ttrans[k+1]{e_\elabela}(\penv_\elabela,
\pstack') \\[0.5ex]
\sind[6]  x:[\ttab[\pstack']{\tval_x'}{\tval_\elabela'}], \tval' =
\\[0.5ex]
\sind[9]  x:[\ttab[\pstack']{\tval_x}{\tval_\elabela}] \tiprop \restr{\tval}{\pstack'} \\[0.5ex]
\sind[6]  \mupd{\pnode_x,\, \pnode_\elabela}{\tval'_x,\, \tval'_\elabela} \\[0.5ex]
\sind  \Mreturn \tval'
\end{array}
\end{array}\]
\caption{Abstract transformer for the data flow refinement type semantics\label{fig:refinement-transformer}}
\vspace{-1em}
\end{figure}

\smartparagraph{Abstract semantics.}
We identify abstract properties $\tdomain$ in the data flow refinement type
semantics with type maps, $\tdomain \Def= \tmaps$ and define $\gamma:
\tdomain \to \cdomain$, which maps abstract to concrete properties by
$\gamma \Def= \rgamma[] \circ \pgammac \circ \tvgamma[]$. 

\begin{lemma}~\label{thm:t-meet-morph-values}
The function $\gamma$ is a complete meet-morphism between $\tdomain$ and
$\cdomain$.
\end{lemma}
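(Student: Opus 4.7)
The plan is to exhibit $\gamma$ as a composition of three complete meet-morphisms and invoke the fact that complete meet-morphisms compose. By definition, $\gamma = \rgamma[] \circ \pgammac \circ \tvgamma[]$. The two outer functions $\rgamma[]$ and $\pgammac$ are the right adjoints of the Galois connections established in \S~\ref{sec:arelational} and \S~\ref{sec:acollapsed}, respectively, so by the characterization of Galois connections recalled in the preliminaries each of them is already a complete meet-morphism. Hence it suffices to prove that $\tvgamma[]: \tmaps \to \pmaps$ is a complete meet-morphism from $\langle \tmaps, \tmord\rangle$ to $\langle \pmaps, \pmord\rangle$.

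Because both $\tvord[]$ and $\pvord[]$ are lifted pointwise from values to maps, and $\tvgamma[]$ is itself defined pointwise on nodes, the problem reduces, for each scope $X$, to showing that the value-level concretization $\tvgamma[X]: \tvalues \to \pvalues$ is a complete meet-morphism, and in particular sends $\terr$ to $\perr$. The top case is immediate from the first clause of the definition of $\tvgamma[X]$. For a nonempty set $V \subseteq \tvalues$ with meet $\tval = \tvmeetb V$, I would proceed by case analysis on the shape of $\tval$ using the definition of $\tvord[]$. If $\tval = \tbot$ (i.e.\ some element of $V$ is $\tbot$ or $V$ contains incompatible shapes so that its glb collapses), the claim follows from the clauses $\tvgamma[X](\tbot) = \pbot$ and from the fact that incompatible shapes of values in $V$ force the meet in $\pvalues$ to also be $\pbot$. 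If all elements of $V$ are basic refinements in $\bdomain_X$, then $\tvmeetb V = \bmeetb V$ in $\bdomain_X$, and the claim reduces to $\bgamma_X$ being a complete meet-morphism, which holds because $(\balpha_X, \bgamma_X)$ is a Galois connection. If all elements of $V$ are function types $x{:}\ttable_i$ sharing the same dependency variable $x$ (after $\alpha$-renaming), then by definition of $\tvjoin[]$/$\tvmeet[]$ on tables the meet is $x{:}\ttable$ with $\tin{\ttable(\pstack)} = \tvmeetb_i \tin{\ttable_i(\pstack)}$ and $\tout{\ttable(\pstack)} = \tvmeetb_i \tout{\ttable_i(\pstack)}$; the desired equality then follows by coinduction on the recursive clause defining $\tvgamma[X]$ on function types, using the induction hypothesis on the input and output components (whose scopes are $X$ and $X \cup \{x\}$, respectively).

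The harder part is the coinductive step on function types, because $\tvgamma[X](x{:}\ttable)$ is itself a collapsed function value whose components lie in different scopes and whose meet in $\pvalues$ has to be compared componentwise and per abstract stack $\pstack$. Here I would appeal to the pointwise definition of the order on $\pvalues$ (tables are ordered per entry, inputs and outputs pointwise) and to the well-foundedness induced by the scope parameter: the induction measures decrease strictly on the size of the constant part of the type and on the nesting depth of function types, so a standard induction (not a genuine coinduction) suffices under the mild assumption used throughout the paper that type values have bounded nesting. Putting these pieces together, $\tvgamma[X]$ preserves arbitrary meets and sends $\terr$ to $\perr$, hence $\tvgamma[]$ is a complete meet-morphism on $\tmaps$, and $\gamma$ is a complete meet-morphism on $\tdomain$ as required.
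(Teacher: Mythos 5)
Your proposal is correct and follows essentially the same route as the paper: the paper's proof likewise reduces the claim to showing that $\tvgamma[]$ is a complete meet-morphism by the same case-analysis-and-depth-induction argument used for Lemma~\ref{lem:rgamma-meet-morphism} and Lemma~\ref{lemma:coll-v-galois}, relying on the definition of $\tvgamma[]$ and the Galois connection between basic refinement types and dependency relations, with the outer concretizations $\pgammac$ and $\rgamma[]$ already known to be complete meet-morphisms as right adjoints. Your more explicit treatment of the composition and of the pointwise lifting fills in details the paper leaves implicit, but it is the same argument.
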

It follows that $\gamma$ induces a Galois connection between concrete
and abstract properties.
The data flow refinement semantics $\tcoll: \Exp \to \tdomain$
is then defined as the least fixpoint of $\ttransname$:
\[\tcoll[e] \Def= \mathbf{lfp}^{\tmord}_{\tmap_{\bot}} \Lambda
  \tmap.\; \Mlet \pair{\_}{{\tmap}'} =
  \ttrans[]{e}(\epsilon,\epsilon)(\tmap) \,\Min {\tmap}'\]
%

\begin{theorem}\label{thm:refinement-soundness}
The refinement type semantics is sound, i.e.
$\ccoll\llbracket e \rrbracket \subseteq \gamma(\tcoll[e])$.
\end{theorem}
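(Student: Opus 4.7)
The plan is to factor the argument through the two intermediate semantics of \S~\ref{sec:arelational} and \S~\ref{sec:acollapsed} and apply the standard fixpoint transfer theorem at each of the three abstraction stages. Since $\gamma = \rgamma[] \circ \pgammac \circ \tvgamma[]$, it suffices to prove three soundness facts, one per stage, and compose them using monotonicity of the concretization functions. Concretely, I would chain the inequalities $\csem\llbracket e\rrbracket \in \rgamma[](\rcoll\llbracket e\rrbracket)$, $\rcoll\llbracket e\rrbracket \rmord \pgammac(\pcoll[e])$, and $\pcoll[e] \pmord \tvgamma[](\tcoll[e])$, then apply $\rgamma[]$ and $\pgammac$ monotonically to conclude $\csem\llbracket e\rrbracket \in \gamma(\tcoll[e])$, whence $\ccoll\llbracket e\rrbracket = \set{\csem\llbracket e\rrbracket} \subseteq \gamma(\tcoll[e])$.

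The key technical ingredient at each stage is a \emph{transformer simulation} lemma of the form $F^c \circ \gamma' \sqsubseteq \gamma' \circ F^a$, where $F^c$ and $F^a$ are the concrete and abstract step functions and $(\alpha',\gamma')$ is the corresponding Galois connection. For the refinement-type stage this says $\ptransname\llbracket e\rrbracket(\rho(\penv),\rho(\pstack)) \circ \tvgamma[] \pmord \tvgamma[] \circ \ttransname\llbracket e\rrbracket(\penv,\pstack)$, and analogously for the other two stages. Each such lemma is proved by structural induction on $e$, following the cases in Fig.~\ref{fig:refinement-transformer}. The constant and variable cases reduce to the specifications of $\tconst$, $\tequality{\cdot}{x}{y}$, and $\tsubst{\cdot}{x}{\nu}$ as sound abstractions of their concrete counterparts on dependency relations, together with soundness of $\rstrengthen{\cdot}{\tenv}$ and $\tupdate{\cdot}{x}{\cdot}$. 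The application and lambda cases reduce further to a propagation-soundness sublemma asserting that $\ciprop$ (Fig.~\ref{fig:concrete-prop}) is soundly abstracted by $\tiprop$ (Fig.~\ref{fig:type-propagation}), which in turn uses soundness of $\tvjoin[]$ and the homomorphism requirement $\rho(\elabel \cdot \cstack) = \elabel \pconcat \rho(\cstack)$ on $\pconcat$.

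With the three simulation lemmas in hand, I would invoke the Cousot–Cousot fixpoint transfer theorem at each stage: given monotone transformers $F^c$ and $F^a$ on complete lattices related by a Galois connection $(\alpha',\gamma')$, the inequality $F^c \circ \gamma' \sqsubseteq \gamma' \circ F^a$ entails $\mathrm{lfp}(F^c) \sqsubseteq \gamma'(\mathrm{lfp}(F^a))$. Monotonicity of $\ctransname$ is Lemma~\ref{step-monotone}, and the abstract transformers are shown monotone by entirely analogous arguments (each abstract propagation being monotone and increasing as in Lemma~\ref{prop-monotone}). The needed Galois connections are guaranteed by Lemma~\ref{thm:t-meet-morph-values} for the refinement-type stage together with the Galois connections constructed in \S~\ref{sec:arelational}–\S~\ref{sec:acollapsed} for the other two stages. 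Composition then yields the desired conclusion.

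The main obstacle will be the lambda-abstraction case of the simulation lemma. Here the abstract transformer analyses the body once per abstract stack $\pstack' \in \pstacks$ at which the function has been called, whereas the concrete transformer analyses it once per concrete stack $\cstack'$. Soundness therefore requires merging all concrete entries with $\rho(\cstack')=\pstack'$ into a single abstract entry and showing that the alternating co-/contravariant propagation performed by $\tiprop$ on inputs and outputs over-approximates the union of the corresponding concrete propagations. The subtlety is that the propagation on the output component invokes $\tupdate{\cdot}{x}{\tval_i}$ in order to bind the output's dependency variable to the input, so the soundness of $\tiprop$ must be proved jointly with the soundness of $\tupdate{\cdot}{x}{\cdot}$ against the dependency-relation semantics encoded by $\rgamma[\rscope]$, and the induction on types must carry through the scope extension $\rscope \cup \set{\absn}$ at each recursive call.
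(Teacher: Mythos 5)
Your proposal follows essentially the same route as the paper: the paper's proof of Theorem~\ref{thm:refinement-soundness} is precisely the composition of the Galois connections $\rgamma[]$, $\pgammac$, $\tvgamma[]$ with stage-wise soundness of the abstract transformers (Theorems~\ref{thm:relational-soundness} and~\ref{thm:collapsed-semantics-soundness}, plus the refinement-type stage via Lemma~\ref{thm:t-meet-morph-values}), each established by structural induction on the transformer definitions and fixpoint transfer --- what the paper compresses into ``the calculational design of our abstraction and the properties of the involved Galois connections.'' Your identification of the lambda/propagation case and the joint induction with strengthening as the delicate points is consistent with the paper's development.
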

The soundness proof follows from the calculational design of our
abstraction and the properties of the involved Galois connections.

We say that a type $\tval$ is \emph{safe} if it does not contain
$\ttop$, i.e. $\tval \neq \ttop$ and if $\tval=x:\ttable$ then for all
$\pstack \in \pstacks$, $\tin{\ttable(\pstack)}$ and
$\tout{\ttable(\pstack)}$ are safe. Similarly, a type map $\tmap$ is
safe if all its entries are safe. The next lemma states that safe type
maps yield safe properties. It follows immediately from the
definitions of the concretizations.

\begin{lemma}
  For all safe type maps $\tmap$, $\gamma(\tmap) \subseteq \psafe$.
\end{lemma}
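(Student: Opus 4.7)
The plan is to prove this by composing intermediate safety-preservation lemmas, one for each of the three concretization functions whose composition yields $\gamma$, namely $\tvgamma[]$, $\pgammac$, and $\rgamma[]$. The key observation is that the distinguished error elements $\ttop$, $\perr$, and $\rerr$ form a chain of preservations: $\tvgamma[\pscope](\ttop) = \perr$ (the only way to produce $\perr$ at the collapsed level from a type), $\pgamma[\pscope](\perr) = \rerr$ (the only way to produce $\rerr$ at the relational level), and $\rgamma[\rscope](\rerr)$ is the only clause that admits $\cerr$ (and unsafe tables) in its right projection. So if a type avoids $\ttop$ recursively, every concrete value in its concretization must be safe.

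First I would introduce a uniform notion of safety at the two intermediate levels, mirroring the definition already given for concrete values and types: a collapsed value $\pval$ is \emph{safe} if $\pval \neq \perr$ and, whenever $\pval = x:\ptable$, for every $\pstack \in \pstacks$ both $\tin{\ptable(\pstack)}$ and $\tout{\ptable(\pstack)}$ are safe; similarly for relational values $\rval$, with the clause $\rval \neq \rerr$ and the same recursive requirement on tables. Basic refinement relations $\rel \in \rels$ and base types $\bval \in \bdomain$ are trivially safe since their concretizations only produce constants.

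Second, I would prove by structural induction on $\tval$ the implication ``$\tval$ safe $\Rightarrow$ $\tvgamma[\pscope](\tval)$ safe'' using the defining clauses of $\tvgamma[\pscope]$: the cases for $\tbot$ and basic types are immediate, while for $\absn:\ttable$ the inductive hypothesis applies to each $\tin{\ttable(\pstack)}$ and $\tout{\ttable(\pstack)}$. An analogous induction on $\pval$ establishes ``$\pval$ safe $\Rightarrow$ $\pgamma[\pscope](\pval)$ safe'' using the definition of $\pgamma[\pscope]$. Finally, for the relational-to-concrete step, I would prove by structural induction on $\rval$ that if $\rval$ is safe then for every $\pair{\scmap}{\cval} \in \rgamma[\rscope](\rval)$ the concrete value $\cval$ is safe: the clauses for $\rbot$ and $\rel$ produce $\bot$ or primitive constants (both safe), while the clause for $\absn:\rtable$ only produces tables whose entries come from the concretizations of safe subvalues by the inductive hypothesis.

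Finally I lift these three lemmas to execution maps: if $\tmap$ is safe then every node of $\tvgamma[](\tmap) = \Lambda \pnode.\,(\tvgamma[\pscope_\pnode] \circ \tmap)(\pnode)$ is mapped to a safe collapsed value, the same for $\pgammac$ applied pointwise, and then every $\cmap \in \rgamma[](\pgammac(\tvgamma[](\tmap)))$ maps every concrete node $\cnode$ to a safe concrete value. This is precisely the membership condition for $\psafe$, so $\gamma(\tmap) \subseteq \psafe$. The only mild subtlety is bookkeeping the scopes and dependency variables as the induction descends through function tables, but no deep reasoning is required beyond unfolding definitions.
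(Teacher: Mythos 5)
Your proposal is correct and follows essentially the paper's own route: the paper dispenses with this lemma by noting it "follows immediately from the definitions of the concretizations," and your layered argument---intermediate safety notions for collapsed and relational values plus structural induction through $\tvgamma[]$, $\pgammac$, and $\rgamma[]$, lifted pointwise to maps---is exactly that observation spelled out in detail. No gap; you have simply made explicit the unfolding of definitions that the paper leaves implicit.
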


A direct corollary of this lemma and the soundness theorems for
our abstract semantics is that any safe approximation of the
refinement type semantics can be used to prove program safety.

\begin{corollary}
  \label{cor:safety}
  For all programs $e$ and safe type maps $\tmap$, if $\tcoll[e]
  \tmord \tmap$, then $e$ is safe.
\end{corollary}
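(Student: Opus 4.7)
The plan is to chain three facts: (i) Theorem~\ref{thm:refinement-soundness}, (ii) monotonicity of $\gamma$, and (iii) the preceding lemma that safe type maps concretize into $\psafe$. Given $\tcoll[e] \tmord \tmap$ with $\tmap$ safe, the goal is to establish $\ccoll[e] \subseteq \psafe$, which is exactly the definition of $e$ being safe.

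First I would note that $\gamma = \rgamma[] \circ \pgammac \circ \tvgamma[]$ is monotone. This is essentially free: by Lemma~\ref{thm:t-meet-morph-values}, $\gamma$ is a complete meet-morphism, and any complete meet-morphism is monotone; equivalently, $\gamma$ is the right adjoint of a Galois connection between $\tdomain$ and $\cdomain$, and right adjoints are monotone. Applying this to the hypothesis $\tcoll[e] \tmord \tmap$ yields $\gamma(\tcoll[e]) \subseteq \gamma(\tmap)$.

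Next I would invoke the preceding lemma, which gives $\gamma(\tmap) \subseteq \psafe$ from safety of $\tmap$. Combined with Theorem~\ref{thm:refinement-soundness}, which supplies $\ccoll[e] \subseteq \gamma(\tcoll[e])$, the three inclusions compose into the chain
\[
\ccoll[e] \;\subseteq\; \gamma(\tcoll[e]) \;\subseteq\; \gamma(\tmap) \;\subseteq\; \psafe,
\]
establishing that $e$ is safe.

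There is essentially no obstacle here, as the corollary is a direct structural consequence of soundness plus monotonicity of concretization; the only minor subtlety is ensuring that monotonicity of $\gamma$ is justified explicitly (via Lemma~\ref{thm:t-meet-morph-values}), rather than silently taken for granted.
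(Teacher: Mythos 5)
Your proof is correct and follows essentially the same route as the paper, which presents the corollary as a direct consequence of Theorem~\ref{thm:refinement-soundness}, the lemma that safe type maps satisfy $\gamma(\tmap) \subseteq \psafe$, and monotonicity of $\gamma$ (guaranteed by Lemma~\ref{thm:t-meet-morph-values}, since $\gamma$ is the right adjoint of the induced Galois connection). The inclusion chain $\ccoll\llbracket e \rrbracket \subseteq \gamma(\tcoll[e]) \subseteq \gamma(\tmap) \subseteq \psafe$ is exactly the intended argument.
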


\subsection{Soundness and Completeness of Type System}
\label{sec:typing-correctness}

It is worth to pause for a moment and appreciate the resemblance
between the subtyping and typing rules introduced in
\S~\ref{sec:types} on one hand, and the abstract propagation operator
$\tiprop$ and abstract transformer $\ttransname$ on the other hand. We
now make this resemblance formally precise by showing that the type
system exactly captures the safe fixpoints of the abstract
transformer. This implies the soundness and completeness of our type
system with respect to the abstract semantics.

We start by formally relating the subtype relation and type
propagation. The following lemma states that subtyping precisely captures the safe
fixpoints of type propagation.

\begin{lemma}\label{lem:subtyping-prop-fixpoint}
  For all $\tval_1,\tval_2 \in \tvalues$,
  $\tval_1 \,{\subtype}\, \tval_2$ iff $\pair{\tval_1}{\tval_2} =
  \tval_1 \tiprop \tval_2$ and $\tval_1,\tval_2$ are safe.
\end{lemma}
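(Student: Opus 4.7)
\smallskip
\noindent
The plan is to prove both directions of the biconditional simultaneously by structural induction on the pair $(\tval_1,\tval_2)$, using a well-founded order based on function-nesting depth (the only recursive calls of $\tiprop$ and of the subtype rules descend under one level of function constructor, and the strengthening operation $\rupdate{\cdot}{x}{\cdot}$ preserves this depth since it merely decorates basic refinement types with an equality constraint or marks function components with $\rdf$). The overall structure is a case analysis on the shapes of $\tval_1$ and $\tval_2$ that mirrors the case splits in the definitions of $\subtype$ (Fig.~\ref{fig:subtyping-rules}) and $\tiprop$ (Fig.~\ref{fig:type-propagation}).

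\smallskip
\noindent
First I would dispatch the easy cases. If $\tval_1=\tbot$, then the \textsc{s-bot} rule gives $\tbot \subtype \tval_2$ whenever $\tval_2\neq \terr$ (which holds by safety), and the ``otherwise'' clause of $\tiprop$ yields $\tbot \tiprop \tval_2 = \pair{\tbot}{\tbot \tvjoin[] \tval_2} = \pair{\tbot}{\tval_2}$, so both sides hold. If $\tval_2=\tbot$ and $\tval_1\neq \tbot$, no subtype rule applies, and neither clause of $\tiprop$ produces $\pair{\tval_1}{\tbot}$, so both sides fail. The cases $\tval_i=\terr$ are excluded by safety. For two base types, $\tval_1 \subtype \tval_2$ reduces by \textsc{s-base} to $\tval_1 \bord \tval_2$, which is exactly the condition $\tval_1 \tvjoin[] \tval_2 = \tval_2$ needed for the ``otherwise'' clause of $\tiprop$ to be a fixed point. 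For incompatible kinds (one basic, one function, neither $\tbot$ nor $\terr$), no subtype rule applies, and the join in the ``otherwise'' clause collapses to $\terr$, so $\tval_1 \tiprop \tval_2 = \pair{\tval_1}{\terr}$, which cannot equal $\pair{\tval_1}{\tval_2}$ by safety of $\tval_2$.

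\smallskip
\noindent
The main case is $\tval_1 = x\!:\!\ttable_1$ and $\tval_2 = x\!:\!\ttable_2$. Writing $\ttable_i(\pstack)=\pair{\tval_{ii}}{\tval_{oi}}$ as in the rules, the equation $\pair{\tval_1}{\tval_2}=\tval_1\tiprop \tval_2$ decomposes per $\pstack\in\pstacks$ into four scalar equations: (a) $\tval'_{1i}=\tval_{1i}$ and (b) $\tval'_{2i}=\tval_{2i}$, where $\pair{\tval'_{2i}}{\tval'_{1i}} = \tval_{2i}\tiprop\tval_{1i}$; (c) $\tval_{1o}\tvjoin[] \tval'_{1o}=\tval_{1o}$ and (d) $\tval_{2o}\tvjoin[] \tval'_{2o}=\tval_{2o}$, where $\pair{\tval'_{1o}}{\tval'_{2o}} = A \tiprop B$ for $A=\rupdate{\tval_{1o}}{x}{\tval_{2i}}$ and $B=\rupdate{\tval_{2o}}{x}{\tval_{2i}}$. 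Condition (a)+(b) is exactly $\pair{\tval_{2i}}{\tval_{1i}}=\tval_{2i}\tiprop\tval_{1i}$, which by the IH is equivalent to $\tval_{2i}\subtype\tval_{1i}$ with both safe---the contravariant premise of \textsc{s-fun}. For (c)+(d), I would apply the IH a second time to $\pair{A}{B}$: if $A\subtype B$ then $A\tiprop B = \pair{A}{B}$, whence $\tval'_{1o}=A\tvord[]\tval_{1o}$ and $\tval'_{2o}=B\tvord[]\tval_{2o}$ (since strengthening decreases in $\tvord[]$, which I would justify by a short auxiliary lemma), establishing (c) and (d). Conversely, assuming (c)+(d), I would show that the second component of $A\tiprop B$ is always $\geq B$ in $\tvord[]$, so (d) forces $\tval'_{2o}\tvord[]\tval_{2o}$ to be compatible with $B\tvord[]\tval_{2o}$; a parallel argument gives that the propagation must act as the identity on $\pair{A}{B}$, i.e.\ $\pair{A}{B}=A\tiprop B$, which by the IH yields $A\subtype B$. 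These together give the covariant premise of \textsc{s-fun}.

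\smallskip
\noindent
The main obstacle will be the output direction in the converse: carefully arguing that $\pair{\tval'_{1o}}{\tval'_{2o}}\tvord[]\pair{\tval_{1o}}{\tval_{2o}}$ combined with the automatic lower bounds $A\tvord[]\tval'_{1o}$ and $B\tvord[]\tval'_{2o}$ (which follow from a monotonicity/extensiveness property of $\tiprop$ analogous to Lemma~\ref{prop-monotone}) force $\pair{A}{B}$ to itself be a fixed point of $\tiprop$. This uses the fact that $\tiprop$ is \emph{strictly} increasing on pairs that are not already stable, which I would state and prove as a short intermediate lemma about the propagation operator, noting that the proof parallels the analogous lemma for the concrete propagation in Fig.~\ref{fig:concrete-prop}.
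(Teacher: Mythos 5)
Your proof plan has the same skeleton as the paper's proof: induction on the nesting depth of the two types, a case analysis that mirrors the clauses of $\subtype$ and $\tiprop$, the per-$\pstack$ decomposition of the table case into the four equations (a)--(d), and a forward direction that uses reductivity of strengthening to stabilize the output joins. The base and degenerate cases are handled as in the paper. The genuine gap is in the converse direction for the outputs. Writing $A=\tupdate{\tval_{1o}}{x}{\tval_{2i}}$, $B=\tupdate{\tval_{2o}}{x}{\tval_{2i}}$ and $\pair{\tval_{1o}'}{\tval_{2o}'}=A\tiprop B$, the hypotheses (c)+(d) together with the increasing property of $\tiprop$ give you only the sandwich $A \tvord[] \tval_{1o}' \tvord[] \tval_{1o}$ and $B \tvord[] \tval_{2o}' \tvord[] \tval_{2o}$. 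This does \emph{not} force $\pair{A}{B}=A\tiprop B$: since strengthening is reductive, $A$ may lie strictly below $\tval_{1o}$, and nothing you have assumed rules out that $\tval_{1o}'$ sits strictly between $A$ and $\tval_{1o}$. Your proposed intermediate lemma (``$\tiprop$ is strictly increasing on non-stable pairs'') yields no contradiction in that situation, because a result strictly above $\pair{A}{B}$ is perfectly compatible with remaining below $\pair{\tval_{1o}}{\tval_{2o}}$, which is all that (c)+(d) demand. So the step ``the bounds force $\pair{A}{B}$ to be a fixed point'' is exactly the part that does not follow from monotonicity, extensiveness, or strictness of $\tiprop$.

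What closes this hole in the paper is a different auxiliary fact, Lemma~\ref{lem:prop-strengthening}: if the two arguments of $\tiprop$ are already strengthened with $\tupd{x}{\tval_{2i}}$ and the results are safe, then the results are themselves invariant under that strengthening, i.e.\ $\tval_{1o}' = \tupdate{\tval_{1o}'}{x}{\tval_{2i}}$ and $\tval_{2o}' = \tupdate{\tval_{2o}'}{x}{\tval_{2i}}$. Combined with monotonicity and idempotence of strengthening (Lemma~\ref{lem:strengthening}) applied to the sandwich, one gets $\tupdate{\tval_{1o}'}{x}{\tval_{2i}} = A$ and hence $\tval_{1o}' = A$ (symmetrically $\tval_{2o}' = B$), which is precisely the equality $\pair{A}{B} = A \tiprop B$ that lets the induction hypothesis deliver $A \subtype B$. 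So the ``short intermediate lemma'' you should state and prove is this absorption property relating propagation and strengthening (proved by its own induction using commutativity, idempotence, monotonicity and reductivity of strengthening), not a strictness property of $\tiprop$; with that replacement your argument aligns with the paper's.
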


\noindent We use this fact to show that any derivation of a
typing judgment $\tenv,\pstack \typrel e : \tval$ represents a
safe fixpoint of $\ttransname$ on $e$, and vice versa, for any safe fixpoint
of $\ttransname$ on $e$, we can obtain a typing derivation.
To state the soundness theorem we need one more definition: we say
that a typing environment is \textit{valid} if it does not map any
variable to $\tbot$ or $\ttop$.

\begin{theorem}[Soundness]
  \label{thm:typing-soundness}
  Let $e$ be an expression, $\tenv$ a valid typing
  environment, $\pstack$ an abstract stack, and $\tval \in \tvalues[]$. If
  $\tenv,\pstack \typrel e: \tval$, then there exist $\tmap,\penv$ such
  that $\tmap$ is safe, 
  $\tenv = \tmap \circ \penv$ and 
  $\pair{\tval}{\tmap} =
  \ttrans{e}(\penv,\pstack)(\tmap)$.
\end{theorem}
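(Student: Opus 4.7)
The plan is to prove the theorem by structural induction on the typing derivation $\tenv,\pstack \typrel e:\tval$. In each case we will explicitly construct a type map $\tmap$ and an environment $\penv$ witnessing the fixpoint equation $\pair{\tval}{\tmap} = \ttrans{e}(\penv,\pstack)(\tmap)$, using Lemma~\ref{lem:subtyping-prop-fixpoint} as the main bridge between the subtyping constraints in the premises and the required stability of $\tiprop$ inside $\ttransname$.

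For the base cases (\textsc{t-const} and \textsc{t-var}) I would first unfold the monadic definition of $\ttrans{\cdot}$ in Fig.~\ref{fig:refinement-transformer}. Pick $\penv$ to be any environment whose domain matches $\dom(\tenv)$ and whose $\penv(x)$ are fresh variable nodes $\cns{x}{\cdot}{\cdot}$; this freely gives $\tenv = \tmap \circ \penv$ once we define $\tmap$ appropriately on the image of $\penv$. For \textsc{t-const}, set $\tmap(\pnode) \Def= \tval$ on the current expression node and observe that the update $\mupd{\pnode}{\rstrengthen{\tconst}{\tenv}}$ leaves $\tmap(\pnode)$ unchanged precisely when $\rstrengthen{\tconst}{\tenv} \tvord \tval$, which is the typing premise. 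For \textsc{t-var}, set $\tmap(\pnode) \Def= \tval$ and $\tmap(\pnode_x) \Def= \tenv(x)$; the propagation $\rstrengthen{\tenv(x)\tvareq{x}}{\tenv} \tiprop \rstrengthen{\tval\tvareq{x}}{\tenv}$ returns exactly the input pair by Lemma~\ref{lem:subtyping-prop-fixpoint} applied to the typing premise (noting safety is ensured because $\tenv$ is valid and $\tval \neq \ttop$).

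The inductive cases drive the main work. For \textsc{t-app}, the IH on $e_\elabela$ and $e_\elabelb$ yields maps $\tmap_\elabela, \tmap_\elabelb$ and shared $\penv$ (the environment is the same in both premises). I would take $\tmap \Def= \tmap_\elabela \tmjoin \tmap_\elabelb$ augmented with $\tmap(\pnode) \Def= \tval$. Monotonicity and increasingness of $\ttransname$ (an analogue of Lemma~\ref{step-monotone} in the abstract domain) ensure the two sub-evaluations still return $\tval_\elabela, \tval_\elabelb$ on the larger map. The remaining propagation step $\tval_\elabela \tiprop x:[\elabela \pconcat \pstack \triangleleft \tval_\elabelb \to \tval]$ is stable by Lemma~\ref{lem:subtyping-prop-fixpoint} applied to the subtyping premise of \textsc{t-app}. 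For \textsc{t-abs}, the IH must be invoked once for each $\pstack' \in \tval$, producing maps $\tmap_{\pstack'}$ and environments $\penv_{\pstack'} = \penv.x\!:\!\cns{x}{\penv}{\pstack'}$; since the variable nodes for distinct $\pstack'$ are distinct, joining the $\tmap_{\pstack'}$ produces no conflicts outside $\pnode$, and the enclosing $\mathbf{for}$-loop assembles the per-stack results into the required table. Stability of the final per-stack propagation step $[\ttab[\pstack']{\tval_x}{\tval_\elabela}] \tiprop \restr{\tval}{\pstack'}$ again follows from Lemma~\ref{lem:subtyping-prop-fixpoint} applied to the subtyping side condition.

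The main obstacle will be the \textsc{t-abs} case: one has to thread consistency through all per-stack sub-invocations, showing that (i) extending $\penv$ by a fresh variable node preserves $\tenv_\elabela = \tmap \circ \penv_\elabela$, (ii) the joined map is still a fixpoint of each individual body evaluation (using monotonicity of $\ttransname$ plus the fact that fresh nodes across different $\pstack'$ do not interfere), and (iii) the final $\tvjoin$ over the $\mathbf{for}$ comprehension reproduces the original table $\tval$ on the function node, which in turn requires that $\tval$ has only entries at stacks $\pstack' \in \tval$ by definition. Safety of $\tmap$ throughout is an invariant propagated from the safety side of Lemma~\ref{lem:subtyping-prop-fixpoint} and the assumption that $\tenv$ is valid, which precludes $\tbot$/$\ttop$ appearing in any variable binding.
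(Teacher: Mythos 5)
Your overall skeleton---structural induction with Lemma~\ref{lem:subtyping-prop-fixpoint} as the bridge between the subtyping premises and stability of $\tiprop$---is exactly the paper's strategy, but the way you combine the induction-hypothesis witnesses in the inductive cases has a genuine gap. First, with the theorem stated purely existentially, the IH for \textsc{t-app} gives you \emph{some} $\tmap_\elabela,\penv_\elabela$ for $e_\elabela$ and \emph{some} $\tmap_\elabelb,\penv_\elabelb$ for $e_\elabelb$; nothing forces $\penv_\elabela=\penv_\elabelb$, and nothing constrains what $\tmap_\elabelb$ assigns to nodes whose locations lie inside $e_\elabela$ (and vice versa). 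The paper resolves this by proving a strengthened statement (Theorem~\ref{thm:typing-soundness-ind}): for \emph{every} $\penv$ and \emph{every} safe $\tmap$ with $\tenv=\tmap\circ\penv$ there is a fixpoint map $\tmap_1$ still satisfying $\tenv=\tmap_1\circ\penv$. This universal quantification is what lets one instantiate both premises with the same $\penv$ (and, in \textsc{t-abs}, with the extended $\penv_{\pstack'}$ and the map updated at the fresh variable node), and the plain theorem is then recovered by one concrete choice of $\penv$ and $\tmap$.

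Second, your merge step is not sound as argued: taking $\tmap\Def=\tmap_\elabela\tmjoin\tmap_\elabelb$ and appealing to monotonicity/increasingness of $\ttransname$ only yields that re-evaluation on the joined map produces results \emph{at least as large as} $\tval_\elabela,\tval_\elabelb$ and a map at least as large as the inputs; the fixpoint equation $\pair{\tval_\elabela}{\tmap}=\ttrans{e_\elabela}(\penv,\pstack)(\tmap)$ requires exact equality, and the join can strictly enlarge entries at nodes of $e_\elabela$ using whatever junk $\tmap_\elabelb$ happened to carry there, destroying both the returned value and map-stability. The paper instead \emph{pieces} the maps together by a case split on $\nloc(\pnode_1)$ (possible because locations are unique, so the regions touched by the two sub-evaluations are disjoint, and the maps agree on environment nodes since both compose with $\penv$ to $\tenv$); the analogous piecing, keyed on the per-stack variable node $\cns{x}{\penv}{\pstack'}$, is used in \textsc{t-abs} rather than a join. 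Your remaining observations (safety threading via Lemma~\ref{lem:subtyping-prop-fixpoint}, the base cases, the role of validity of $\tenv$, and that the $\mathbf{for}$-loop reassembles $\tval$ from its entries $\pstack'\in\tval$) match the paper, though in \textsc{t-abs} you should also note that $\pstack'\in\tval$ together with the subtyping side condition forces $\tval_x\neq\tbot$, which is what makes the extended environment valid so the IH applies.
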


\begin{theorem}[Completeness]
  \label{thm:typing-completeness}
  Let $e$ be an expression, $\penv$ an environment,
  $\pstack \in \pstacks$, $\tmap$ a type map, and $\tval \in
  \tvalues[]$. If $\ttrans{e}(\penv,\pstack)(\tmap)=\pair{\tval}{\tmap}$
  and $\tmap$ is safe, then $\tenv,\pstack \typrel e : \tval$ where
  $\tenv = \tmap \circ \penv$.
\end{theorem}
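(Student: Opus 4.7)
The plan is to prove Theorem~\ref{thm:typing-completeness} by structural induction on $e$, following the case split of the abstract transformer $\ttransname$ in Fig.~\ref{fig:refinement-transformer}. The main tool is Lemma~\ref{lem:subtyping-prop-fixpoint}, which reduces each subtyping premise $\tval_1 \subtype \tval_2$ appearing in the typing rules to exhibiting the propagation $\tval_1 \tiprop \tval_2$ as an identity $\pair{\tval_1}{\tval_2}$; the safety hypothesis required by the lemma is supplied throughout by the safety of $\tmap$.

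The recurring argument is a ``sandwich'' consequence of the fixpoint hypothesis. Each update $\mupd{n}{v}$ executed by $\ttrans{e}$ must leave $\tmap$ unchanged, so $v \tvord \tmap(n)$; when $v$ is a component of a propagation $\tval_1 \tiprop \tval_2$ whose corresponding input is exactly $\tmap(n)$, the increasing property of $\tiprop$ (verifiable by inspection of Fig.~\ref{fig:type-propagation}) supplies the reverse inequality, pinning $v$ equal to its input and exhibiting the propagation as a fixpoint. Moreover, monotonicity of the monadic composition, combined with the overall fixpoint at $\tmap$, forces the intermediate states returned by the inner calls $\ttrans{e_\elabela}$ and $\ttrans{e_\elabelb}$ to coincide with $\tmap$, enabling the induction hypothesis on subexpressions at the same $\tmap$.

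For $e = c_\elabel$ the sole update yields $\rstrengthen{\tconst}{\tenv} \tvord \tval$, discharging \textsc{t-const} via rule \textsc{s-base}. For the application case $e = (e_\elabela\, e_\elabelb)_\elabel$ the three final updates at $\pnode_\elabela$, $\pnode_\elabelb$, and $\pnode$ sandwich all three coordinates of the propagation $\tval_\elabela \tiprop x\!:\![\tentry{\elabela \pconcat \pstack}{\tval_\elabelb}{\tval}]$ to equalities, so Lemma~\ref{lem:subtyping-prop-fixpoint} supplies the subtyping premise of \textsc{t-app}. For the abstraction case $e = (\lambda x.\, e_\elabela)_\elabel$, the initial update with $x\!:\!\ttable_\bot$ forces $\tval$ to be a function type, and then for each $\pstack' \in \tval$ the updates at $\pnode_x$ and $\pnode_\elabela$ pin the first component of $x\!:\![\tentry{\pstack'}{\tval_x}{\tval_\elabela}] \tiprop \restr{\tval}{\pstack'}$ to be itself, while the joined update at $\pnode$, combined with the fact that each $\tval'_{\pstack'}$ is a singleton table supported on $\pstack'$, pins its second component to $\restr{\tval}{\pstack'}$; Lemma~\ref{lem:subtyping-prop-fixpoint} then yields the premise of \textsc{t-abs}.

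The main obstacle is the variable case $e = x_\elabel$, where \textsc{t-var} strengthens both sides of the subtyping and the sandwich argument does not directly pin the propagation down, since the updates bound its components by $\tenv(x)$ and $\tval$ rather than by the strengthened types $a = \rstrengthen{\tenv(x)\tvareq{x}}{\tenv}$ and $b = \rstrengthen{\tval\tvareq{x}}{\tenv}$. In the base-type subcase the ``otherwise'' clause of $\tiprop$ pins the first component to $a$ for free, while the fixpoint at $\pnode$ yields $a \tvord \tval$; the specifications of $\tvareq{\cdot}$ and $\rstrengthen{\cdot}{\tenv}$ then show that $a$ already embeds the additional constraints distinguishing $b$ from $\tval$, upgrading $a \tvord \tval$ to $a \tvord b$ and hence $a \tiprop b = \pair{a}{b}$. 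The function-type subcase follows by an analogous stackwise analysis of the recursive table propagation, using the same structural properties of strengthening.
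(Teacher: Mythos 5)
Your proposal is correct and follows essentially the same route as the paper's proof: structural induction on $e$, using Lemma~\ref{lem:subtyping-prop-fixpoint} to convert each subtyping premise into a propagation fixpoint, the increasing property of $\tiprop$/$\ttransname$ together with the fixpoint hypothesis to pin each propagation (and each inner call) to $\tmap$, and monotonicity/idempotence of strengthening for the variable case. The only place your sketch is thinner than the paper is the function-type subcase of \textsc{t-var}, where your ``analogous stackwise analysis'' is exactly the content of the paper's auxiliary Lemma~\ref{lem:prop-strengthening} (propagating two values strengthened by the same constraint yields outputs that are themselves fixed under that strengthening), proved by induction on table depth and needed to turn the sandwich bounds into the required equalities.
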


\subsection{Type Inference}
\label{sec:type-inference}

The idea for the generic type inference algorithm is to iteratively
compute $\tcoll[e]$, which captures the most precise typing for $e$ as
we have established above. Unfortunately, there is no guarantee that
the fixpoint iterates of $\ttransname$ converge towards $\tcoll[e]$ in
finitely many steps. The reasons are two-fold. First, the domain
$\bdomain$ may not satisfy the \emph{ascending chain condition}
(i.e. it may have infinite height). To solve this first issue, we
simply assume that $\bdomain$ comes equipped with a family
of widening operators
$\bwid: \bdomain_\pscope \times \bdomain_\pscope \to
\bdomain_\pscope$ for its scoped sublattices. Recall that a widening operator for a complete lattice
$\langle L, \sqleq, \bot, \top, \sqcup, \sqcap \rangle$ is a function
$\wid: L \times L \to L$ such that: (1) $\wid$ is an upper bound
operator, i.e., for all $x,y \in L$, $x \sqcup y \sqleq x \wid y$, and
(2) for all infinite ascending chains $x_0 \sqleq x_1 \sqleq \dots$ in $L$, the
chain $y_0 \sqleq y_1 \sqleq \dots$ eventually stabilizes, where
$y_0 \Def= x_0$ and $y_i \Def= y_{i - 1} \wid x_i$ for all $i > 0$~\cite{cousot1977abstract}.

The second issue is that there is in general no bound on the depth
of function tables recorded in the fixpoint iterates. This phenomenon
can be observed, e.g., in the following program\footnote{We here assume that
  recursive functions are encoded using the $Y$ combinator.}:
\begin{lstlisting}[language=Caml,aboveskip=0.5em,belowskip=0.2em]
let rec hungry x = hungry and loop f = loop (f ()) in
loop hungry
\end{lstlisting}
To solve this second issue, we introduce a \emph{shape widening}
operator that enforces a bound on the depth of tables. The two
widening operators will then be combined to yield a widening operator
for $\tvalues$.
In order to define shape widening, we first define the \emph{shape} of
a type using the function $\shape: \tvalues \to \tvalues$ 
\[
\begin{array}{l}
  \shape(\tbot) \Def= \tbot \quad \shape(\terr) \Def= \terr \quad
  \shape(\tyrelv) \Def= \tbot \quad 
  \shape(\absn:\ttable) \Def= \absn:\Lambda \pstack.\;
  	\pair{\shape(\tin{\ttable(\pstack)})}
		{\shape(\tout{\ttable(\pstack)})}
\end{array}
\]

A shape widening operator is a function
$\shapewid: \tvalues \times \tvalues \to \tvalues$ such that (1)
$\shapewid$ is an upper bound operator and (2) for every infinite ascending
chain $\tval_0 \tvord \tval_1 \tvord \dots$, the chain
$\shape(\tval_0') \tvord \shape(\tval_1') \tvord \dots $ stabilizes,
where $\tval_0' \Def= \tval_0$ and $\tval_i' \Def= \tval_{i-1}' \shapewid
\tval_i$ for $i > 0$.

\tw{
\begin{example}
  The occurs check performed when unifying type variables in type
  inference for Hindley-Milner-style type systems serves a similar
  purpose as shape widening. In fact, we can define a shape widening operator
  that mimics the occurs check. To this end, suppose that each
  dependency variable $x$ is tagged with a finite set of pairs
  $\pair{\elabel}{\pstack}$. We denote this set by
  $\vtag(x)$. Moreover, when computing joins over function types
  $x:\ttable$ and $y:\ttable$, first $\alpha$-rename $x$,
  respectively, $y$ by some fresh $z$ such that
  $\vtag(z) = \vtag(x) \cup \vtag(y)$. We proceed similarly when
  applying $\tiprop$ to function types. Finally, assume that each
  fresh dependency variable $x$ generated by $\ttransname$ for the
  function type at a call expression $e_\elabela \, e_\elabelb$ has
  $\vtag(x)=\{\pair{\elabela}{\pstack}\}$ where $\pstack$ is
  the abstract call stack at this point. Then to obtain
  $\tval_1 \shapewid \tval_2$, first define
  $\tval=\tval_1 \tvjoin \tval_2$. If $\tval$ contains two distinct
  bindings of dependency variables $x$ and $y$ such that
  $\vtag(x)=\vtag(y)$, define $\tval_1 \shapewid \tval_2=\ttop$ and
  otherwise $\tval_1 \shapewid \tval_2=\tval$. Clearly this is a
  shape widening operator if we only consider the finitely many tag
  sets that can be constructed from the locations in the analyzed
  program.
\end{example}}

In what follows, let $\shapewid$ be a shape widening operator. First, we
lift the $\bwid$ on $\bdomain$ to an upper bound operator $\trelwid$ on $\tvalues$:
\[
  \tval \trelwid \tval' \Def= \left\{
    \begin{array}{l l}
\tval \bwid \tval'
& \textbf{if } \tval,\tval' \in \bdomain \\[0.5ex]
\absn:\Lambda \pstack. \begin{array}[t]{l}\Mlet \pair{\tval_{i}}{\tval_{o}} =
\ttable(\pstack) \Mend \pair{\tval'_{i}}{\tval'_{o}} =
{\ttable}'(\pstack)\\
\Min \; \pair{\tval_i \trelwid[\pscope] \tval'_i}
	{\tval_o \trelwid[\pscope \cup \set{\absn}] \tval'_o }
\end{array}
& \textbf{if } \tval = \absn:\ttable \land \tval' = \absn:{\ttable}'\\[0.5ex]
\tval \tvjoin[] \tval' & \textbf{otherwise}
\end{array}
\right.
\]
We then define $\twid: \tvalues \times \tvalues \to \tvalues$ as the
composition of $\shapewid$ and $\trelwid$, that is,
$\tval \twid \tval' \Def= \tval \trelwid (\tval \shapewid \tval')$.

\begin{lemma}
  \label{lem:type-widening}
  $\twid$ is a widening operator.
\end{lemma}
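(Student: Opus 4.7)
The plan is to verify the two axioms for a widening operator: that $\twid$ is an upper bound operator with respect to $\tvord$, and that it forces stabilization of every ascending chain in $\tvalues$.

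For the upper-bound property, I would first observe, by structural induction on types, that $\trelwid$ is itself an upper bound operator: the base case uses that $\bwid$ is a widening (hence an upper bound operator) on each $\bdomain_\pscope$; the case where both operands are function types with matching top-level dependency variable applies the inductive hypothesis to each table entry; and in the remaining ``otherwise'' branch $\trelwid$ collapses to $\tvjoin$. Since $\shapewid$ is an upper bound operator by hypothesis, transitivity of $\tvord$ then gives $\tval \tvjoin \tval' \tvord \tval \shapewid \tval' \tvord \tval \trelwid (\tval \shapewid \tval') = \tval \twid \tval'$.

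For stabilization, consider an arbitrary ascending chain $\tval_0 \tvord \tval_1 \tvord \dots$ and the induced iterates $y_0 = \tval_0$, $y_i = y_{i-1} \twid \tval_i$. I would split the argument into two phases. In the first phase, I would show that $\shape(y_i)$ eventually stabilizes. A key auxiliary lemma, proved by structural induction on $\trelwid$, is that for any $\tval_1 \tvord \tval_2$ the top-level shape of $\tval_1 \trelwid \tval_2$ is bounded by $\shape(\tval_2)$: the basic-refinement cases both have trivial shape $\tbot$; the matching function-type case recurses; and the incompatible cases collapse to $\tvjoin$, whose shape is the join of the two input shapes. Combining this with the $\shapewid$ axiom applied to the auxiliary sequence $a_0 = \tval_0$, $a_i = a_{i-1} \shapewid \tval_i$, I would argue by induction on $i$ that $\shape(y_i) \tvord \shape(a_i)$; since $\shape(a_i)$ stabilizes by the shape-widening axiom, so does $\shape(y_i)$. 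In the second phase, once $\shape(y_i) = \tval_{\shape}$ for all $i \geq N$, the finite skeleton $\tval_{\shape}$ has only finitely many positions carrying basic refinements drawn from lattices $\bdomain_\pscope$. On this fixed skeleton, $\twid$ reduces position-wise to $\bwid$ at each leaf, and the structural parts are invariant. The widening property of $\bwid$ then forces each leaf, and therefore the whole $y_i$, to stabilize in finitely many further iterations.

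The main obstacle is the first phase: the $\twid$ iteration interleaves $\shapewid$ and $\trelwid$, and the shape-widening axiom is stated for the iterates of $\shapewid$ alone rather than for sequences whose left operand has been further modified by $\trelwid$. Bridging this mismatch without relying on monotonicity of $\shapewid$ (which widening operators generally need not satisfy) requires carefully tracking the relationship between $y_{i-1}$, $a_{i-1}$, and the intermediate tables $s_i = y_{i-1} \shapewid \tval_i$ across the iteration, and handling the corner cases in which $\tbot$, $\ttop$, or a shape mismatch causes $\trelwid$ to degenerate to $\tvjoin$.
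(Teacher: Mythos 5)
Your upper-bound argument is fine: $\trelwid$ is an upper bound operator by structural induction (using that each $\bwid$ is, and that $\tvjoin[]$ on function types is pointwise), and composing with the upper-bound property of $\shapewid$ gives $\tval \tvjoin[] \tval' \tvord[] \tval \twid \tval'$. (For reference, the paper does not spell out a proof of this lemma, so there is nothing to match your route against; it must be judged on its own.) The problem is that your stabilization argument is not a proof but a plan whose decisive step is missing, and you say so yourself. The shape-widening axiom only constrains the pure iterates $a_i = a_{i-1} \shapewid \tval_i$, whereas the $\twid$ iteration calls $\shapewid$ with left operands $y_{i-1} = y_{i-2} \trelwid (y_{i-2}\shapewid\tval_{i-1})$, which are not of that form. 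Your proposed bridge, $\shape(y_i) \tvord[] \shape(a_i)$ by induction on $i$, does not go through: at step $i$ you would need that replacing the left argument $a_{i-1}$ of $\shapewid$ by the (possibly larger, $\trelwid$-modified) $y_{i-1}$ does not increase the shape of the result, i.e.\ a monotonicity-in-the-first-argument property (at least at the level of shapes) that the definition of a shape widening operator simply does not provide — as you yourself concede. Since phase two presupposes phase one, the stabilization half of the lemma is not established by your argument.

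Two further points. First, even if the bound $\shape(y_i) \tvord[] \shape(a_i)$ were available, it does not immediately yield stabilization of $\shape(y_i)$: an ascending chain dominated by a stabilizing chain need not stabilize at positions where the dominating shape is $\terr$, since everything lies below $\terr$; you would have to argue this corner case away. Second, a cleaner reduction you could try — note that $\shape(y_i) = \shape(z_i)$ for $z_i = y_{i-1}\shapewid\tval_i$ (because $\shape$ is monotone, $z_i \tvord[] y_i$, and your auxiliary lemma bounds $\shape(y_i)$ by $\shape(z_i)$), and that $(z_i)$ is itself an ascending chain — still does not let you invoke the axiom, because the recursion pattern $z_{i+1} = y_i \shapewid \tval_{i+1}$ again has the wrong left operand. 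So the mismatch you flag is genuine and must be confronted head on, either by assuming more of $\shapewid$ (the paper's tag-based example operator, whose behaviour depends only on the finitely many tag sets and which returns $\ttop$ on a clash, satisfies such a stronger property) or by an argument specific to how $\trelwid$ preserves shapes; as written, your proof attempt stops exactly at this point.
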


We lift the widening operators $\twid$ pointwise to an upper bound
operator
$\mathrel{\dot{\wid}^{\tdesignation}}: \tmaps \times \tmaps \to
\tmaps$ and define a widened data flow refinement semantics
$\tcollwid: \Exp \to \tdomain$ as the least fixpoint of
the widened iterates of $\ttransname$:
\begin{align}\tcollwid[e] \Def= {} & \mathbf{lfp}^{\tmord}_{\tmap_{\bot}} \Lambda
  \tmap.\; \Mlet \pair{\_}{{\tmap}'} =
  \ttrans[]{e}(\epsilon,\epsilon)(\tmap) \,\Min 
                                     (\tmap \mathrel{\dot{\wid}^{\tdesignation}} {\tmap}')
  \label{eq:algo}
\end{align}
%

The following theorem then follows directly from
Lemma~\ref{lem:type-widening} and~\cite{cousot1977abstract}.

\begin{theorem}
\label{thm:widening-soundness}
The widened refinement type semantics is sound and terminating, i.e., for all
programs $e$, $\tcollwid[e]$ converges in finitely many iterations.
Moreover, $\tcoll[e] \tmord \tcollwid[e]$.
\end{theorem}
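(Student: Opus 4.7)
The plan is to derive the theorem as a direct instance of the classical Cousot--Cousot 1977 result on widening-accelerated Kleene iteration. That result says: if $F : L \to L$ is a monotone operator on a complete lattice and $\wid$ is a widening on $L$, then the iteration sequence $X_0 = \bot$, $X_{i+1} = X_i \wid F(X_i)$ stabilizes after finitely many steps at some $X$, and this limit satisfies $F(X) \sqleq X$, whence by Tarski $\mathbf{lfp}(F) \sqleq X$. I just need to instantiate this with $L = \tmaps$, $F = \Lambda \tmap.\, \pi_2(\ttrans{e}(\epsilon,\epsilon)(\tmap))$, and $\wid = \mathrel{\dot{\wid}^{\tdesignation}}$.

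First I would state and prove the type-level analogues of Lemmas~\ref{prop-monotone} and~\ref{step-monotone}: the propagation operator $\tiprop$ is monotone and increasing (by a straightforward case split on its clauses, mirroring the proof for $\ciprop$, using monotonicity of $\bjoin$ and $\tsubst{\cdot}{\cdot}{\cdot}$), and consequently $\ttrans{e}(\penv,\pstack)$ is monotone and increasing by structural induction on $e$. Monotonicity of $F$ follows immediately, so $F$ has a least fixpoint $\tcoll[e]$ above $\tmap_\bot$.

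Next I would verify that the pointwise lifting $\mathrel{\dot{\wid}^{\tdesignation}}$ is a widening on $\tmaps$. The upper-bound property is immediate from Lemma~\ref{lem:type-widening} applied node-wise. For stabilization, the crucial observation is that for any fixed program $e$, only finitely many abstract nodes $\pnode \in \pnodes$ are ever touched by the iteration: the expression nodes are indexed by the finitely many locations of $e$ paired with finitely many abstract stacks in $\pstacks$ (which is finite by assumption), and variable nodes likewise range over the finite set $\vars(e) \times \penvironments(e) \times \pstacks$. Hence an infinite ascending chain $\tmap_0 \tmord \tmap_1 \tmord \dots$ restricts to an infinite ascending chain at some single reachable node $\pnode$, which is blocked by $\twid$ on $\tvalues[\pscope_\pnode]$ thanks to Lemma~\ref{lem:type-widening}. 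Thus the pointwise widening stabilizes whenever every component stabilizes, which is guaranteed.

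With monotonicity of $F$ and the widening property of $\mathrel{\dot{\wid}^{\tdesignation}}$ established, the Cousot--Cousot theorem directly yields both conclusions: the iteration~(\ref{eq:algo}) defining $\tcollwid[e]$ stabilizes after finitely many steps, and its limit is a post-fixpoint of $F$ above $\tmap_\bot$, so $\tcoll[e] = \mathbf{lfp}^{\tmord}_{\tmap_\bot} F \tmord \tcollwid[e]$ by Tarski's theorem. The main obstacle is really only the finiteness-of-reachable-nodes argument for lifting widening to $\tmaps$; once that is in place everything else is boilerplate instantiation of a textbook result.
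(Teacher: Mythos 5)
Your proposal is correct and follows essentially the same route as the paper, which proves the theorem simply by invoking Lemma~\ref{lem:type-widening} together with the classical widening-iteration result of \cite{cousot1977abstract}. The details you supply --- monotonicity of the abstract transformer and the finiteness-of-reachable-nodes argument justifying the pointwise lifting $\mathrel{\dot{\wid}^{\tdesignation}}$ --- are exactly the content the paper leaves implicit, so this is a faithful (and more explicit) rendering of the paper's argument rather than a different one.
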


Our parametric type inference algorithm thus computes $\tcollwid[e]$
iteratively according to (\ref{eq:algo}). By
Theorem~\ref{thm:widening-soundness} and Corollary~\ref{cor:safety},
if the resulting type map is safe, then so is $e$.

%




\section{Implementation and Evaluation}
\label{sec:implementation}
\newcolumntype{?}{!{\vrule width 1pt}}

We have implemented a prototype of our parametric data flow refinement
type inference analysis in a tool called
\ourtool\footnote{\tw{\url{https://github.com/nyu-acsys/drift/}}}. The
tool is written in OCaml and builds on top of the \tool{Apron}
library~\cite{DBLP:conf/cav/JeannetM09} to support various numerical
abstract domains of type refinements. We have implemented two versions
of the analysis: a context-insensitive version in which all entries in
tables are collapsed to a single one (as in Liquid type inference) and
a 1-context-sensitive analysis that distinguishes table entries based
on the most recent call site locations. For the widening on basic
refinement types, we consider two variants: plain widening and
widening with
thresholds~\cite{DBLP:journals/corr/abs-cs-0701193}. Both variants use
\tool{Apron}'s widening operators for the individual refinement
domains.

The tool takes programs written in a subset of OCaml as input. This
subset supports higher-order recursive functions, operations on
primitive types such as integers and Booleans, as well as lists and arrays. We note
that the abstract and concrete
transformers of our semantics can be easily extended to handle
recursive function definitions directly. As of now,
the tool does not yet support user-defined algebraic data
types. \ourtool automatically checks whether all array accesses are
within bounds. In addition, the tool supports the verification of
user-provided assertions. Type refinements for lists can express
constraints on both the list's length and its elements.

To evaluate \ourtool we conduct two experiments that aim to answer the
following questions:

\begin{enumerate}
\item What is the trade-off between efficiency and precision for
  different instantiations of our parametric analysis framework?
\item How does our new analysis compare with other state-of-the-art
  automated verification tools for higher-order programs?
\end{enumerate}

\subsubsection*{Benchmarks and Setup}
We collected a benchmark suite of OCaml programs by combining several
sources of programs from prior work and augmenting it with our own new
programs. Specifically, we included the programs used in the
evaluation of the \tool{DOrder}~\cite{DBLP:conf/pldi/ZhuPJ16} and
\tool{R\_Type}~\cite{DBLP:conf/tacas/ChampionC0S18} tools, excluding
only those programs that involve algebraic data types \tw{or certain
  OCaml standard library
functions that our tool currently does not yet support}. We generated
a few additional variations of some of these programs by adding or modifying
user-provided assertions, or by replacing concrete test inputs for the
program's \texttt{main} function by unconstrained parameters. In
general, the programs are small (up to 86 lines) but intricate.
We partitioned the programs into five categories: first-order arithmetic
programs (FO), higher-order arithmetic programs (HO), \tw{higher-order
  programs that were obtained by reducing program termination to 
  safety checking (T)}, array programs
(A), and list programs (L). \tw{All programs except two in the T category
are safe. We separated these two erroneous programs out into a sixth
category (E) which we augmented by additional unsafe programs obtained by
modifying safe programs so that they 
contain implementation bugs or faulty specifications.}
\tw{The benchmark suite is available in the tool's Github repository.}
All our experiments were conducted on a desktop computer with an
Intel(R) Core(TM) i7-4770 CPU and 16 GB memory running Linux.

\subsubsection*{Experiment 1: Comparing different configurations of \ourtool}
We consider the two versions of our tool (context-insensitive and
1-context-sensitive) and instantiate each with two different
relational abstract domains implemented in \tool{Apron}:
Octagons~\cite{DBLP:journals/corr/abs-cs-0703084} (Oct), and Convex
Polyhedra and Linear Equalities~\cite{DBLP:conf/popl/CousotH78}
(Polka). For the Polka domain we consider both its \emph{loose}
configuration, which only captures non-strict inequalities, as well as
its \emph{strict} configuration which can also represent strict
inequalities. We note that
the analysis of function calls critically relies on the abstract
domain's ability to handle equality constraints precisely. We
therefore do not consider the interval domain as it cannot express
such relational constraints. For each abstract domain, we further
consider two different widening configurations: standard widening (w) and widening with
thresholds (tw). For widening with
thresholds~\cite{DBLP:journals/corr/abs-cs-0701193}, we use a
simple heuristic that chooses the conditional expressions in the
analyzed program as well as pair-wise inequalities between the variables
in scope as threshold constraints.



\begin{table}[t]
\centering
\caption{Summary of Experiment 1. For each benchmark category, we
  provide the number of programs within that category in
  parentheses. For each benchmark category and configuration, we list:
  the number of programs successfully analyzed \textbf{(succ)} and the
  total accumulated running time across all benchmarks in seconds
  \textbf{(total)}. \tw{In the E category, an analysis run is
    considered successful if it flags the type error/assertion
    violation present in the benchmark. The numbers given in
    parenthesis in the full columns indicate the number of benchmarks
    that failed due to timeouts (if any). These benchmarks
    are excluded from the calculation of the cumulative running
    times. The timeout threshold was 300s per benchmark.}
}
\label{table:1}
\resizebox{\textwidth}{!}{
\begin{tabular}{ |c||c|c|c|c|c|c|c?c|c|c|c|c|c| }
\hline
\multirow{4}*{\shortstack{Benchmark\\category}} & & \multicolumn{12}{c|}{Configuration} \\
\cline{2-14}
& Version & \multicolumn{6}{c?}{{context-insensitive}} & \multicolumn{6}{c|}{{1-context-sensitive}} \\
\cline{2-14}
& Domain & \multicolumn{2}{c|}{Oct} & \multicolumn{2}{c|}{Polka strict} & \multicolumn{2}{c?}{Polka loose} & \multicolumn{2}{c|}{Oct} & \multicolumn{2}{c|}{Polka strict} & \multicolumn{2}{c|}{Polka loose} \\
\cline{2-14}
& Widening & w & tw & w & tw & w & tw & w & tw & w & tw & w & tw \\
\hline\hline
FO (73) & succ  & 25  & 39  & 39  & 51  & 39  & 51  & 33  & 46  & 47  & 58  & 47  & 59  \\
loc: 11 & total & 10.40 & 46.94 & 17.37 & 46.42(1)  & 15.82 & 42.88(1)  & 67.38 & 129.80  & 92.37 & 138.00(1) & 87.27 & 138.33  \\
\hline
HO (62) & succ  & 33  & 48(1) & 42  & 55  & 42  & 55  & 42  & 51  & 48  & 60  & 48  & 60  \\
loc: 10 & total & 8.53  & 49.97 & 14.97 & 60.67 & 14.03 & 54.71 & 83.56 & 282.57  & 119.10  & 345.03  & 112.05  & 316.18  \\
\hline
T (80)  & succ  & 72  & 72  & 72  & 72  & 72  & 72  & 79  & 79  & 78  & 79  & 78  & 79  \\
loc: 44 & total & 806.24  & 842.70  & 952.53  & 994.52  & 882.21  & 924.00  & 1297.13(1)  & 1398.25(1)  & 1467.09(1)  & 1566.11(1)  & 1397.71(1)  & 1497.28(1)  \\
\hline
A (13)  & succ  & 6 & 6 & 8 & 8 & 8 & 8 & 8 & 8 & 11  & 11  & 11  & 11  \\
loc: 17 & total & 4.30  & 23.43 & 7.66  & 25.04 & 7.19  & 23.38 & 17.99 & 41.66 & 28.07 & 47.77 & 26.84 & 45.94 \\
\hline
L (20)  & succ  & 8(2)  & 14  & 10  & 14  & 10  & 14  & 11  & 18  & 10  & 18  & 10  & 18  \\
loc: 16 & total & 1.62  & 12.02 & 4.00  & 12.52 & 3.45  & 10.97 & 5.73  & 27.01 & 11.46 & 29.01 & 10.32 & 26.43 \\
\hline
E (17)  & succ  & 17  & 17  & 17  & 17  & 17  & 17  & 17  & 17  & 17  & 17  & 17  & 17  \\
loc: 21 & total & 8.04  & 14.89 & 12.72 & 19.44 & 12.01 & 18.28 & 17.90 & 28.76 & 24.96 & 34.24 & 23.75 & 32.71 \\
\hline
\end{tabular}}
\end{table}

Table~\ref{table:1} summarizes the results of the experiment. First,
note that all configurations successfully flag all erroneous
benchmarks (as one should expect from a sound analysis). Moreover, the
context-sensitive version of the analysis is in general more precise
than the context-insensitive one. The extra precision comes at the
cost of an increase in the analysis time by a factor of \ys{1.8} on average. The
1-context-sensitive version with Polka loose/tw performed best, solving 244
out of 265 benchmarks. \ys{There are two programs for which some of the
  configurations produced timeouts. However, each of these programs
  can be successfully verified by at least one configuration}. 
As expected, using Octagon is more
efficient than using loose polyhedra, which in turn is more efficient than
strict polyhedra. We anticipate that the differences in running times
for the different domains will be more pronounced on larger
programs. In general, one can use different domains for different
parts of the program as is common practice in static analyzers such as
\tool{Astr\'ee}~\cite{DBLP:journals/fmsd/CousotCFMMR09}.


We conducted a detailed analysis of the 20 benchmarks that \ourtool
could not solve using any of the configurations that we have
considered. To verify the 16 failing benchmarks in the FO and HO
categories, one needs to infer
type refinements that involve either non-linear or non-convex
constraints, neither of which is
currently supported by the tool. This can be addressed e.g. by further increasing
context-sensitivity, by using more expressive domains such as interval
polyhedra~\cite{DBLP:conf/sas/ChenMWC09}, and by incorporating
techniques such as trace partitioning to reduce loss of precision due
to joins~\cite{DBLP:conf/esop/MauborgneR05}. The two failing
benchmarks in the array category require the analysis to capture
universally quantified invariants about the elements stored in
arrays. However, our implementation currently only tracks array length
constraints. It is relatively straightforward to extend the analysis
in order to capture constraints on elements as has been proposed
e.g. in~\cite{DBLP:conf/esop/VazouRJ13}.


We further conducted a more detailed analysis of the running times by
profiling the execution of the tool. This analysis determined that
most of the time is spent in the strengthening of output types
with input types when propagating recursively between function types
in $\tiprop$. This happens particularly often when analyzing programs
that involve applications of curried functions, which are currently
handled rather naively\tw{, causing a quadratic blowup that can be avoided
with a more careful implementation. Notably, the programs in the T category
involve deeply curried functions. This appears to be the primary
reason why the tool is considerably slower on these programs.} Moreover, the implementation of the fixpoint
loop is still rather naive as it calls $\tiprop$ even if the
arguments have not changed since the previous fixpoint
iteration. We believe that by avoiding redundant calls to $\tiprop$
the running times can be notably improved.



\subsubsection*{Experiment 2: Comparing with other Tools}
Overall, the results of Experiment 1 suggest that the
1-context-sensitive version of \ourtool instantiated with the loose Polka
domain and threshold widening provides a good balance between
precision and efficiency. In our second experiment, we compare this
configuration with several other existing tools. We consider three
other automated verification tools: \tool{DSolve}, 
\tool{R\_Type}, and \tool{MoCHi}. \tool{DSolve} is the original
implementation of the Liquid type inference algorithm proposed
in~\cite{DBLP:conf/pldi/RondonKJ08}
(cf. \S~\ref{sec:df-overview}). \tool{R\_Type}~\cite{DBLP:conf/tacas/ChampionC0S18} improves
upon \tool{DSolve} by \tw{replacing the
Houdini-based fixpoint algorithm of~\cite{DBLP:conf/pldi/RondonKJ08}
with the Horn clause solver
\tool{HoIce}~\cite{DBLP:conf/aplas/Champion0S18}. \tool{HoIce} uses an
ICE-style machine learning
algorithm~\cite{DBLP:conf/cav/0001LMN14} that, unlike Houdini, can
also infer disjunctive refinement predicates.}
We note that \tool{R\_Type} does
not support arrays or lists and hence we omit it from these categories. Finally,
\tool{MoCHi}~\cite{DBLP:conf/pldi/KobayashiSU11} is a software model checker
based on higher-order recursion schemes\tw{, which also uses
  \tool{HoIce} as its default back-end Horn clause solver. We used the
  most recent version of each tool at the time when the experiments were
  conducted and we ran all tools in their default configurations. More
precise information about the specific tool versions used can be found in
the tools' Github repository.}


We initially also considered
\tool{DOrder}~\cite{DBLP:conf/pldi/ZhuPJ16} in our
comparison. This tool builds on the same basic algorithm as
\tool{DSolve} but also learns candidate predicates from concrete
program executions via machine learning. However, the tool primarily
targets programs that manipulate algebraic data types. Moreover, \tool{DOrder}
relies on user-provided test inputs for its predicate inference. As
our benchmarks work with unconstrained input parameters and we
explicitly exclude programs manipulating ADTs from our benchmark set, this puts
\tool{DOrder} decidedly at a disadvantage. To keep the comparison
fair, we therefore excluded \tool{DOrder} from the experiment.

\begin{table}[t]
\centering
\caption{Summary of Experiment 2. In addition to the cumulative
  running time for each category, we provide the average
  \textbf{(avg)} and median \textbf{(med)} running time per benchmark
  (in s). Timeouts are reported as in Table~\ref{table:1}. The timeout
  threshold was 300s per benchmark across all tools.
  In the
  \textbf{(succ)} column, we additionally provide in
  parentheses the number of benchmarks that were only solved by that
  tool (if any).}
\label{table:2}
\resizebox{\textwidth}{!}{
\begin{tabular}{|c||c|c|c|c?c|c|c|c?c|c|c|c?c|c|c|c|} 
\hline
\multirow{2}*{\shortstack{Bench-\\mark cat.}} &
                                                \multicolumn{4}{c?}{{\ourtool}} & \multicolumn{4}{c?}{\tool{R\_Type}} 
  & \multicolumn{4}{c?}{\tool{DSolve}} & \multicolumn{4}{c|}{\tool{MoCHi}}\\
\cline{2-17}
  & succ & full & avg & med & succ & full & avg & med & succ & full & avg & med & succ & full & avg & med \\
\hline\hline
FO (73) 
& 59(5) & 138.33  & 1.89  & 0.26 & 44 & 5.78(15)  & 0.08  & 0.06  
& 49  & 16.27 & 0.22  & 0.12 & \bf 62 & 332.39(9) & 4.55  & 21.50  \\ 
\hline
HO (62) 
& \bf 60  & 316.18  & 5.10  & 1.77 & 49 & 9.19(5) & 0.15  & 0.03
& 41  & 9.76  & 0.16  & 0.24 & 58 & 276.37(4) & 4.46  & 15.18 \\
\hline
T (80)
& 79  & 1497.28(1)  & 18.72 & 0.00 & 73 & 13.18 & 0.16  & 0.04
& 30  & 26.26 & 0.33  & 0.45 & \bf 80 & 41.56 & 0.52  & 0.11 \\
\hline
A (13)
& \bf 11  & 45.94 & 3.53  & 3.70 & -  & - & - & -
& 8 & 7.15  & 0.55  & 0.77 & 9(1) & 2.44  & 0.19  & 0.10 \\
\hline
L (20)
& \bf 18(2) & 26.43 & 1.32  & 1.23 & -  & - & - & -
& 8 & 6.10  & 0.30  & 0.26 & 17 & 455.28(3) & 22.76 & 19.81 \\
\hline
E (17)
& \bf 17  & 32.71 & 1.92  & 0.39 & \bf 17 & 1.22  & 0.07  & 0.05
& 14  & 8.22  & 0.48  & 0.17	& 14 & 95.95(3)  & 5.64  & 43.41 \\
\hline
\end{tabular}}
\end{table}

Table~\ref{table:2} summarizes the results of our comparison. \tw{\ourtool
and \tool{MoCHi} perform similarly overall and significantly better than
the other two tools. In particular, we note that only \ourtool and
\tool{MoCHi} can verify the
second program discussed in \S~\ref{sec:df-overview}. The
evaluation also indicates complementary strengths of these tools. In terms of number of
verified benchmarks, \ourtool
performs best in the HO, A, and L categories with \tool{MoCHi} scoring
a close second place.  For the FO and T
categories the roles are reversed. In the FO category, \tool{MoCHi}
benefits from its ability to infer non-convex type refinements, which are needed
to verify some of the benchmarks in this category. Nevertheless
there are five programs in this category that only \ourtool can
verify. Unlike our current implementation, \tool{MoCHi} does not
appear to suffer from inefficient handling of deeply curried
functions, which leads to significantly better cumulative running times in the T
category. On the other hand, \ourtool is faster than \tool{MoCHi} in the L
category.

We note that there are two benchmarks in the FO category and four
benchmarks in the A category for which \tool{MoCHi} produces false
alarms. However, this appears to be due to the use of certain language
features that are unsupported by the tool (such as OCaml's
\lstinline+mod+ operator).}



None of the tools produced unsound results in the E category. The
failing benchmarks for \tool{MoCHi} are due to timeouts and the ones
for \tool{DSolve} are due to crashes. The considered timeout was 300
seconds per benchmark for all tools. Across all benchmarks,
\tool{R\_Type} timed out on 20 and \tool{MoCHi} on 19
programs. \ourtool timed out on only one benchmark in the T
category.
We attribute the good
responsiveness of \ourtool to the use of infinite refinement domains
with widening in favor of the counterexample-guided abstraction
refinement approach used by \tool{R\_Type} and \tool{MoCHi}, which does
not guarantee termination of the analysis.

\section{Related Work}
\label{sec:related}

\smartparagraph{Refinement type inference.} Early work on refinement
type systems supported dependent types with unions and intersections
based on a \emph{modular} bidirectional type checking
algorithm~\cite{DBLP:conf/pldi/FreemanP91,DBLP:conf/popl/DunfieldP04,DBLP:conf/fossacs/DunfieldP03}. These
algorithms require some type annotations. Instead, the focus of this
paper is on fully automated refinement type inference algorithms that
perform a whole program analysis. Many existing algorithms in this
category can be obtained by instantiating our parametric
framework. Specifically, Liquid type
inference~\cite{DBLP:conf/pldi/RondonKJ08,DBLP:conf/esop/VazouRJ13,DBLP:conf/icfp/VazouSJVJ14,DBLP:conf/icfp/VazouBJ15}
performs a context-insensitive analysis over a monomial predicate
abstraction domain of type refinements. Similarly,
\citet{DBLP:conf/vmcai/ZhuJ13} propose a 1-context sensitive analysis
with predicate abstraction, augmented with an additional
counterexample-guided refinement loop\tw{, an idea that has also
  inspired recent techniques for analyzing pointer programs~\cite{DBLP:conf/esop/TomanSSI020}}. Our work generalizes these
algorithms to arbitrary abstract domains of type refinements
(including domains of infinite height) and provides parametric and
constructive soundness and completeness proofs for the obtained type
systems.
Orthogonal to our work are extensions of static inference algorithms with
data-driven approaches for inferring refinement predicates~\cite{DBLP:conf/icfp/ZhuNJ15,DBLP:conf/pldi/ZhuPJ16, DBLP:conf/tacas/ChampionC0S18}. 
Gradual Liquid type inference~\cite{DBLP:journals/pacmpl/VazouTH18}
addresses the issue of how to apply whole program analysis to infer modular
specifications and improve error reporting. Our framework is in
principle compatible with this approach. We note that Galois connections are used
in~\cite{DBLP:conf/pldi/VekrisCJ16,DBLP:conf/vmcai/KazerounianVBFT18,DBLP:conf/popl/LehmannT17,DBLP:conf/popl/GarciaCT16,DBLP:journals/pacmpl/VazouTH18}
to relate dynamic gradual refinement types and static refinement types. 
However, the resulting gradual type systems are not calculationally
constructed as abstract interpretations of concrete program
semantics.

\smartparagraph{Semantics of higher-order programs.}
The techniques introduced in~\cite{Jagannathan:1995:UTF:199448.199536} and~\cite{Plevyak95iterativeflow} use
flow graphs to assign concrete meaning to higher-order programs. Nodes in these graphs
closely relate to the nodes in our data flow semantics. Their semantics represents functions as expression 
nodes storing the location of the function expression. Hence, this semantics has to make non-local
changes when analyzing function applications. Our data flow semantics treats 
functions as tables and the concrete transformer is defined structurally on program syntax, which is 
more suitable for deriving type inference analyses. The idea of modeling functions
as tables that only track inputs observed during program
execution was first explored in the minimal function graph
semantics~\cite{DBLP:journals/jflp/JonesR97,DBLP:conf/popl/JonesM86}. However, that
semantics does not explicitly model data flow in a program,
and is hence not well suited 
for constructing refinement type inference algorithms. There is a large body of work on control and 
data flow analysis of higher-order programs and the concrete semantics they 
overapproximate~\cite{DBLP:journals/csur/Midtgaard12,DBLP:conf/popl/NielsonN97,DBLP:journals/cacm/HornM11,DBLP:journals/njc/Mossin98,Jones:2007:FAL:1235896.1236115,DBLP:conf/iccl/CousotC94}. 
However, these semantics either do not capture data flow properties or rely on 
non-structural concrete transformers. An interesting direction for
future work is to reconcile our collapsed semantics with control flow analyses
that enjoy \textit{pushdown precision} such
as~\cite{DBLP:journals/infsof/Reps98,DBLP:conf/popl/GilrayL0MH16,DBLP:journals/corr/abs-1102-3676,DBLP:conf/icfp/VardoulakisS11}.

\zp{Temporal logic for higher-order programs}~\cite{DBLP:conf/sas/OkuyamaT019} and
higher-order recursion schemes~\cite{DBLP:conf/pldi/KobayashiSU11,
  DBLP:conf/lics/Ong15} provide alternative bases for verifying
functional programs. Unlike our framework, these approaches use finite
state abstractions and model checking. In particular, they rely
on abstraction refinement to support infinite height data domains,
giving up on guaranteed termination of the analysis. On the other
hand, they are not restricted to proving safety properties.

\smartparagraph{Types as abstract interpretations.}
\lessmore{The formal connection between types and abstract
interpretation is studied by~\citet{cousot1997types}. The paper shows how to construct
standard (modular) polymorphic type systems via a sequence of abstractions of
denotational call-by-value semantics. Similarly, \citet{monsuez1992polymorphic,monsuez1993polymorphic,monsuez1995system,monsuez1995using}
uses abstract interpretation to design polymorphic type systems for call-by-name semantics 
and model advanced type systems such as System 
F. \citet{DBLP:conf/vmcai/GoriL03,DBLP:conf/vmcai/GoriL02} 
use abstract interpretation to design new type inference algorithms for ML-like languages 
by incorporating more precise widening operators for analyzing
recursive functions. None of these works address refinement type inference.
Harper introduces a framework for constructing dependent type systems from operational 
semantics based on the PER model of types~\cite{DBLP:journals/jsc/Harper92}. Although this 
work does not use abstract interpretation, it views types as an overapproximation
of program behaviors derived using a suitable notion of abstraction.
}{
The formal connection between types and abstract
interpretation is studied by~\cite{cousot1997types}. His paper shows how to construct
standard polymorphic type systems via a sequence of abstractions starting from a concrete 
denotational call-by-value semantics. \cite{monsuez1993polymorphic2}
similarly shows how to use abstract interpretation to design
polymorphic type systems for call-by-name semantics and model advanced type systems
such as System 
F~\cite{monsuez1992polymorphic,monsuez1993polymorphic,monsuez1995system,monsuez1995using}. \cite{DBLP:conf/vmcai/GoriL03,DBLP:conf/vmcai/GoriL02} 
use the abstract interpretation view of types
to design new type inference for ML-like languages by incorporating
widening operators that infer more precise types for recursive
functions.
\cite{DBLP:conf/sas/JensenMT09} use abstract interpretation
techniques to develop a type system for JavaScript that is able to guarantee absence
of common errors such as invoking a non-function value or accessing an undefined 
property.~\cite{DBLP:conf/popl/GarciaCT16} use Galois connections to relate
gradual types with abstract static types to build a gradual
type system from a static one. \cite{DBLP:journals/jsc/Harper92} introduces a framework
for constructing dependent type systems from operational semantics
based on the PER model
of types. Although this work does not build on abstract
interpretation, it relies on the idea that types overapproximate
program behaviors and that they can be derived using a suitable notion of abstraction.
}

\section{Conclusion}
In this work, we systematically develop
a parametric refinement type systems as an abstract interpretation of
a new concrete data flow semantics.  This development unveils the
design space of refinement type analyses that infer data flow
invariants for functional programs. Our prototype implementation and
experimental evaluation \tw{indicate that our framework can be used to implement new
  refinement type inference algorithms that are both robust and precise.}

\begin{acks}                            
This work is funded in parts by the
\grantsponsor{GS100000001}{National Science Foundation}{http://dx.doi.org/10.13039/100000001} under
grants~\grantnum{GS100000001}{CCF-1350574} and
\grantnum{GS100000001}{CCF-1618059}. We thank the anonymous reviewers
for their feedback on an earlier draft of this paper.
\end{acks}

\bibliographystyle{ACM-Reference-Format}
\def\doi[#1]{\href{https://doi.org/#1}{\url{#1}}}
\bibliography{references,dblp}

\more{
\clearpage

\appendix
\section{Additional Examples of the Concrete Dataflow Semantics}
\label{sec:appendix-additional-examples}

\subsection{Higher-order Example}
\label{sec:appendix-ho-example}

Let us consider an example involving higher-order functions.
The code is again given below.
\begin{lstlisting}[language=Caml,mathescape = true]
$\;$let dec y = y - 1 in
$\;$let f x g = if x >= 0 then g${}_o$ x${}_p$ else x${}_q$ 
$\;$in f${}_a$ ((f${}_b$ 1${}_c$)${}_d$ dec${}_i$)${}_j$ dec${}_m$
\end{lstlisting}
The concrete execution map for this program, restricted to
the execution nodes of interest is as follows:
\[
\begin{array}{l}
q^a \mapsto \bot \ind[1] q^b \mapsto \bot \ind[1] c \mapsto 1 \ind[1] 
p^a \mapsto 0 \ind[1] p^b \mapsto 1 \ind[1] i \mapsto 0 \\[2ex]
i \mapsto [ \tentry{od}{1}{0}] \ind[1] m \mapsto [ \tentry{oj}{0}{-1}] 
\ind[1] \mathtt{dec} \mapsto [\tentry{od}{1}{0},\;\; 
                              \tentry{oj}{0}{-1} ]\\[2ex]
a \mapsto [\tentry{a}{0}{[ \tentry{j}{[ \tentry{oj}{0}{-1}]}{-1}]}] \ind[1]
b \mapsto [\tentry{c}{1}{[\tentry{d}{[\tentry{od}{1}{0}]}{0}]}] \\[2ex]
f \mapsto  [\tentry{a}{0}{[ \tentry{j}{[ \tentry{oj}{0}{-1}]}{-1}]},\;\;
            \tentry{b}{1}{[\tentry{d}{[\tentry{od}{1}{0}]}{0}]}  ] 
\end{array}
\]
For this specific program, the execution will always reach the program expressions with locations
$\mathtt{a,b,c,d,i,j,}$ and $\mathtt{m}$ with the same environment. For this reason, we again identify
these expression nodes with their locations. Expressions with labels $\mathtt{o,p,}$ and
$\mathtt{q}$ will be reached with two environments and stacks, each of which corresponds to executing
the body of \texttt{f} with the value passed to it at call sites with locations $\mathtt{a}$ and $\mathtt{b}$.
We use this call site information to distinguish the two evaluations of the body of \texttt{f}.
For instance, the node associated with the $\texttt{x}_{q}$ expression when
performing the analysis for the call to \texttt{f} made at location $a$ is denoted by
$\mathtt{q}^{a}$ in the above table. 

\subsection{Recursion Example Concrete Map}
\label{sec:appendix-recursion-example}

We next exemplify how our concrete data flow semantics models recursion through 
an OCaml program that can be found to the left in Fig.~\ref{ex:recursion-program}.
The program first defines a function \lstinline+f+ that returns 0 when 
given 1 as input, and it otherwise just recurses with a decrement of the 
input. This function is then called two times with inputs 2 and 0, 
respectively. The latter call never terminates. To avoid clutter, we only 
annotate program expressions of particular interest with their locations. 

\begin{figure}[h!]
\begin{minipage}[t]{.35\textwidth}
\begin{lstlisting}[language=Caml,mathescape=true, columns=fullflexible, numbers=none]
let rec f n = 
  if n=1 then 0 else f${}_{a}$ (n-1) 
in (f${}_{b}$ 2) + (f${}_{c}$ 0)
\end{lstlisting}
\end{minipage}%
\begin{minipage}[t]{.64\textwidth}
\[\begin{array}{l}
b \mapsto [ \tentry{b}{2}{0} ] \ind[2] c \mapsto [ \tentry{c}{0}{\bot} ] \\
f \mapsto   [ \tentry{b}{2}{0},\;\;  \tentry{c}{0}{\bot},\;\;
              \tentry{ab}{1}{0},\\ 
\ind[1]\sind	      \tentry{ac}{-1}{\bot},\;\;\tentry{aac}{-2}{\bot},\;\; \dots]
\end{array}\]
\end{minipage}
\caption{A program exhibiting recursion and excerpt of its execution map~\label{ex:recursion-program}}
\end{figure}
Let us first discuss the node $b$, which corresponds to the call to
\lstinline+f+ with integer 2. The meaning assigned to $b$ is a table
indicating that the corresponding function is invoked at the call site
stack associated with $b$ with integer 2 as input, returning 0 as the
output. The computation of this output clearly requires a recursive
call to \lstinline+f+ with 1 as the input. The existence and effect of
this call can be observed by the entry $\tentry{ab}{1}{0}$ in the
table associated with \lstinline+f+. Intuitively, this entry states
that \lstinline+f+ was recursively called at the point identified by
$a$ while executing the call at $b$.

Consider now the node $c$, which corresponds to the call to \lstinline+f+ 
with integer 0. The meaning assigned to $c$ is a table indicating that the 
corresponding function is invoked at the call site stack associated with $c$ with integer 0 but 
with no computed output $\bot$. Clearly,
there is no useful output computed since \lstinline+f+ does not terminate with 0 as input.
The recursive calls to \lstinline+f+ observed after calling \lstinline+f+ with 0 and their effects 
can be again observed in the table computed for \lstinline+f+. For instance, the table entry $\tentry{ac}{-1}{\bot}$
captures that the recursive call to \lstinline+f+ at the call site
point $a$ after entering \lstinline+f+ from $c$ has seen $-1$ as the input, 
but no output has been computed. 
Similarly, the entry $aac \mapsto \pair{-2}{\bot}$ indicates that \lstinline+f+ was recursively called
at $a$ with $-2$ as input after the previous recursive call, and again
no output was observed. Due to the nonterminating execution, the table
stored for \lstinline+f+ has infinitely many entries of the shape
$\tentry{a^kd}{-k}{\bot}$.

\section{Relational Data Flow Semantics}
\label{appendix:relational-semantics}


\subsection{Notation}

Before proceeding to the definitions of the ordering
and abstract transformer for the relational semantics,
we first formally introduce auxiliary notation and
definitions.

We use $\absn:\rtable_{\bot}$ 
for the empty relational table that maps every call site to the pair $\pair{\rbot}{\rbot}$.
We denote the fact that a relational table $\rtable$ has been invoked at a
call site stack $\ccsite$ with dependency variable $\absn$ by
$\ccsite, \absn \in \rtable$.
By $\rmap_{\bot}$ ($\rmap_{\cerr}$) we denote 
the relational execution maps that assign $\rbot$ ($\rerr$) to every node.

We denote by $\rvalues[]$ the union of all $\rvalues$ and use similar
notation for the sets of all dependency relations and relational
tables. For the members of these sets we sometimes omit the subscript
indicating the scope when it is irrelevant or clear from the context to avoid clutter.

\subsection{Domain Ordering}

We define a partial order $\rvord$ on $\rvalues[]$,
which is analogous to the partial order $\cvord$ on concrete
values, except that equality on constants is replaced by subset
inclusion on dependency relations:
\begin{align*}
\rval_1 \rvord \rval_2 \Def\iff {} & \rval_1 {=} \rbot \vee \rval_2 {=} \rerr \vee
	(\rval_1, \rval_2 \in \rels \wedge \rval_1 \subseteq \rval_2)
                                   \;\vee \\
&    (\rval_1,\rval_2 \in \rtables \wedge \forall \ccsite.\;
		\rval_1(\ccsite) \;\rvecord\; \rval_2(\ccsite)) 
\end{align*}
We note that aside from $\rbot$ and $\rerr$, only the relational
values of identical scope $\rscope$ are comparable. This ordering
induces for every $\rscope$, a complete lattice $(\rvalues, \rvord,
\rbot, \rerr, \rvjoin, \rvmeet)$. We assume the set of relational
values is implicitly quotiented under dependency variable renaming,
which can always be achieved.
\more{ The definition of join over relational
values is given below.
%
\[\begin{array}{cc}
\rbot \rvjoin \rval \Def= \rval \quad\quad \rval \rvjoin \rbot \Def= \rval & \rel_1 \rvjoin \rel_2 \Def= \rel_1 \cup \rel_2 \\
\rtable_1 \rvjoin \rtable_2 \Def= \lambda \ccsite.\, \rtable_1(\ccsite) \;\vecop{\rvjoin}\; \rtable_2(\ccsite) 
& \quad \rval_1 \rvjoin \rval_2 \Def= \rerr \quad \text{(otherwise)}
\end{array}\]
%
The meet $\rvmeet$ is defined similarly.
\begin{lemma}\label{thm:rvjoin}
Let $V \in \powerset(\rvalues)$. Then, $\rvjoin V = \mi{lub} (V)$ and 
$\rvmeet V = \mi{glb} (V)$ subject to $\rvord$.
\end{lemma}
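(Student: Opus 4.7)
The plan is to prove both claims simultaneously by (well-founded) structural induction on the depth of relational values in $V$, combined with a careful case analysis on the shape of the members of $V$. Since $\rvmeet$ is defined analogously to $\rvjoin$, I will concentrate on the join case; the meet case is entirely symmetric.

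First I would reduce arbitrary sets $V \subseteq \rvalues[]$ to the binary case by observing, as usual, that once $\rvjoin$ is shown to be a commutative, associative, and idempotent upper-bound operator on pairs, the lifting $\rvjoinb V$ to arbitrary sets computes the lub iff every finite join does, and the infinitary cases (needed because $\rvalues[]$ and $\rels$ are not finite) follow from the fact that $\rvjoin$ distributes over arbitrary unions on the $\rels$ level (since set union does) and point-wise over call-stack indices on the $\rtables$ level. So the real content is to show $\rval_1 \rvjoin \rval_2 = \mathsf{lub}\{\rval_1, \rval_2\}$.

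Next I would dispose of the trivial structural cases. If either operand is $\rbot$ or $\rerr$, the definition immediately agrees with the lub since $\rbot$ is the least and $\rerr$ the greatest element of $\rvord$. If $\rval_1$ and $\rval_2$ belong to different kinds (one a relation and one a table), or are relations of incompatible scopes, then no element of $\rvalues[]$ other than $\rerr$ lies above both — this is the critical point and follows directly from the definition of $\rvord$, which only compares non-trivial values of identical kind and scope. Hence $\rerr$ is the lub, matching the "otherwise" clause.

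The substantive cases are then two: both operands are relations $\rel_1, \rel_2 \in \rels_\rscope$ of the same scope, in which case $\rel_1 \cup \rel_2$ is trivially the lub under $\subseteq$; and both operands are tables $\absn:\rtable_1, \absn:\rtable_2$ (after $\alpha$-renaming to a common dependency variable). For tables, upper-boundedness and leastness both reduce, at each call stack $\ccsite$, to the corresponding statement for the smaller-depth pair of input/output relational values, which is available by the induction hypothesis. The point-wise lifting $\vecop{\rvjoin}$ preserves lub-ness because a pair is an upper bound of a set of pairs iff each component is, and likewise for being least.

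The main obstacle is establishing a well-founded induction measure. Because tables can store tables as inputs or outputs, naive structural recursion on $\rvalues$ is not obviously terminating; however, the definition of $\rvalues_\rscope$ is itself an inductive one in which every nested table in the input or output of $\ttab{\rval_i}{\rval_o}$ is a strictly smaller element of $\rvalues[]$. I would make this explicit by defining a depth function on relational values and performing induction on the maximum depth among members of $V$. With that measure fixed, the inductive step for tables goes through routinely, completing the proof.
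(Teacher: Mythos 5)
Your case split is the right content for this lemma (union on dependency relations, point-wise join on tables after $\alpha$-renaming to a common dependency variable, $\rerr$ for sets mixing relations and tables, and the trivial $\rbot$/$\rerr$ cases), and for what it is worth the paper states this lemma without giving a proof at all; the closest model in the paper is the proof of Lemma~\ref{lem:rgamma-meet-morphism}, which proceeds by exactly this kind of case analysis together with an induction on depth. The problem is with your proof architecture in the infinitary case, which is the whole point of the statement: it is what makes $\rvalues$ a \emph{complete} lattice, and $V \in \powerset(\rvalues)$ may be infinite, with tables indexed by the infinite set of call-site stacks.

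Two concrete issues. First, the reduction to binary joins is not a valid principle: knowing that $\rval_1 \rvjoin \rval_2$ is the lub of every pair, plus the existence of a top element, does not imply that arbitrary subsets have lubs (completeness for infinite $V$ is precisely what is being claimed), so ``computes the lub iff every finite join does'' cannot carry the argument. Your remark that the join acts as union on $\rels$ and point-wise over stacks on tables is, in effect, the direct infinitary definition, but then its lub property must itself be proved by induction for arbitrary sets rather than inherited from the binary case. Second, the measure you propose for that induction, the maximum depth among the members of $V$, is not well founded in this setting: an infinite set of tables can have unbounded depths, so no maximum exists, and even passing to the ordinal supremum does not help, since descending from a set of tables to the set of their input (or output) components at a fixed $\ccsite$ need not strictly decrease that supremum. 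The repair, which is exactly what the paper does in its proof of Lemma~\ref{lem:rgamma-meet-morphism}, is to induct on the \emph{minimum} depth of the tables in $V$: the components of a minimal-depth table are strictly shallower, so the minimum strictly decreases and the induction hypothesis for arbitrary component sets applies at every stack. With that measure, and with the infinitary case argued directly, your case analysis goes through, and the $\rvmeet$ half is indeed symmetric.
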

}
We lift the lattices on relational values pointwise
to a complete lattice on relational maps $(\rmaps, \rmord,
\rmap_{\bot}, \rmap_{\cerr}, \rmjoin, \rmmeet)$.
%
%

\subsubsection*{Galois connections}

We can show that there exists a Galois connection between the complete lattice
of relational values and the superset of pairs of concrete maps and concrete values.
\begin{lemma}
  \label{lem:rgamma-meet-morphism}
  For all scopes $\rscope$, $\rgamma[]$ is a complete meet-morphism between
  $\rvalues$ and $\powerset((\rscope \to \cvalues) \times \cvalues)$ .
\end{lemma}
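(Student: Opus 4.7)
The plan is to verify the two conditions defining a complete meet-morphism: preservation of the top element and preservation of arbitrary meets. First, observe that $\rgamma[\rscope](\rerr) = (\rscope \to \cvalues) \times \cvalues$ is by definition the top element of the powerset lattice, so top (and the empty-meet case $S = \emptyset$) is handled.

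For nonempty $S \subseteq \rvalues$, I would show $\rgamma[\rscope](\rvmeetb S) = \bigcap_{\rval \in S} \rgamma[\rscope](\rval)$ by case analysis on the shape of $S$. In the \emph{incompatible} case---when $\rbot \in S$ or $S$ mixes a relation with a table---the meet evaluates to $\rbot$, and both sides coincide with $(\rscope \to \cvalues) \times \set{\bot}$. The $\subseteq$ inclusion is immediate from the explicit $\rgamma[\rscope](\rbot)$ disjunct appearing in every $\rgamma[\rscope](\rval)$. For $\supseteq$, note that a pair in $\rgamma[\rscope](\rel)$ must have its second component in $\consts \cup \set{\bot}$ while a pair in $\rgamma[\rscope](\absn:\rtable)$ must have it in $\ctables \cup \set{\bot}$, so whenever both kinds are present only second component $\bot$ survives. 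In the \emph{relations-only} case the meet reduces to set-theoretic intersection on $\rels$; the key observation is that the witness $\dmap \in \rel$ required for $\pair{\scmap}{c} \in \rgamma[\rscope](\rel)$ is uniquely determined by $\pair{\scmap}{c}$, since $\dgamma(c) = \set{c}$ and $\dgamma(\rdf) = \ctables$ force $\dmap(\nu) = c$ and, for each $x \in \rscope$, either $\dmap(x) = \scmap(x)$ (when $\scmap(x) \in \consts$) or $\dmap(x) = \rdf$ (when $\scmap(x) \in \ctables$). Hence any pair witnessed independently in $\rgamma[\rscope](\rel_1)$ and $\rgamma[\rscope](\rel_2)$ is witnessed by the same $\dmap \in \rel_1 \cap \rel_2$.

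The substantive case is the \emph{tables-only} case, where $S \setminus \set{\rerr} = \set{\absn:\rtable_i}_{i \in I}$ (with a common dependency variable $\absn$ after $\alpha$-renaming) and $\rvmeetb S = \absn{:}\Lambda \cstack.\, \pair{\rvmeetb_i\, \tin{\rtable_i(\cstack)}}{\rvmeetb_i\, \tout{\rtable_i(\cstack)}}$. I would proceed by induction on the nesting depth of tables: a pair $\pair{\scmap}{\ctable}$ with $\ctable \in \ctables$ lies in $\rgamma[\rscope](\rvmeetb S)$ iff for every $\cstack$ its input component lies in $\rgamma[\rscope]$ of the componentwise input meet and its output component (with $\absn$ bound to the concrete input) lies in $\rgamma[\rscope \cup \set{\absn}]$ of the componentwise output meet. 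Applying the inductive hypothesis to each input and output family (which consists of strictly shallower tables, or non-table values handled by the earlier cases) converts these into conjunctions over $i$, yielding exactly membership of $\pair{\scmap}{\ctable}$ in every $\rgamma[\rscope](\absn{:}\rtable_i)$.

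The main obstacle is this induction, since $\rvalues$ is defined via a recursion whose unfolding can nest tables arbitrarily deeply, so a naive structural induction is not obviously well-founded. I would handle this by stratifying values by a well-founded rank on nesting depth, observing that the componentwise meet of tables of rank at most $n$ has input and output components of rank strictly less than $n$, so the inductive hypothesis applies. Once this case is settled, combining all three cases yields the desired identity, and together with preservation of top this establishes that $\rgamma[\rscope]$ is a complete meet-morphism.
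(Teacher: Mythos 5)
Your proposal is correct and follows essentially the same route as the paper's proof: a case split on the shape of the set (empty, top, incompatible mixes, relations-only, tables-only), with the relations case reduced to set intersection via the uniqueness of the witnessing dependency vector (the paper's terse ``uniqueness/non-overlapping of $\dgamma$'' remark), and the tables case handled by induction on nesting depth of the input/output families at each call site. Your explicit rank-based stratification is just a more careful phrasing of the paper's induction on (minimum) table depth, so there is no substantive difference.
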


It follows then that there exists a Galois connection between
$\powerset((\rscope \to \cvalues) \times \cvalues)$ and $\rvalues$
for every scope
$\rscope$ which has $\rgamma[]$ as its right adjoint~\cite{cousot1979systematic}. Let $\ralpha:
\powerset((\rscope \to \cvalues) \times \cvalues) \to \rvalues$ be
the corresponding left adjoint, which is uniquely determined by
$\rgamma[]$.
From Lemma~\ref{lem:rgamma-meet-morphism} it easily follows that
$\rmgamma$ is also a complete meet-morphism. We denote by
$\rmalpha: \powerset(\cmaps) \to \rmaps$ the left adjoint of the
induced Galois connection.

\subsubsection*{Abstract domain operations} 
Before we can define the abstract transformer, we need to introduce a
few primitive operations for constructing and manipulating relational
values and execution maps (Fig.~\ref{fig:relational-ops}).
%
%
%
\begin{figure}[t]
\small
\begin{gather*}
\rupdate{\rval_{\rscope}}{\absn}{\rval'} \Def= 
\ralpha(\rgamma[](\rval_\rscope) \;\cap \setc{ \pair{\Gamma'[\absn \mapsto \cval']}{\cval}}{\pair{\Gamma'}{\cval'} \in \rgamma[](\rval')})
\\[0.3ex]
\requality{\rval_{\rscope}}{\absn_1}{\absn_2} \Def=
\ralpha(\rgamma[\iota](\rval_\rscope) \cap
\setc{\pair{\Gamma}{\cval}}{\Gamma(\absn_1) = \Gamma(\absn_2)})
\\[0.3ex]
\requality{\rval_{\rscope}}{\absn}{\cval} \Def=
\ralpha(\rgamma[](\rval_\rscope) \cap \setc{\pair{\Gamma}{\cval}}{\Gamma(\absn) = \cval})
\\[0.3ex]
\rstrengthen{\rval_{\rscope}}{\Gamma} \Def= 
\rvmeetb \setc{\rupdate{\rval}{\cnode}{\Gamma(\cnode)}}{\cnode \in \rscope}
\\[0.3ex]
\rrescope{\rval_{\rscope}}{\rscope'} \Def= 
\ralpha[\rscope'](\rgamma[](\rval_\rscope) \;\cap \setc{\pair{\Gamma}{\cval}}{\forall n \in \rscope' {\setminus} \rscope.\,
\Gamma(n) \notin \set{\bot,\cerr}})
\end{gather*}
\caption{Operations on relational abstract domains\label{fig:relational-ops}}
\end{figure}
The operation $\requality{\rval_{\rscope}}{\absn_1}{\absn_2}$
strengthens $\rval_{\rscope}$ by enforcing equality between $\absn_1$
and $\absn_2$ in the dependency vectors. Similarly,
$\requality{\rval_{\rscope}}{\absn}{\cval}$ strengthens
$\rval_{\rscope}$ by requiring that the value desribed by $\absn$ 
is $\cval$. We use the operation $\rupdate{\rval_{\rscope}}{\absn}{\rval'}$ to strengthen
$\rval_{\rscope}$ with $\rval'$ for $\absn$. Logically, this can be
viewed as computing $\rval_{\rscope} \land \rval'[\absn/\nu]$. 
\more{The operation $\rupdate{\rmap}{\cnode}{\rval_{\rscope}}$ lifts
$\rupdate{\rval_{\rscope}}{\cnode}{\rval'}$ pointwise to relational
execution maps. Note that here for every $\cnode'$ we implicitly
rescope $\cmap(\cnode')$ so that its scope includes $\cnode$.} 
The operation $\rstrengthen{\rval_{\rscope}}{\Gamma}$, conversely,
strengthens $\rval_{\rscope}$ with all the relations describing the
values at nodes $\rscope$ in $\Gamma:\rscope \to \rval$. Intuitively,
this operation will be used
to push scope/environment assumptions to a relational value. Finally, the \emph{rescoping}
operation $\rrescope{\rval_{\rscope}}{\rscope'}$ changes the scope of
$\rval_\rscope$ from $\rscope$ to $\rscope'$ while strengthening it with
the information that all nodes in $\rscope' \setminus \rscope$ are
neither $\bot$ nor $\cerr$. We use the operation $\exists \cnode.\,\rval_{\rscope} \Def= \rstrengthen{\rval}{\rscope\setminus\{\cnode\}}$
to simply project out, i.e., remove a node $\cnode$ from the
scope of a given relational value $\rval_\rscope$.
We (implicitly) use the rescoping operations to enable precise 
propagation between relational values that have incompatible scopes.

The relational abstraction of a constant $c$ in scope $\rscope$ is
defined as $\rconst_\rscope \Def= \requality{\rerr}{\nu}{\const} = \pset{\dmap \in \dmaps_\rscope}{\dmap(\nu) = c}$.

\subsection{Abstract Transformer}

The abstract transformer, $\rtransname$, of the relational semantics
is shown in Fig.~\ref{fig:relational-transformer}. It closely
resembles the concrete transformer, $\ctransname$, where relational
execution maps $\rmap$ take the place of concrete execution maps
$\cmap$ and relational values $\rval$ take the place of concrete
values $\cval$. 

The case for constant values $c^{\elabel}$ effectively replaces the join on values in 
the concrete transformer using a join on relational values. The
constant $c$ is abstracted by the relational value
$\rstrengthen{\rconst}{\Gamma}$, which establishes the relation between
$c$ and the other values stored at the nodesin $\cmap$ that are bound in the
environment $\cenv$. In general, the transformer \emph{pushes} environment
assumptions into relational values rather than leaving the assumptions implicit.
We decided to treat environment assumptions this way so we can interpret 
relational values independently of the environment.
%
%
This idea is implemented in the rules for variables and constants, as explained above, 
while it is done inductively for other constructs. 
In the case for variables $x^\elabel$, the abstract transformer
replaces $\ciprop$ in the concrete transformer by the abstract
propagation operator $\riprop$ defined in Fig.~\ref{fig:relational-propagation}.
Note that the definition of
$\riprop$ assumes that its arguments range over the same
scope. Therefore, whenever we use $\riprop$ in $\rtransname$, we
implicitly use the rescoping operator to make the arguments
compatible. For instance, in the call to $\riprop$ for variables
$x^\elabel$, the scope of $\rmap(\cenv(x))$ is $\rscope_{\cenv(x)}$
which does not include the variable node $\cenv(x)$ itself. We
therefore implicitly rescope $\rmap(\cenv(x))$ to
$\rrescope{\rmap(\cenv(x))}{\rscope_{\cnode}}$ before
strengthening it with the equality $\nu = \cenv(x)$. The other cases
are fairly straightforward. 
\more{ Observe how in the case for conditionals
$(\ifte{e_0^{\elabel_0}}{e_1^{\elabel_1}}{e_2^{\elabel_2}})^{\elabel}$
we analyze both branches and combine the resulting relational
execution maps with a join. In each branch, the map $\rmap_0$ is
strengthened with the information about the truth value stored at
$\cn{\cenv}{\elabel_0}$, reflecting the path-sensitive nature of
Liquid types.}
\begin{figure}[t]
\[\begin{array}[t]{@{}l|l}
\begin{array}[t]{@{}l}
\rtrans{c_\elabel}(\cenv, \cstack) \Def= \\
\sind \Mdo \cnode = \cn{\elabel}{\cenv} \,\Mend \Gamma^\rdesignation \mto
\menv(\cenv) \,\Mend
\cval' \mto \mupd{\cnode}{\rstrengthen{\rconst}{\Gamma^\rdesignation}} \\[0.4ex]
\sind \Mreturn \rval'\\[2ex]
\rtrans{x_\elabel}(\cenv, \cstack) \Def= \\
\sind \Mdo  \cnode = \cn{\elabel}{\cenv} \, \Mend \, \rval \mto \mread{\cnode}
\, \Mend \, \cnode_x = \cenv(x)
\, \Mend \, \Gamma^\rdesignation \mto \menv(\cenv) \, \\[0.4ex]
\sind[6] \rval' \mto \mupd{\cnode_x,\,
\cnode}{\rstrengthen{\Gamma^\rdesignation(x)\tvareq{x}}{\Gamma^\rdesignation} \riprop \rstrengthen{\rval\tvareq{x}}{\Gamma^\rdesignation}}\\
\sind \Mreturn \rval'\\[2ex]
%
\rtrans{(e_\elabela \, e_\elabelb)_\elabel}(\cenv, \cstack) \Def=  \\
\sind 	\Mdo  \cnode = \cn{\elabel}{\cenv} \,\Mend \cnode_\elabela =
\cn{\elabela}{\cenv}
		\,\Mend \cnode_\elabelb = \cn{\elabelb}{\cenv} \,
\Mend \rval \mto \mread{\cnode}\\[0.4ex]
\sind[6] \rval_\elabela \mto \rtrans[k]{e_\elabela}(\cenv, \cstack) \,\Mend
\massert{\rval_\elabela \in \ttables[]}\\[0.4ex]
\sind[6] \rval_\elabelb \mto \rtrans[k]{e_\elabelb}(\cenv, \cstack)\\[0.4ex]
\sind[6] \rval'_\elabela,\, \absn:[\tentry{\elabela \cdot \cstack}{\rval_\elabelb'}{\rval'}] = \rval_\elabela \riprop
\absn:[\tentry{\elabela \cdot \cstack}{\rval_\elabelb}{\rval}]\\[0.4ex]
\sind[6] \rval''	\mto \mupd{\cnode_\elabela,\, \cnode_\elabelb,\,
\cnode}{\rval'_\elabela,\, \rval'_\elabelb,\, \rval'}\\[0.4ex]
\sind \Mreturn \rval''
\end{array}
&
\begin{array}[t]{@{}l}
\rtrans{(\lambda x.e_\elabela)_\elabel}(\cenv, \cstack) \Def= \\ 
\sind	\Mdo \cnode = \cn{\elabel}{\cenv} \,\Mend \rval \mto
\mupd{\cnode}{x:\rtable_{\bot}} \\[0.5ex]
\sind[6] \rval' \mto \Mfor \rval \;\Mdo \rstepbody(x,
e_\elabela, \cenv, \rval)\\[0.5ex]
\sind[6] \rval'' \mto \mupd{\cnode}{\rval'}\\[0.5ex] 
\sind[2] \Mreturn \rval'' \\[2ex]
\rstepbody(x, e_\elabela, \cenv, \rval)(\cstack') \Def= \\
\sind  \Mdo  \cnode_x = \cns{x}{\cenv}{\cstack'} \,\Mend
                  \cenv_\elabela = \cenv.x\!:\!\cnode_x \,\Mend
\cnode_\elabela = \cn{\elabela}{\cenv_\elabela}\\[0.5ex]
\sind[6] \rval_x \mto \mread{\cnode_x} \,\Mend \rval_\elabela \mto \rtrans[k+1]{e_\elabela}(\cenv_\elabela,
\cstack') \\[0.5ex]
\sind[6]  x:[\ttab[\cstack']{\rval_x'}{\rval_\elabela'}], \rval' =
\\[0.5ex]
\sind[9]  x:[\ttab[\cstack']{\rval_x}{\rval_\elabela}] \riprop \restr{\rval}{\cstack'} \\[0.5ex]
\sind[6]  \mupd{\cnode_x,\, \cnode_\elabela}{\rval'_x,\, \rval'_\elabela} \\[0.5ex]
\sind  \Mreturn \rval'
\end{array}
\end{array}\]
\caption{Abstract transformer for relational data flow semantics\label{fig:relational-transformer}}
\end{figure}
\begin{figure}[t]
\vspace{-0.4cm}
\begin{minipage}[t]{.6\textwidth}
\[\begin{array}{@{}l}
\absn:\rtable_1 \riprop \absn:\rtable_2 \Def= \; \\
\sind \Mlet \rtable = \Lambda \ccsite.	\\[0.4ex]
\sind[2] \Mlet \pair{\rval_{1i}}{\rval_{1o}} = \rtable_1(\ccsite) \Mend \pair{\rval_{2i}}{\rval_{2o}} = \rtable_2(\ccsite) \\[0.4ex]
\sind[3] \;\;\pair{\rval'_{2i}}{\rval'_{1i}} = \rval_{2i} \riprop \rval_{1i} \\[0.4ex]
\sind[3] \;\; \pair{\rval'_{1o}}{\rval'_{2o}} = 
		\rupdate{\rval_{1o}}{\absn}{\rval_{2i}} \riprop \rupdate{\rval_{2o}}{\absn}{\rval_{2i}}\\[0.4ex]
\sind[2] \Min (\pair{\rval'_{1i}}{\rval_{1o} \rvjoin \rval'_{1o}}, \pair{\rval'_{2i}}{\rval_{2o} \rvjoin \rval'_{2o}}) \quad\quad\quad \ind[4] \\[0.4ex]
\sind[1] \Min \pair{\absn:\Lambda \ccsite.\; \pi_1(\rtable(\ccsite))}{\absn:\Lambda \ccsite.\; \pi_2(\rtable(\ccsite))}
\end{array}\]
\end{minipage}%
\begin{minipage}[t]{.39\textwidth}
\[\begin{array}{@{}l}
\rtable \riprop \rbot \Def=\; \pair{\rtable}{\rtable_{\bot}} \\[5ex]
\rtable \riprop \rerr \Def=\; \pair{\rerr}{\rerr} \\[5ex]
\rval_1 \riprop \rval_2 \Def=\; \pair{\rval_1}{\rval_1 \rvjoin \rval_2} \\
\multicolumn{1}{c}{\textbf{(otherwise)}}
\end{array}\]
\end{minipage}
\caption{Value propagation in relational data flow semantics\label{fig:relational-propagation}}
\end{figure}

\begin{lemma}
  \label{thm:r-prop-increasing-monotone}
  The function $\riprop$ is monotone and increasing.
\end{lemma}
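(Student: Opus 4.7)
The plan is to prove both properties by simultaneous induction on the maximum nesting depth of function tables in the two input relational values, directly following the case structure of the definition of $\riprop$ in Fig.~\ref{fig:relational-propagation}. For the increasing property $\pair{\rval_1}{\rval_2} \vecop{\rvord} \rval_1 \riprop \rval_2$, the non-table cases are immediate: when $\rval_2 = \rbot$ we use $\rbot \rvord \rtable_\bot$ since $\rbot$ is the bottom element of $\rvord$; when $\rval_2 = \rerr$ we use that $\rerr$ is the top; in the default case we use $\rval_2 \rvord \rval_1 \rvjoin \rval_2$. Monotonicity in these same cases reduces to monotonicity of $\rvjoin$, which is standard in any complete lattice.

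For the table case, I would fix a call site stack $\ccsite$ and argue entry-wise, using the fact that $\rvord$ on relational tables is the pointwise lifting of $\vecop{\rvord}$. The first output table has entry $\pair{\rval_{1i}'}{\rval_{1o} \rvjoin \rval_{1o}'}$ at $\ccsite$, where $\rval_{1i}'$ is the second component of the recursive call $\rval_{2i} \riprop \rval_{1i}$ and $\rval_{1o}'$ is the first component of $\rupdate{\rval_{1o}}{\absn}{\rval_{2i}} \riprop \rupdate{\rval_{2o}}{\absn}{\rval_{2i}}$. Both arguments of each recursive call have strictly smaller nesting depth, so the induction hypothesis applies. The increasing property then gives $\rval_{1i} \rvord \rval_{1i}'$ and, together with $\rval_{1o} \rvord \rval_{1o} \rvjoin \rval_{1o}'$, yields the required componentwise inequality $\pair{\rval_{1i}}{\rval_{1o}} \vecop{\rvord} \pair{\rval_{1i}'}{\rval_{1o} \rvjoin \rval_{1o}'}$. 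The second output entry is handled symmetrically. For monotonicity in the table case, I would use the induction hypothesis applied to each recursive call: since both arguments grow pointwise (the positional swap in $\rval_{2i} \riprop \rval_{1i}$ is irrelevant because monotonicity in both arguments yields monotonicity of both output components), and $\rvjoin$ preserves these orderings.

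The main obstacle will be verifying that the auxiliary operation $\rupdate{\cdot}{\absn}{\cdot}$ used in the recursive call on outputs is monotone in both of its value arguments; without this, the induction hypothesis cannot be invoked on the output leg. This monotonicity follows from the Galois-theoretic definition of $\rupdate$ in Fig.~\ref{fig:relational-ops}: it is built from $\ralpha$, $\rgamma$, and set intersection, each of which is monotone, so the composition is monotone in both arguments. A secondary well-foundedness concern is that infinite nesting of tables could arise from certain nonterminating programs; however, as noted after Lemma~\ref{step-monotone}, the semantics collapses such behaviors to $\cmap_\cerr$ and thus the inductive argument on nesting depth suffices for the relational values actually produced by $\rtransname$. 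No additional reasoning about rescoping is needed in this lemma, as $\riprop$ is only defined on arguments of equal scope.
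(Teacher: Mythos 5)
Your proposal is correct and follows essentially the same route as the paper, whose entire proof is a one-line appeal to mutual structural induction on the two input relational values; your simultaneous induction on table nesting depth with a case analysis mirroring Fig.~\ref{fig:relational-propagation} is just a detailed instantiation of that argument (and the paper itself uses the same depth measure in the proof of Lemma~\ref{lem:rgamma-meet-morphism}). Your identification of monotonicity of $\rupdate{\cdot}{\absn}{\cdot}$, via the Galois-connection definition in Fig.~\ref{fig:relational-ops}, as the auxiliary fact needed for the output leg of the table case is exactly the right supporting observation.
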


\begin{lemma}
  \label{thm:r-step-increasing-monotone}
  For every  $e \in \Exp$, $\cstack \in \cstacks$, $\cenv \in \cenvironments$
  such that $\pair{e}{\cenv}$ is well-formed, then
  $\rtrans{e}(\cenv, \cstack)$ is monotone and increasing.
\end{lemma}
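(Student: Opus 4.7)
The plan is to prove both properties simultaneously by structural induction on the expression $e$, with the inductive hypothesis applied uniformly to the recursive calls $\rtrans[k]{e_\elabela}$ and $\rtrans[k+1]{e_\elabela}$ arising in the application and lambda cases. The overall strategy is to observe that $\rtrans{e}(\cenv,\cstack)$ is built by monadic composition from a small set of primitive state transformers (\,$\mread{\cnode}$, $\mupd{\cnode}{\rval}$, $\menv(\cenv)$, $\Mfor \rval \,\Mdo F$, $\Mreturn \rval$\,) together with the propagation operator $\riprop$. Since both monotonicity and increasingness of a computation $F : \cmaps \to \rvalues[] \times \cmaps$ are preserved by $\kw{bind}$ whenever the continuation is itself pointwise monotone and increasing in the input map, the proof reduces to checking these two properties for each primitive and each structural case.

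First I would verify the primitives. The reader $\mread{\cnode}$ leaves the map unchanged and returns a monotone function of it, hence is trivially monotone and increasing. For $\mupd{\cnode}{\rval}$, monotonicity in the input map follows because taking the join with $\rval$ pointwise is monotone, and increasingness follows because $\rmap(\cnode) \rvord \rmap(\cnode) \rvjoin \rval$. The environment reader $\menv(\cenv)$ and the return are handled analogously. For the big-join $\Mfor \rval \,\Mdo F$, monotonicity and increasingness follow from the same properties of the underlying $F(\ccsite)$ together with the fact that $\rmjoinb$ is monotone. Crucially, Lemma~\ref{thm:r-prop-increasing-monotone} gives us monotonicity and increasingness of $\riprop$, which is what justifies the steps involving the value propagation (the variable, application, and lambda cases).

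Next I would do the structural induction. The cases for constants $c_\elabel$ and variables $x_\elabel$ are pure compositions of primitives, so they follow directly. For the application case $(e_\elabela\,e_\elabelb)_\elabel$, the inductive hypothesis supplies monotonicity and increasingness of $\rtrans[k]{e_\elabela}(\cenv,\cstack)$ and $\rtrans[k]{e_\elabelb}(\cenv,\cstack)$, and the remaining glue is a $\massert$ followed by one call to $\riprop$ followed by three update operations; each preserves both properties. The lambda case is handled by first showing that the body transformer $\rstepbody(x,e_\elabela,\cenv,\rval)(\cstack')$ is monotone and increasing in the input map for every fixed $\rval$ and $\cstack'$ (this again reduces to the IH on $\rtrans[k+1]{e_\elabela}$ together with $\riprop$), and then observing that $\Mfor \rval \,\Mdo \rstepbody(\dots)$ inherits these properties via the join.

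The main obstacle will be the lambda case, specifically the subtle interaction between the map parameter $\rmap$ and the table $\rval$ that drives the iteration in $\Mfor \rval \,\Mdo \rstepbody(x,e_\elabela,\cenv,\rval)$. Because $\rval$ is itself read from $\rmap$ by the preceding $\mupd{\cnode}{x:\rtable_\bot}$, one must be careful that when the input map grows from $\rmap_1 \rmord \rmap_2$, the set of call sites iterated over can only grow, so that the join taken in $\Mfor$ remains monotone (and hence increasing, because each individual body call is). This is a monotonicity property of $\Mfor$ in both arguments, and I would single it out as an auxiliary lemma: for $F_1, F_2$ both monotone and increasing in $\rmap$ with $F_1 \rmord F_2$ pointwise, and for $\rval_1 \rvord \rval_2$ (both tables), one has $(\Mfor \rval_1 \,\Mdo F_1) \rmord (\Mfor \rval_2 \,\Mdo F_2)$ pointwise. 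Once this auxiliary fact is in place, the lambda case closes smoothly and the induction goes through.
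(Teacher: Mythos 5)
Your proposal is correct and follows essentially the same route as the paper, whose proof is exactly structural induction on $e$ combined with Lemma~\ref{thm:r-prop-increasing-monotone} for the propagation operator; your additional verification of the monadic primitives and the $\Mfor$/table interaction in the lambda case simply spells out details the paper leaves implicit.
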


\subsection{Abstract Semantics}

The abstract domain of the relational abstraction is given by
\emph{relational properties} 
$\rdomain \Def= \rmaps$.
The relational abstract semantics $\rcoll: \Exp \to \rdomain$ is then defined as the least
fixpoint of $\rtransname$ as follows:
\[\rcoll \llbracket e \rrbracket \Def= \mathbf{lfp}^{\rmord}_{\rmap_{\bot}} 
\Lambda \rmap.\; \Mlet \pair{\_}{\rmap_1} = \rtrans[\fuel]{e}(\epsilon, \epsilon)(\rmap) \,\Min \rmap_1 \]
\begin{theorem}
  \label{thm:relational-soundness}
  The relational abstract semantics is sound, i.e., for all programs
  $e$, $\ccoll \llbracket e \rrbracket \subseteq
  \rmgamma(\rcoll \llbracket e \rrbracket)$.
\end{theorem}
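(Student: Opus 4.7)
The plan is to invoke the standard fixpoint transfer theorem from abstract interpretation: given a Galois connection $(\rmalpha,\rmgamma)$ between $\powerset(\cmaps)$ and $\rmaps$, together with monotone concrete and abstract transformers $F$ and $F^\rdesignation$ such that $F^\rdesignation$ soundly abstracts $F$ (i.e. $\rmalpha \circ F \circ \rmgamma \rmord F^\rdesignation$, or equivalently $F \circ \rmgamma \,\vecop{\subseteq}\, \rmgamma \circ F^\rdesignation$), one obtains $\mathbf{lfp}\,F \subseteq \rmgamma(\mathbf{lfp}\,F^\rdesignation)$. The Galois connection is given by Lemma~\ref{lem:rgamma-meet-morphism} (lifted pointwise from values to maps); monotonicity of the two transformers is provided by Lemma~\ref{step-monotone} and Lemma~\ref{thm:r-step-increasing-monotone}. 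What remains is the soundness of the abstract transformer $\rtransname$ with respect to $\ctransname$.

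The bulk of the proof therefore reduces to establishing the following simulation statement: for every $e$, $\cenv$, $\cstack$, every concrete map $\cmap$ and relational map $\rmap$ with $\cmap \in \rmgamma(\rmap)$, if $\ctrans{e}(\cenv,\cstack)(\cmap) = \pair{\cval}{\cmap'}$ and $\rtrans{e}(\cenv,\cstack)(\rmap) = \pair{\rval}{\rmap'}$, then $\cmap' \in \rmgamma(\rmap')$ and, writing $\Gamma = \cmap' \circ \cenv$, we have $\pair{\Gamma}{\cval} \in \rgamma[\rscope_{\cenv}](\rval)$. I would prove this by structural induction on $e$, carried out uniformly for all $\cenv,\cstack$ and all suitably related $\cmap,\rmap$. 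The induction requires a handful of preparatory soundness lemmas for the primitive operations: (i) $\requality{\cdot}{x}{c}$, $\rupdate{\cdot}{x}{\cdot}$, $\rstrengthen{\cdot}{\Gamma}$ and rescoping are sound by construction, since each is defined via $\ralpha \circ (\rgamma \cap \_)$; (ii) the constant abstraction $\rconst$ concretizes to a set containing $c$; (iii) most importantly, the abstract propagation operator $\riprop$ soundly approximates $\ciprop$, which itself requires an inner induction on the (finite) depth of the tables involved to handle the contravariant/covariant flipping of inputs and outputs, together with the algebraic fact that $\rgamma[]$ commutes with the table destructor $\restr{\cdot}{\cstack}$ and with singleton‑table construction.

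With these lemmas in hand, each case of the outer induction becomes essentially the same calculation as in the concrete transformer with $\ciprop$, $\cvjoin$ and constants replaced by their abstract counterparts, so soundness is preserved case‑by‑case. The subtle cases are function application and lambda abstraction, where the recursive call $\cstepbody$ iterates over all call stacks $\cstack' \in \rtable$: here one must observe that, by the inductive hypothesis applied pointwise over the (finite‑support, by monotonicity of iterates) stacks tracked in the tables, the relational join over stacks in $\Mfor$ faithfully abstracts the concrete join, and that the fresh variable nodes $\cnode_x = \cns{x}{\cenv}{\cstack'}$ coincide on the two sides so that scope extension respects $\rgamma$. The combination $\csem\llbracket e\rrbracket \in \rgamma[](\rcoll\llbracket e\rrbracket)$ then follows from the transfer theorem starting at the related bottoms $\cmap_\bot \in \rgamma[](\rmap_\bot)$, and Theorem~\ref{thm:relational-soundness} is immediate since $\ccoll\llbracket e\rrbracket = \{\csem\llbracket e\rrbracket\}$.

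The main obstacle I anticipate is the soundness of $\riprop$ versus $\ciprop$. The recursive, bidirectional nature of propagation between tables (inputs flow one way, outputs the other, with output scopes extended by a fresh dependency variable $\absn$) forces a carefully stated induction invariant that talks simultaneously about \emph{both} returned components of the propagation and about how the newly introduced $\absn$ is instantiated in the concretization of the output types; getting this invariant right, and aligning it with the rescoping convention used implicitly in $\rtransname$, is where the real work lies.
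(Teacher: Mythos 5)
Your overall strategy---prove that $\riprop$ soundly abstracts $\ciprop$ and that $\rtransname$ soundly abstracts $\ctransname$, then conclude by fixpoint transfer over the Galois connection $\pair{\rmalpha}{\rmgamma}$---is the same as the paper's intended proof. However, the central simulation lemma you propose is too strong and, as stated, false: it is not true that for \emph{all} well-formed $(e,\cenv)$ and all $\cmap \in \cmaps$, $\rmap \in \rmaps$ with $\cmap \in \rmgamma(\rmap)$, the results of $\ctrans[k]{e}(\cenv,\cstack)(\cmap)$ and $\rtrans[k]{e}(\cenv,\cstack)(\rmap)$ remain related. The reason is that $\rtransname$ makes implicit use of the rescoping operation $\rrescope{\rval}{\rscope'}$, which strengthens a relational value with the assumption that every node newly brought into scope denotes a value that is neither $\bot$ nor $\cerr$. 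For an arbitrary related pair $(\cmap,\rmap)$ this assumption can fail (e.g.\ a variable node bound in $\cenv$ may be mapped to $\bot$ by $\cmap$), and then the abstract result no longer concretizes to the concrete one. So the outer induction as you set it up would break exactly at the points where rescoping is applied---not merely at the $\riprop$ cases you single out as the hard part.

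The missing idea is to strengthen the induction hypothesis with the invariants that the concrete iterates of $\csem\llbracket e \rrbracket$ actually maintain: most importantly, that every variable node bound in the current environment $\cenv$ is mapped by $\cmap$ to a value different from $\bot$ and $\cerr$ (established by a separate, simple induction on the concrete iteration), together with the analogous facts needed in the lambda-abstraction case to show that the concrete join over call sites, $\cmjoinb_{\cstack' \in \ctable}$, is abstracted by the relational join $\rmjoinb_{\cstack' \in \ctable}$. Equivalently, you can restrict the simulation statement to the actual fixpoint iterates starting from $\cmap_{\bot}$ and $\rmap_{\bot}$ (and the environments reachable from them) rather than quantifying over all related pairs. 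With that strengthening, the rest of your argument---the soundness of the primitive strengthening operations, the inner induction on table depth for $\riprop$ versus $\ciprop$, and the final fixpoint-transfer step using Lemmas~\ref{lem:rgamma-meet-morphism}, \ref{step-monotone} and \ref{thm:r-step-increasing-monotone}---goes through and coincides with the paper's proof.
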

\section{Collapsed Relational Data Flow Semantics}
\label{appendix:collapsed-semantics}

\subsection{Notation}
%
The notations and definitions for the collapsed semantics are
as for relational values except that abstract nodes take the
place of concrete nodes.

\subsection{Domain Ordering}

The ordering $\pvord$ on collapsed values $\pvalues$ resembles the
ordering on relational values:
\begin{align*}
\pval_1 \pvord[] \pval_2 \Def\iff & \pval_1 = \pbot \vee \pval_2 = \perr \vee
	(\pval_1, \pval_2 \in \prels \wedge \pval_1 \subseteq \pval_2) \;\vee\\ 
	&(\pval_1, \pval_2 \in \ptables \wedge \forall \pcsite.\;\pval_1(\pcsite) \;\pvecord\; \pval_2(\pcsite)) 
\end{align*}
We assume that when we compare two collapsed tables, we implicitly
apply $\alpha$-renaming so that they range over the same dependency
variables. Again, this ordering induces, for every $\pscope$, a
complete lattice $(\pvalues, \pvord[], \pbot, \perr,\pvjoin[], \pvmeet[])$\footnote{Technically, 
$\pvord[]$ is only a pre-order due to $\alpha$-renaming of tables.}. 
\more{ The definition of the join $\pvjoin[]$ operator is as follows.
\[\begin{array}{cc}
\pbot \pvjoin[] \pval \Def= \pval \quad\quad \pval \pvjoin[] \pbot \Def= \pval &
           \prel_1 \pvjoin[] \prel_2 \Def= \prel_1 \cup \prel_2 \\
\ptab{\pval_1}{\pval_2} \pvjoin[] \ptab{\pval_3}{\pval_4} \Def= 
	\ptab{\pval_1\pvjoin[]\pval_3}{\pval_2\pvjoin[]\pval_4} 
& \pval_1 \pvjoin[] \pval_2 \Def= \perr \quad \text{(otherwise)}
\end{array}\]
The meet $\pvmeet[]$ is defined similarly.
\begin{lemma}\label{thm:clvjoin}
Let $V \in \powerset(\pvalues)$. Then, $\pvjoin[] V = \mi{lub} (V)$ and 
$\pvmeet[] V = \mi{glb} (V)$ subject to $\pvord[]$.
\end{lemma}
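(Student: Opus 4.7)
The claim is a standard complete-lattice result: the binary $\pvjoin[]$ and $\pvmeet[]$ operators defined by case analysis induce set-theoretic lub and glb over $\pvord[]$. The plan is to prove the lub claim first and then obtain the glb claim by a dual argument, in both cases proceeding by structural induction on (a well-founded measure of) collapsed values, using the inductive hypothesis to reduce the case of tables to joins/meets of input and output components that are again collapsed values.

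First I would verify the binary case, i.e.\ that for all $\pval_1,\pval_2 \in \pvalues$, $\pval_1 \pvjoin[] \pval_2$ is the lub of $\set{\pval_1,\pval_2}$. The top/bottom and $\perr$-absorbing cases are immediate from the definitions. For two relations $\prel_1,\prel_2 \in \prels$, $\prel_1 \pvjoin[] \prel_2 = \prel_1 \cup \prel_2$ is the lub of $\set{\prel_1,\prel_2}$ in $(\powerset(\dmaps_\pscope),\subseteq)$, and this agrees with $\pvord[]$ by its definition on relations. For two tables $x\!:\!\ptable_1$ and $x\!:\!\ptable_2$ (after an $\alpha$-renaming step to a common dependency variable $x$), the definition of $\pvjoin[]$ joins componentwise for each abstract stack $\pstack$, and $\pvord[]$ on tables is likewise the pointwise lifting; appealing to the inductive hypothesis on the input and output components of each entry gives the lub property. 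For the ``otherwise'' case where $\pval_1$ and $\pval_2$ are of incompatible kinds (e.g.\ a relation and a table), $\perr$ is by construction the maximum element of $\pvord[]$, and there is no smaller upper bound of two incompatible values since any upper bound must dominate both a relation and a table, forcing it to $\perr$.

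Next I would extend this to arbitrary $V \subseteq \pvalues$. Because $\pvjoin[]$ is associative, commutative, and idempotent (all inherited from set union, $\rvjoin$, and the recursive definition), finite joins reduce to repeated binary joins and the binary argument immediately suffices. For infinite $V$, I would split on whether $V$ contains only $\pbot$, only elements of a single ``compatible'' kind (all relations, or all tables with the same dependency variable up to $\alpha$-renaming), or a mix. In the compatible-relations case, the lub is $\bigcup V$, which is well-defined and lies in $\powerset(\dmaps_\pscope)$. In the compatible-tables case, finitely many entries (since $\pstacks$ is finite) must be joined, each being a dependent pair of collapsed values to which the inductive hypothesis applies. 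Here one must be careful that input scopes stay within $\pscope$ and output scopes within $\pscope \cup \set{x}$, which is preserved by the pointwise join. In any incompatible case, the lub is $\perr$ by the same reasoning as above.

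The dual claim for $\pvmeet[]$ follows by an analogous structural induction, using intersection on relations and pointwise meets on tables; the one asymmetry is that $\pbot$ now plays the absorbing role for incompatible kinds. The main obstacle I anticipate is not any single case but the bookkeeping around $\alpha$-renaming of dependency variables in tables: to make the pointwise definition of $\pvjoin[]$/$\pvmeet[]$ on tables well-defined and the resulting operator a genuine lub/glb rather than merely a pre-lub/pre-glb, one must either work explicitly modulo $\alpha$-equivalence or fix a canonical renaming; the footnote in the excerpt already acknowledges that without this quotient $\pvord[]$ is only a preorder. With this convention settled, the structural induction goes through routinely.
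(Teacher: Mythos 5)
The paper itself offers no proof of this lemma: like its relational counterpart (Lemma~\ref{thm:rvjoin}), it is stated as a routine fact accompanying the definition of $\pvjoin[]$, so there is no authorial argument to compare against. Judged on its own, your argument is essentially the standard one and is correct: the case analysis (identity of $\pbot$, absorption by $\perr$, union on relations, pointwise joins on tables, $\perr$ for sets mixing incompatible kinds, and the dual picture for $\pvmeet[]$ with $\pbot$ absorbing) matches the definitions, and your insistence on working modulo $\alpha$-renaming of dependency variables is exactly the quotient the paper takes to turn the preorder $\pvord[]$ into a partial order. The one step to tighten is the induction behind the infinite case: $V$ may contain tables of unbounded depth, so neither the reduction to binary joins nor structural induction on a single value hands you an induction hypothesis applicable to the (possibly infinite, unbounded-depth) sets of input and output components you collect at each abstract stack; also note that with infinitely many tables you are joining infinitely many components per stack, not finitely many as your sketch suggests. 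The fix is the device the paper uses in its proof of Lemma~\ref{lem:rgamma-meet-morphism}: induct on the minimum depth of the tables in the set, which strictly decreases when passing to the component sets. With that measure fixed (and the empty-set cases $\pvjoin[]\,\emptyset = \pbot$ and $\pvmeet[]\,\emptyset = \perr$ stated explicitly), your argument for both the $\mi{lub}$ and $\mi{glb}$ claims goes through as described.
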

}
\noindent We implicitly identify $\pvalues$
  with its quotient subject to the equivalence relation induced by $\pvord[]$.
These lattices are lifted pointwise to a complete lattice on collapsed
execution maps
$(\pmaps, \pmord, \pmap_{\bot}, \pmap_{\cerr}, \pmjoin, \pmmeet)$.

%
We first formally state that our complete lattice of collapsed values
forms a Galois connection to the complete lattice of relational values.
\begin{lemma}
\label{lemma:coll-v-galois}
For every abstract scope $\pscope$,
$\pgamma[\pscope]$ is a complete meet-morphism between $\pvalues[\pscope]$ and $\rvalues[]$.
\end{lemma}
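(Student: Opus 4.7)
The plan is to verify the two defining conditions of a complete meet-morphism: $\pgamma[\pscope](\perr) = \rerr$ (the empty-meet case), and $\pgamma[\pscope](\pvmeetb[\pscope] S) = \rvmeetb\{\pgamma[\pscope](\pval) \mid \pval \in S\}$ for all nonempty $S \subseteq \pvalues[\pscope]$. The first holds directly from the defining clause of $\pgamma[\pscope]$.

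For the second, the approach is case analysis on the shapes occurring in $S$. Inspecting $\pvord[\pscope]$ and $\rvord$ shows that these orders have exactly the same structure: $\pbot$ and $\perr$ (respectively $\rbot$ and $\rerr$) are bottom and top, while two non-trivial values are comparable only when they share a shape (both dependency relations, or both function tables with the same dependency variable modulo $\alpha$-renaming), in which case comparison is componentwise. Hence meets are computed componentwise within a common shape and collapse to the bottom across shapes, on both sides. The resulting subcases are: if $\perr \in S$ we may remove it without changing the meet; if $\pbot \in S$ or $S$ mixes incompatible shapes the meet is $\pbot$ on the left and $\rbot$ on the right (using $\pgamma[\pscope](\pbot) = \rbot$); if all non-$\perr$ elements lie in $\rels[\pscope]$ the meet is set intersection, preserved because $\pgamma[\pscope]$ acts as the identity on dependency relations; and otherwise all non-$\perr$ elements are function tables $x{:}\ptable_s$ sharing a dependency variable $x$.

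In this final case, the collapsed meet $x{:}\ptable$ satisfies $\tin{\ptable(\pstack)} = \pvmeetb[\pscope]_{s \in S} \tin{\ptable_s(\pstack)}$ and $\tout{\ptable(\pstack)} = \pvmeetb[\pscope \cup \{x\}]_{s \in S} \tout{\ptable_s(\pstack)}$ for every $\pstack$, with an analogous componentwise description of the relational meet indexed by concrete stacks $\cstack$. Unfolding $\pgamma[\pscope]$ on tables, observing that precomposition with the stack abstraction $\rho$ commutes with pointwise meets (because $\rho$ is a function), and applying the inductive hypothesis to the input and output components at each $\cstack$ yields the desired equality. The main obstacle is that function tables can nest to unbounded depth, so the induction is not on a finite structure of any single element of $S$; I would handle this by first proving the identity on the natural depth-stratified approximants of $\pvalues[\pscope]$ (tables of nesting depth at most $n$) and then extending to the full lattice by continuity of $\pgamma[\pscope]$ along the monotone chains arising from the stratification.
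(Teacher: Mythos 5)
Your overall structure is the right one and matches the paper's: the paper proves this lemma ``similar to'' its Lemma~\ref{lem:rgamma-meet-morphism}, i.e.\ by exactly the case analysis you describe --- empty set gives the top element, $\perr$ elements can be dropped, sets mixing incompatible shapes (or containing $\pbot$) have meet $\pbot$ which is mapped to $\rbot$, sets of dependency relations have meet given by intersection which $\pgamma[\pscope]$ preserves because it is the identity there, and sets of function tables are handled componentwise (with the observation you make that precomposition with $\rho$ commutes with pointwise meets, and with implicit $\alpha$-renaming to a common dependency variable).

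The place where you diverge, and where your argument has a genuine gap, is the treatment of nested tables. Your proposed fix --- prove the identity on depth-$n$ approximants and then ``extend to the full lattice by continuity of $\pgamma[\pscope]$'' --- is not justified: $\pgamma[\pscope]$ is the right adjoint of a Galois connection, so what it preserves are meets (which is exactly what is being proved), not joins or sups of chains, and no continuity property along increasing chains is available or established; moreover you never say how an arbitrary collapsed value is recovered as a limit of its truncations in a way that interacts correctly with arbitrary (possibly infinite) meets over $S$. The paper avoids this entirely: the worry that ``the induction is not on a finite structure of any single element of $S$'' is resolved by inducting on the \emph{minimum} depth of any table occurring in the set (with the base cases being the non-table shapes). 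Each individual value is a well-founded object of finite depth --- for collapsed values this is immediate since $\pstacks$ is finite --- so the minimum over $S$ is a finite number even when depths in $S$ are unbounded, and for the table case the sets of input components $\{\tin{\ptable_s(\pstack)}\}_{s}$ and output components $\{\tout{\ptable_s(\pstack)}\}_{s}$ at each $\pstack$ have strictly smaller minimum depth (witnessed by the components of a minimal-depth table in $S$), so the induction hypothesis applies directly. Replacing your stratification-plus-continuity step by this induction on minimum depth makes the argument go through and coincide with the paper's proof.
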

It follows then that there exists a unique Galois connection
$\pair{\palphac}{\pgammac}$ between $\rvalues$ and
$\pvalues$. 
From Lemma~\ref{lemma:coll-v-galois} it easily follows that
$\pmgamma$ is also a complete meet-morphism. We denote by
$\pmalpha: \rmaps \to \pmaps$ the left adjoint of the
induced Galois connection.

\subsubsection*{Abstract domain operations}
In the new abstract transformer we replace the strengthening 
operations on relational values 
by their most precise abstractions on collapsed values.
$\pgammac_\delta$ are simply lifted pointwise. For instance,
strengthening with equality between dependency variables,
$\requality{\pval_{\pscope}}{\absn_1}{\absn_2}$, 
is defined as
\begin{align*}
\requality{\pval_{\pscope}}{\absn_1}{\absn_2} \Def= {} & \palpha(\requality{\pgamma(\pval)}{\absn_1}{\absn_2})
\end{align*}

\more{This definition in terms of most precise abstraction should not
obscure the simple nature of this operation. E.g., for collapsed
dependency relations $\prel$ we simply have:
\[\pstrengthen{\prel_{\pscope}}{\pnode_1{=}\pnode_2} =
  \pset{\pdmap {\in} \prel_{\pscope}}{\pnode_1,\pnode_2 {\in} \pscope \Rightarrow
    \pdmap(\pnode_1){=}\pdmap(\pnode_2)}
\]
as one might expect. The other strengthening operations on relational
values are abstracted in a similar manner.
}

\subsection{Abstract Transformer}

The abstract transformer for the collapsed semantics,
$\ctransname^{\pdesignation}: \Exp \to \penvironments \times \pstacks \to
\pmaps \to \pvalues \times \pmaps$, resembles the relational abstract
transformer $\rtransname$. That is, 
the operations on relational values in $\rtransname$
are simply replaced by their abstractions on collapsed values in $\ptransname$.
We omit the definitions of the transformer and data propagation as they
follow straightforwardly from the relational ones.
\begin{lemma}
  \label{thm:p-prop-increasing-monotone}
  The function $\pprop$ is monotone and increasing.
\end{lemma}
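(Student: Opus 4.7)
The proof will proceed by structural induction on the collapsed values being propagated, mirroring exactly the strategy used for Lemmas~\ref{prop-monotone} and~\ref{thm:r-prop-increasing-monotone} at the concrete and relational levels. This is possible because the definition of $\pprop$ is obtained from $\riprop$ simply by replacing the relational operators ($\rvjoin$, $\rupd{\cdot}{\cdot}$, $\requality{\cdot}{\cdot}{\cdot}$) by their collapsed counterparts, which are defined as the best abstractions of the relational operations through the Galois connection $\pair{\palphac}{\pgammac}$ (Lemma~\ref{lemma:coll-v-galois}). Since a best abstraction of a monotone operation is monotone, and since the collapsed join $\pvjoin$ is the least upper bound in the induced complete lattice on $\pvalues$, all the primitive operations used in the definition of $\pprop$ are both monotone and increasing in each of their arguments.

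For \emph{monotonicity}, the plan is to show that if $\pval_1 \pvord \pval_1'$ and $\pval_2 \pvord \pval_2'$, then $\pval_1 \piprop \pval_2 \pvecord \pval_1' \piprop \pval_2'$. The cases where either argument is $\pbot$ or $\perr$ are immediate from the definition of $\pvord$ and the corresponding clauses of $\piprop$. When both arguments are dependency relations, the result reduces to monotonicity of $\pvjoin$. The only nontrivial case is when both arguments are function tables $x{:}\ptable_1$ and $x{:}\ptable_2$. Here we unfold the definition of $\piprop$ point-wise over abstract stacks $\pstack$, applying the inductive hypothesis both to the input propagation $\pval_{2i} \piprop \pval_{1i}$ (noting that the inversion of argument order is harmless since the IH is symmetric in the two monotonicity hypotheses) and to the output propagation $\rupd{\pval_{1o}}{x{\mapsto}\pval_{2i}} \piprop \rupd{\pval_{2o}}{x{\mapsto}\pval_{2i}}$. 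The latter additionally requires that the strengthening $\rupd{\cdot}{x}{\cdot}$ is monotone in both arguments, which follows from its definition as a best abstraction of a monotone operation.

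For \emph{increasingness}, the plan is to show $\pval_1 \pvord \pi_1(\pval_1 \piprop \pval_2)$ and $\pval_2 \pvord \pi_2(\pval_1 \piprop \pval_2)$, again by structural induction. The edge cases with $\pbot$ or $\perr$ follow directly from the corresponding clauses. For dependency relations, $\pair{\pval_1}{\pval_1 \pvjoin \pval_2}$ is trivially increasing. For function tables, the outer components of the result table at each abstract stack $\pstack$ take the form $\pair{\pval'_{1i}}{\pval_{1o} \pvjoin \pval'_{1o}}$ and $\pair{\pval'_{2i}}{\pval_{2o} \pvjoin \pval'_{2o}}$. Applying the inductive hypothesis to the recursive propagations yields $\pval_{2i} \pvord \pval'_{1i}$ and (after unwinding the strengthening) the required bounds on the output components; increasingness then follows by the contravariant definition of $\pvord$ on function tables.

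The main obstacle I anticipate is handling the \emph{strengthening} step $\rupd{\pval_{io}}{x{\mapsto}\pval_{2i}}$ that occurs before the recursive call in the output propagation. Unlike the other operations, this one introduces a dependency on $\pval_{2i}$ that must be tracked carefully in both proofs. The key is to observe that $\rupd{\cdot}{x}{\cdot}$ is monotone as a two-argument operation (inherited from the relational level through the Galois connection), so varying $\pval_{2i}$ monotonically only strengthens the outputs in a monotone way; this is what aligns the induction with the contravariant subtype-like ordering on function tables. Once this is established, everything else reduces to the same routine case analysis as in the proof of Lemma~\ref{thm:r-prop-increasing-monotone}.
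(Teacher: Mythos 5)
Your proposal takes essentially the same route as the paper, whose entire proof is a one-line appeal to mutual structural induction on the two input values (mirroring Lemmas~\ref{prop-monotone} and~\ref{thm:r-prop-increasing-monotone}), with exactly the case analysis and reliance on monotonicity of the collapsed join and strengthening that you spell out. One small correction to your table case: the computational order $\pvord[]$ is pointwise (covariant) in both the input and output components of a table — contravariance belongs to subtyping, not to this order — so increasingness there follows from the induction hypothesis giving $\pval_{1i} \pvord[] \pval'_{1i}$ and $\pval_{2i} \pvord[] \pval'_{2i}$ (not $\pval_{2i} \pvord[] \pval'_{1i}$) together with the explicit joins $\pval_{1o} \pvjoin[] \pval'_{1o}$ and $\pval_{2o} \pvjoin[] \pval'_{2o}$ on the output side.
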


\begin{lemma}
  \label{thm:p-step-increasing-monotone}
  For every  $e \in \Exp$, $\pstack \in \pstacks$, and
  $\penv \in \penvironments$
  such that $\pair{e}{\penv}$ is well-formed, then
  $\ptrans{e}(\penv, \pstack)$ is monotone and increasing.
\end{lemma}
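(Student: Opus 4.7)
The plan is to prove the lemma by structural induction on $e$, following the same template used for the corresponding lemmas in the concrete and relational semantics (Lemmas~\ref{step-monotone} and~\ref{thm:r-step-increasing-monotone}). Since $\ptransname$ is defined almost verbatim like $\rtransname$ with collapsed counterparts of relational operations, one expects the proof to mirror the relational case step by step, appealing to Lemma~\ref{thm:p-prop-increasing-monotone} in place of Lemma~\ref{thm:r-prop-increasing-monotone}.

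First, I would establish a collection of auxiliary facts about the state-monad combinators of Fig.~\ref{fig:monadic-operations} when interpreted over $\pmaps$. Concretely, I would show that each primitive transformer $\mread{\pnode}$, $\mupd{\pnode}{\pval}$, $\menv(\penv)$, $\massert{P}$, and $\Mreturn \pval$ is both monotone in its state argument and increasing on the state component of its output (using that $\pmap \pmjoin \pval'$ in the update case only grows the map). I would then show that $\kw{bind}$ preserves both properties provided its second argument is monotone/increasing in both its value and state inputs. The $\Mfor \ptable \Mdo F$ combinator, which takes a join over all abstract stacks in $\ptable$, inherits monotonicity and the increasing property from those of $F(\pstack')$ because $\pmjoin$ is monotone and an upper bound. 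These are standard state-monad facts and can be proved by direct unfolding; there is no subtlety here.

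With the combinator lemmas in hand, I would carry out the structural induction on $e$. The constant and variable cases reduce, after unfolding, to a composition of primitive transformers together with a single application of $\piprop$; monotonicity of $\piprop$ (Lemma~\ref{thm:p-prop-increasing-monotone}) combined with monotonicity of strengthening $\pstrengthen{\cdot}{\tenv}$ — which in turn is derived from monotonicity of its defining primitive operations on $\bdomain$ — covers these cases. The application case $(e_\elabela \, e_\elabelb)_\elabel$ is handled by combining the inductive hypotheses for $e_\elabela$ and $e_\elabelb$ with $\piprop$-monotonicity and the combinator facts. For the lambda abstraction case, I would first invoke the inductive hypothesis on $e_\elabela$ to show that $\pstepbody(x, e_\elabela, \penv, \tval)(\pstack')$ is monotone and increasing for each fixed $\pstack'$; here I must be careful that the tables being passed to $\piprop$ in $\pstepbody$ depend on $\tval$, so I need joint monotonicity in $(\tval, \pmap)$ rather than in $\pmap$ alone. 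The outer $\Mfor \tval \Mdo \dots$ then yields monotonicity/increasingness of the whole lambda transformer via the combinator lemma for $\Mfor$.

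The main obstacle I anticipate is the lambda abstraction case, specifically the subtlety that the argument $\tval$ fed to $\Mfor$ is itself read from the current map via $\mupd{\pnode}{x{:}\ptable_{\bot}}$, so when comparing fixpoint iterates $\pmap_1 \pmord \pmap_2$ we must argue that the corresponding intermediate tables $\tval_1, \tval_2$ satisfy $\tval_1 \pvord \tval_2$ and that this entails a pointwise ordering between the two $\Mfor$-computations (including joins over the two — possibly different — sets of abstract stacks at which the tables have been called). This is handled by exploiting that $\pbot$ is below every value, so that additional table entries present only in $\tval_2$ contribute nothing nontrivial on the $\pmap_1$-side, combined with monotonicity of $\pstepbody$ in both its table argument and the state.
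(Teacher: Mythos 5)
Your proposal follows exactly the paper's route: the paper proves this lemma by structural induction on $e$ together with Lemma~\ref{thm:p-prop-increasing-monotone} (monotonicity of the propagation operator), which is precisely your plan. Your elaboration of the state-monad combinator facts and the care taken in the lambda-abstraction case are correct fillings-in of details the paper leaves implicit, so there is no substantive difference in approach.
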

In the above, the well-formedness of a pair $(e, \penv)$ of an expression
$e$ and abstract environment $\penv$ is defined
in a similar way as for the concrete environments.

\subsection{Abstract Semantics}

The abstract domain of the collapsed abstract semantics is given by
\textit{collapsed properties} $\pdomain \Def= \pmaps$. The collapsed semantics
$\pcoll: \Exp \to \pdomain$ is then defined as
\[\pcoll[e] \Def= \mathbf{lfp}^{\dot{\sqleq}^{\pdesignation}}_{\pmap_{\bot}} 
\Lambda \pmap.\; \Mlet \pair{\_}{\pmap_1} = \ptrans[\fuel]{e}(\epsilon, \epsilon)(\pmap) \,\Min \pmap_1.\]
%
%
%
\begin{theorem}
  \label{thm:collapsed-semantics-soundness}
  The collapsed abstract semantics is sound, i.e., for all
  programs $e$, $\rcoll \llbracket e \rrbracket \rmord \pmgamma(\pcoll[e]).$
\end{theorem}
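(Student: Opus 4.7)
The plan is to apply the standard fixpoint transfer theorem of Cousot and Cousot to the Galois connection $\pair{\pmalpha}{\pmgamma}$ between $\rmaps$ and $\pmaps$. Concretely, it suffices to show the local soundness of the abstract transformer, namely that for every expression $e$, every well-formed pair $\pair{e}{\penv}$, and every abstract stack $\pstack$, the relational transformer is pointwise approximated by the concretization of the collapsed transformer:
\[
\rtrans{e}(\pgammac\circ\penv,\,\cstack)\bigl(\pmgamma(\pmap)\bigr) \;\rvecord\; \bigl(\pgammac,\,\pmgamma\bigr)\bigl(\ptrans{e}(\penv,\pstack)(\pmap)\bigr)
\]
for every $\cstack$ with $\rho(\cstack)=\pstack$. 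Combined with monotonicity of $\ptransname$ (Lemma~\ref{thm:p-step-increasing-monotone}) and the fact that $\pgammac$ preserves meets, this yields $\rcoll\llbracket e\rrbracket \rmord \pmgamma(\pcoll[e])$ by a standard Kleene-style transfer argument.

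First I would establish the auxiliary soundness of the pointwise-lifted abstract domain operations (strengthening, equality, rescoping, constant abstraction). These are immediate by construction: each collapsed operation is defined as $\palphac$ applied to the corresponding relational operation on $\pgammac$ of its argument, so soundness reduces to the reductivity property $\palphac\circ\pgammac\sqsubseteq\mathit{id}$ of the Galois connection together with the fact that the relational operations themselves are monotone. Next, I would prove that the propagation operator $\pprop$ is a sound abstraction of $\riprop$, that is, $\riprop(\pgammac(\pval_1),\pgammac(\pval_2))$ is pointwise approximated by $(\pgammac,\pgammac)(\pprop(\pval_1,\pval_2))$. This proceeds by structural induction on the values, using the soundness of the strengthening operation $\rupdate{\cdot}{x}{\cdot}$ in the recursive case for function types, and the fact that $\pgamma$ commutes with table lookups via the stack abstraction $\rho$, i.e. $\pgamma(x{:}\ptable)(\cstack) = \pgamma(\ptable(\rho(\cstack)))$ componentwise.

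With these lemmas in hand, the main soundness proof for the transformer proceeds by structural induction on $e$. The constant and variable cases follow directly from soundness of strengthening and propagation. The application case is a routine combination of the inductive hypotheses applied to $e_\elabela$ and $e_\elabelb$ together with soundness of $\pprop$, using the critical homomorphism $\rho(\elabela \cdot \cstack) = \elabela \pconcat \rho(\cstack)$ to align the fresh call-site entry $\elabela\cdot\cstack$ added on the relational side with $\elabela\pconcat\pstack$ added on the collapsed side.

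The main obstacle will be the lambda-abstraction case, specifically the $\rstepbody$/$\pstepbody$ loop that joins over all call-site stacks present in the current table. On the relational side this is a join over concrete stacks $\cstack'\in\rtable$, while on the collapsed side it is a join over abstract stacks $\pstack'\in\ptable$. Here the collapsed iteration effectively fuses together all concrete stacks $\cstack'$ with $\rho(\cstack')=\pstack'$, and one must show that this fusion is sound --- that the join of the collapsed $\pstepbody$ calls, after concretization, over-approximates the join of the relational $\rstepbody$ calls over all preimages under $\rho$. The argument requires carefully invoking the inductive hypothesis on the body $e_\elabela$ for each concrete stack, observing that the resulting relational maps all collapse (via $\pmalpha$) into a common iterate of $\ptransname$, and then using that $\pgammac$ distributes over joins of the form needed (monotonicity of $\pgammac$ plus the fact that the join in $\pmaps$ is $\palphac$ applied to the union of concretizations). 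Once this case is dispatched, the overall soundness $\rcoll\llbracket e\rrbracket \rmord \pmgamma(\pcoll[e])$ follows by the fixpoint transfer theorem applied to the monotone operators underlying $\rcoll$ and $\pcoll$ over the complete lattices $\rmaps$ and $\pmaps$.
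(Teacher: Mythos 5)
Your proposal follows essentially the same route as the paper: soundness of the collapsed domain operations by construction (via the Galois connection), soundness of $\piprop$ with respect to $\riprop$, structural induction on the transformer with the lambda-abstraction case (the join over call-site stacks fused by $\rho$) as the only delicate step, and a standard fixpoint transfer argument. Two small repairs: the relational transformer takes \emph{concrete} environments and stacks, so the local soundness statement should quantify over $\cenv,\cstack$ with $\rho(\cenv)=\penv$ and $\rho(\cstack)=\pstack$ rather than writing $\pgammac\circ\penv$; and in the lambda case you only need that $\pmgamma$ of the collapsed result is an upper bound of each relational $\rstepbody$ result, so the join is dispatched by the lub property alone, without any claim that $\pgammac$ distributes over joins.
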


\section{Proofs}
\label{appendix:proofs}
\subsection{Preliminaries}

We first repeat some of the basic results regarding abstract
interpretation that we use in our proofs.
\begin{proposition}[\cite{cousot1979systematic}]
  \label{prop:galois-connections}
  The following statements are equivalent:
  \begin{enumerate}
  \item $\pair{\alpha}{\gamma}$ is a Galois connection
  \item $\alpha$ and $\gamma$ are monotone,
    $\alpha \circ \gamma$ is reductive: $\forall y \in
    L_2.\, \alpha(\gamma(y)) \sqleq_2 y$, and
    $\gamma \circ \alpha$ is extensive: $\forall x \in
    L_1.\, x \sqleq_1 \gamma(\alpha(x))$
  \item 
    $\gamma$ is a complete meet-morphism and
    $\alpha = \Lambda x \in L_1.\, \bigsqcap_2 \pset{y \in L_2}{x
      \sqleq_1 \gamma(y)}$
  \item $\alpha$ is a complete join-morphism and
    $\gamma = \Lambda y \in L_2. \bigsqcup_1 \pset{x \in L_1}{\alpha(x)
      \sqleq_2 y}$.
  \end{enumerate}
\end{proposition}

\begin{proposition}[\cite{cousot1979systematic}]
  \label{prop:galois-connections-2}
  $\alpha$ is onto iff $\gamma$ is one-to-one iff $\alpha \circ \gamma = \lambda y.\,y$.
\end{proposition}

We additionally use $\locs(e)$ to denote the set of all
locations of $e$. Next, given an expression $e$ with 
unique locations and a location  $\elabel \in \locs(e)$,
we use $e(\elabel)$ in our proofs to designate 
the subexpression of $e$ with the location $\elabel$.



\subsection{Concrete Semantics}

\begin{proof}[Proof of Lemma~\ref{prop-monotone}]
By mutual structural induction on both input concrete values.
\end{proof}

\begin{proof}[Proof of Lemma~\ref{step-monotone}]
By structural induction on $e$ and Lemma~\ref{prop-monotone}.
\end{proof}

\subsection{Relational Semantics}

\begin{proof}[Proof of Lemma~\ref{lem:rgamma-meet-morphism}]
Let $U \in \powerset(\rvalues)$. We show 
$\rgamma(\rvmeet U) = \bigcap_{\rval \in U} \rgamma(\rval)$.

We carry the proof by case analysis on elements of $U$ and 
induction on the minimum depth thereof when $U$ consists of
tables only. The depth of a relational value is defined in the 
expected way: the depth of bottom, relational, and top values
is 0, whereas depth of a table is defined by the maximum depth
of any value stored in the table.

We first consider the trivial (base) cases where $U$ is not a set of multiple relational tables.
\begin{itemize}
\small
\setlength{\abovedisplayskip}{0pt}
\setlength{\belowdisplayskip}{0pt}

\item $U = \emptyset$. Here, $\rgamma(\rvmeet \emptyset) = 
		\rgamma(\rerr) = (\rscope \to \cvalues) \times \cvalues = \bigcap \emptyset$.
\item $U = \{\rval\}$. Trivial.
\item $\rel \in U, \rtable \in U$. We have $\rvmeet \{\rel, \rtable\} = \rbot$
        and since $\rbot$ is the bottom element, $\rvmeet U = \rbot$. Similarly,
	$\rgamma(\rel) \cap \rgamma(\rtable) = \rgamma(\rbot)$ and since $\forall \rval.\, \rgamma(\rbot) \subseteq \rgamma(\rval)$,
	it follows $\bigcap_{\rval \in U} \rgamma(\rval) = \rgamma(\rbot) = \rgamma(\rvmeet U)$.

\item $U \subseteq \rels$. Here,
{\footnotesize \begin{gather*}
\rgamma(\rvmeet U) = \rgamma(\bigcap_{\rel \in U}\,\rel) \\
= \setc{\pair{\scmap}{c}}{\dmap \in (\bigcap_{\rel \in U}\,\rel)) \wedge \dmap(\nu) = c \wedge \forall x \in \rscope.\scmap(x) \in \dgamma(\dmap(x))} \cup \rgamma(\rbot) \\
\ind[15] \textbf{\scriptsize [by def. of $\rgamma$]} \\
= \bigcap_{\rel \in U}  (\setc{\pair{\scmap}{c}}{\dmap \in \rel \wedge \dmap(\nu) = c \wedge \forall x \in \rscope.\scmap(x) \in \dgamma(\dmap(x))} \cup \rgamma(\rbot)) \\
\ind[12] \textbf{\scriptsize [by uniqueness/non-overlapping of $\dgamma$]} \\
=  \bigcap_{\rel \in U} \rgamma(\rel)
\ind[13] \textbf{\scriptsize [by def. of $\rgamma$]}
\end{gather*}}%

\item $\rerr \in U$. Here, $ \rvmeet U = \rvmeet (U/\{\rerr\})$ and $\bigcap_{\rval \in U} \rgamma(\rval) = \bigcap_{\rval \in U, \rval \not = \rerr} \rgamma(\rval)$ since $\rgamma(\rerr) = (\rscope \to \cvalues) \times \cvalues$. The set $U/\{\rerr\}$ either falls into one of the above cases or consists of multiple tables, which is the case we show next.
\end{itemize}

Let $U \subseteq \rtables$ and $|U| > 1$. Let $d$ be the minimum depth of any table in $U$.
{\footnotesize
\begin{gather*}
\bigcap_{x:\rtable \in U} \rgamma(x:\rtable) = \{ \pair{\scmap}{\ctable}\;|\; \forall \ccsite.\;
  \ctable(\ccsite) {=} \pair{\cval_i}{\cval_o} \wedge \pair{\scmap_i}{\cval_i} \in \bigcap_{\rtable \in U} \rgamma[](\pi_1(\rtable(\ccsite))) \;\wedge \\
   \pair{\scmap_o}{\cval_o} \in \bigcap_{\rtable \in U} \rgamma[](\pi_2(\rtable(\ccsite))) \wedge \scmap_o {=} \scmap_i[x \mapsto \cval_i]\} \cup \rgamma(\rbot) \\
\ind[10] \textbf{\scriptsize [by def. of $\rgamma$ and $x$ not in the scope of $\pi_1(\rtable(\ccsite))$]} \\
= \{ \pair{\scmap}{\ctable}\;|\; \forall \ccsite.\;
  \ctable(\ccsite) {=} \pair{\cval_i}{\cval_o} \wedge \pair{\scmap_i}{\cval_i} \in \rgamma(\rvmeet_{\rtable \in U} \pi_1(\rtable(\ccsite))) \;\wedge \\
   \pair{\scmap_o}{\cval_o} \in \rgamma(\rvmeet_{\rtable \in U} \pi_2(\rtable(\ccsite))) \wedge \scmap_o {=} \scmap_i[x \mapsto \cval_i]\} \cup \rgamma(\rbot) \\
\textbf{\scriptsize [by i.h. on the min. depth of $\setc{\pi_1(\rtable(\ccsite)))}{\rtable \in U}$ and $\setc{\pi_2(\rtable(\ccsite)))}{\rtable \in U}$]} \\
= \rgamma(\rvmeet_{x:\rtable \in U} x:\rtable) \ind[20] \textbf{\scriptsize [by def. of $\rgamma$]}
\end{gather*}}%
\end{proof}

\begin{proof}[Proof of Lemma~\ref{thm:r-prop-increasing-monotone}]
By mutual structural induction on both input relational values.
\end{proof}

\begin{proof}[Proof of Lemma~\ref{thm:r-step-increasing-monotone}]
By structural induction on $e$ and Lemma~\ref{thm:r-prop-increasing-monotone}.
\end{proof}


\subsection{Collapsed Relational Semantics}

\begin{proof}[Proof of Lemma~\ref{lemma:coll-v-galois}]
Similar to the proof of Lemma~\ref{lem:rgamma-meet-morphism}.
\end{proof}

\begin{proof}[Proof of Lemma~\ref{thm:p-prop-increasing-monotone}]
By mutual structural induction on both input types.
\end{proof}

\begin{proof}[Proof of Lemma~\ref{thm:p-step-increasing-monotone}]
By structural induction on $e$ and Lemma~\ref{thm:p-prop-increasing-monotone}.
\end{proof}


\subsection{Parametric Refinement Type Semantics}

%

\begin{proof}[Proof of Lemma~\ref{thm:t-meet-morph-values}]
Similar to the proofs for Lemma~\ref{lem:rgamma-meet-morphism} and
Lemma~\ref{lemma:coll-v-galois}.
The argument is based on the definition of $\tvgamma[]$ and the fact
that type refinements from a Galois connections with the complete
lattice of dependency relations.
\end{proof}





\subsection{Soundness and Completeness of Typing Rules}

\begin{lemma}\label{lem:strengthening}
  For all $\pscope \subseteq \vars$, $\tval_1,\tval_2,\tval_1'\tval_2',\tval \in \tvalues$ and
  $x,y \in \pscope$, the following are true:
  \begin{enumerate}[label=(\arabic*)]
  \item strengthening is monotone: if $\tval_1 \tvord[] \tval_2$ then $\tupdate{\tval_1}{x}{\tval} \tvord[] \tupdate{\tval_2}{x}{\tval}$\label{lem:strengthening-monotone}
  \item strengthening is reductive: $\tupdate{\tval_1}{x}{\tval}
    \tvord[] \tval_1$\label{lem:strengthening-reductive}
  \item strengthening is idempotent: $\tupdate{\tupdate{\tval_1}{x}{\tval}}{x}{\tval} = \tupdate{\tval_1}{x}{\tval}$\label{lem:strengthening-idempotent}
  \item strengthening is commutative: $\tupdate{\tupdate{\tval}{x}{\tval_1}}{y}{\tval_2} = \tupdate{\tupdate{\tval}{y}{\tval_2}}{x}{\tval_1}$\label{lem:strengthening-commutative}
  \end{enumerate}
\end{lemma}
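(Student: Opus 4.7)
The plan is to prove all four properties by structural induction on the types involved, performing case analysis that mirrors the cases in the definition of $\rupdate{\tval_1}{x}{\tval}$. Throughout, we rely on the assumed primitive properties (monotonicity, reductivity, idempotence, commutativity) of the operations on basic refinement types in $\bdomain$: $\bmeet$, $\tequality{\cdot}{\cdot}{\cdot}$, and $\tsubst{\cdot}{x}{\nu}$. These follow from the corresponding operations on dependency relations via the Galois connection $\pair{\balpha_\pscope}{\bgamma_\pscope}$. For function types, the lifted order $\tvord$ is defined componentwise, so joining/meeting on tables preserves whatever order-theoretic properties hold on the components.

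For \ref{lem:strengthening-monotone}, I would induct on the joint structure of $\tval_1$ and $\tval_2$, noting that $\tval_1 \tvord \tval_2$ forces them to match on structure modulo $\tbot$ and $\ttop$. The outer case analysis is on $\tval$: when $\tval = \tbot$, both sides collapse to $\tbot$; when $\tval$ is a function type, both sides reduce to $\tequality{\tval_i}{x}{\rdf}$ and we appeal to monotonicity of $\tequality$; when $\tval \in \bdomain \cup \{\ttop\}$, the basic-on-basic case follows from monotonicity of $\bmeet$ and $\tsubst$, while the function-on-function case follows from the inductive hypothesis applied to each $\tin{\ttable(\pstack)}$ and $\tout{\ttable(\pstack)}$.

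For \ref{lem:strengthening-reductive}, I would induct on $\tval_1$: the basic-type case is immediate from $\tval_1 \bmeet \tsubst{\tval}{x}{\nu} \bord \tval_1$; the function-type case is componentwise by the inductive hypothesis; the $\tbot$/$\ttop$ cases are direct. For \ref{lem:strengthening-idempotent}, an analogous induction works, where the basic case reduces to $a \bmeet b \bmeet b = a \bmeet b$, the function case uses the inductive hypothesis, and the case where $\tval$ is itself a function falls to idempotence of $\tequality{\cdot}{x}{\rdf}$.

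The main obstacle is \ref{lem:strengthening-commutative}, since we must verify equivalence of two strengthenings even when $\tval_1$ and $\tval_2$ trigger different branches of the definition. I would induct on $\tval$ and then split on the shapes of $\tval_1$ and $\tval_2$. The representative sub-cases are: (i) $\tval$ basic and both $\tval_i$ basic, which reduces to commutativity of $\bmeet$ together with the fact that $\tsubst{\tval_1}{x}{\nu}$ and $\tsubst{\tval_2}{y}{\nu}$ commute because $x \neq y$ and neither affects the scope of the other; (ii) $\tval$ a function table, where both sides recurse on inputs and outputs and commute by the inductive hypothesis; (iii) one of $\tval_i$ a function, which inserts a $\tequality{\cdot}{y}{\rdf}$ constraint whose commutation with the other branch reduces to the primitive commutativity of equality strengthening and substitution. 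The delicate point is that when $\tval$ is a function $z\!:\!\ttable$, we must $\alpha$-rename $z$ to stay fresh with respect to both $x$ and $y$ so that the substitutions performed inside the recursive strengthening do not collide; this is the step that I expect will require the most careful bookkeeping.
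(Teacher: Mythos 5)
Your proposal is correct and follows essentially the same route as the paper, whose entire proof is the one-line remark that the lemma goes ``straightforwardly by structural induction over types'' with the base cases resting on the properties of meets; your case analysis on the shapes of the strengthened value and the strengthening argument, with the basic-type cases discharged by the order-theoretic properties of $\bmeet$, $\tequality{\cdot}{\cdot}{\cdot}$, and $\tsubst{\cdot}{x}{\nu}$ and the function-type cases handled componentwise by the inductive hypothesis, is exactly that argument spelled out in more detail. The $\alpha$-renaming worry in part (4) is harmless but not actually needed, since dependency variables are drawn from $\vars \setminus \pscope$ while $x,y \in \pscope$, so no collision can occur.
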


\begin{proof}
The proof goes straightforwardly by structural induction over
types. In the base case all properties follow directly from the
properties of meets.
\end{proof}

\begin{lemma}\label{lem:prop-strengthening}
  For all $\pscope \subseteq \vars$, $\tval_1,\tval_2,\tval_1',\tval_2',\tval \in \tvalues$ and $x
  \in \pscope$, if $\pair{\tval_1'}{\tval_2'} = \tupdate{\tval_1}{x}{\tval} \tiprop
    \tupdate{\tval_2}{x}{\tval}$ and $\tval_1'$ and $\tval_2'$ are safe, then $\tval_1' = \tupdate{\tval_1'}{x}{\tval}$ and $\tval_2' = \tupdate{\tval_2'}{x}{\tval}$.
\end{lemma}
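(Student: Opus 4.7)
The plan is to prove Lemma~\ref{lem:prop-strengthening} by structural induction on the pair $(\tval_1, \tval_2)$, following the case analysis in the definition of $\tiprop$ from Fig.~\ref{fig:type-propagation}. The induction is well-founded when measured by the combined depth of the tables (similar in spirit to the induction used in the proof of Lemma~\ref{lem:rgamma-meet-morphism}). The four properties of strengthening established in Lemma~\ref{lem:strengthening} will do the heavy lifting: idempotence~\ref{lem:strengthening-idempotent} handles the first projection in the ``otherwise'' branch, monotonicity~\ref{lem:strengthening-monotone} together with reductivity~\ref{lem:strengthening-reductive} handle the joined second projection, and commutativity~\ref{lem:strengthening-commutative} reconciles the two strengthening directions in the function case.

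The base cases are where one of the arguments is $\tbot$ or $\terr$, or where both are basic refinement types. If $\tupdate{\tval_2}{x}{\tval} = \tbot$, propagation returns $\pair{\tupdate{\tval_1}{x}{\tval}}{x{:}\ttable_\bot}$, and both sides are strengthened (the first by Lemma~\ref{lem:strengthening}\ref{lem:strengthening-idempotent}, the second trivially). The $\terr$-cases are excluded by the safety hypothesis, since any argument equal to $\terr$ would force $\tval_1' = \tval_2' = \terr$. For the remaining ``otherwise'' case, propagation returns $\tval_1' = \tupdate{\tval_1}{x}{\tval}$ and $\tval_2' = \tupdate{\tval_1}{x}{\tval} \tvjoin[] \tupdate{\tval_2}{x}{\tval}$. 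The first equals its own $x$-strengthening by idempotence. For the second, monotonicity~\ref{lem:strengthening-monotone} and idempotence~\ref{lem:strengthening-idempotent} yield $\tupdate{\tval_i}{x}{\tval} = \tupdate{\tupdate{\tval_i}{x}{\tval}}{x}{\tval} \tvord[] \tupdate{\tval_2'}{x}{\tval}$ for $i=1,2$, so $\tval_2' \tvord[] \tupdate{\tval_2'}{x}{\tval}$; combined with reductivity~\ref{lem:strengthening-reductive}, equality follows.

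The inductive case is when both types are function types $\tval_1 = y{:}\ttable_1$ and $\tval_2 = y{:}\ttable_2$. Strengthening pushes pointwise through each input/output pair in the table. The propagation then operates per abstract stack on $\tupdate{\tval_{2i}}{x}{\tval} \tiprop \tupdate{\tval_{1i}}{x}{\tval}$ for inputs, and on $\rupdate{\tupdate{\tval_{1o}}{x}{\tval}}{\absn}{\tupdate{\tval_{2i}}{x}{\tval}} \tiprop \rupdate{\tupdate{\tval_{2o}}{x}{\tval}}{\absn}{\tupdate{\tval_{2i}}{x}{\tval}}$ for outputs. For the inputs, the IH applies directly. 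For the outputs, the main technical step is to invoke commutativity of strengthening (Lemma~\ref{lem:strengthening}\ref{lem:strengthening-commutative}) to rewrite each $\rupdate{\tupdate{\tval_{io}}{x}{\tval}}{\absn}{\tupdate{\tval_{2i}}{x}{\tval}}$ as $\tupdate{\rupdate{\tval_{io}}{\absn}{\tupdate{\tval_{2i}}{x}{\tval}}}{x}{\tval}$, so that both arguments are of the form $\tupdate{\cdot}{x}{\tval}$ and the IH applies. Finally, the returned type's output component at each stack is $\tval_{io} \tvjoin[] \tval'_{io}$ where $\tval_{io}$ is already strengthened (since it came from strengthening $\tval_i$) and $\tval'_{io}$ is strengthened by the IH; the join of two strengthened values is itself strengthened, by the same argument as in the base case.

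The main obstacle will be the function-type case, specifically the careful bookkeeping required to track how the implicit $\absn$-strengthening in the output propagation interacts with the explicit $x$-strengthening that we are trying to preserve. The key observation that unlocks this is that commutativity (Lemma~\ref{lem:strengthening}\ref{lem:strengthening-commutative}) allows the two strengthening operations to be freely reordered, so that the inductive hypothesis can be applied to an argument already in the required form. The safety hypothesis on $\tval_1'$ and $\tval_2'$ is used throughout to rule out the $\terr$ degenerate case during recursive calls, since a $\terr$ result from a sub-propagation would bubble up and contradict safety.
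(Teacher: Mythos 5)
Your proposal matches the paper's proof essentially step for step: simultaneous induction on the depth of the two types, case analysis following the definition of $\tiprop$, using safety to discharge the $\terr$ branches, idempotence/monotonicity/reductivity for the join in the ``otherwise'' and output cases, and commutativity of strengthening to put the output propagation into a form where the induction hypothesis applies. No substantive differences and no gaps.
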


\begin{proof}
  The proof goes by simultaneous induction over the depth of $\tval_1$
  and $\tval_2$.
  Assume $\pair{\tval_1'}{\tval_2'} = \tupdate{\tval_1}{x}{\tval}
  \tiprop \tupdate{\tval_2}{x}{\tval}$ and that $\tval_1'$ and $\tval_2'$ are safe.
  We case split on the definition of $\piprop$.
  
  \paragraph{Case $\tupdate{\tval_2}{x}{\tval} = \tbot$ and
    $\tupdate{\tval_1}{x}{\tval} \notin \bdomain$.} We have
  $\tval_2' = \ttable_\bot = \tupdate{\ttable_\bot}{x}{\tval}$ by
  definition of $\tiprop$ and because strengthening is
  reductive. Moreover, we have $\tval_1' =
  \tupdate{\tval_1}{x}{\tval}$ and by idempotency of strengthening
  we obtain $\tval_1' = \tupdate{\tval_1}{x}{\tval'}$.

  \paragraph{Case $\tupdate{\tval_2}{x}{\tval} = \ttop$ and $\tupdate{\tval_1}{x}{\tval}  \notin  \bdomain$.}
  By definition of $\piprop$ we have $\tval_2' = \btop$, contradicting
  the assumption that $\tval_2'$ is safe.
  
  \paragraph{Case $\tupdate{\tval_1}{x}{\tval} \notin \bdomain$ and
    $\tupdate{\tval_2}{x}{\tval}  \notin  \bdomain$.}
  We must have $\tval_1 = z:\ttable_1$,
  $\tval_2 = z:\ttable_2$, $\tval_1' = z:{\ttable_1}'$, and
  $\tval_2' = z:{\ttable_2}'$ for some
  $\ttable_1,\ttable_2,{\ttable_1}',{\ttable_2}'$ and $z$. 

  Let $\pstack \in \pstacks$ and define:
  \begin{align*}
    \pair{\tval_{i1}}{\tval_{o1}} = {} &  \ttable_1(\pstack) &
    \pair{\tval_{i2}'}{\tval_{i1}'} = {} & \tupdate{\tval_{i2}}{x}{\tval} \tiprop
\tupdate{\tval_{i1}}{x}{\tval} & (a)\\
    \pair{\tval_{i2}}{\tval_{o2}} = {} & \ttable_2(\pstack) &
    \pair{\tval_{o1}'}{\tval_{o2}'} = {} &
    \tupdate{\tupdate{\tval_{o1}}{x}{\tval}}{z}{\tval_{i2}}  \tiprop
\tupdate{\tupdate{\tval_{o2}}{x}{\tval}}{z}{\tval_{i2}} & (b)
    \end{align*}

  Note that we have ${\ttable_1}'(\pstack) = \pair{\tval_{i1}'}{
    \tupdate{\tval_{o1}}{x}{\tval} \tvjoin[] \tval_{o1}'}$ and
  likewise ${\ttable_2}'(\pstack) = \pair{\tval_{i2}'}{
    \tupdate{\tval_{o2}}{x}{\tval} \tvjoin[] \tval_{o2}'}$. From the
  fact that $\tval_1'$ and $\tval_2'$ are safe, it follows that also
  $\tval_{i1}',\tval_{i2}',\tval_{o1}',\tval_{o2}'$ must be safe. Applying
  the induction hypothesis to $(a)$, we can directly conclude that
  $\tval_{i1}' = \tupdate{\tval_{i1}'}{x}{\tval}$ and $\tval_{i2}' =
  \tupdate{\tval_{i2}'}{x}{\tval}$.

  It remains to show that
  $\tupdate{\tval_{o1}}{x}{\tval} \tvjoin[]
    \tval_{o1}' = \tupdate{(\tupdate{\tval_{o1}}{x}{\tval} \tvjoin[]
    \tval_{o1}')}{x}{\tval}$ and $\tupdate{\tval_{o2}}{x}{\tval} \tvjoin[]
    \tval_{o2}' = \tupdate{(\tupdate{\tval_{o2}}{x}{\tval} \tvjoin[]
      \tval_{o2}')}{x}{\tval}$. The right-to-left direction of these equalities follows
    from
    Lemma~\ref{lem:strengthening}\ref{lem:strengthening-reductive}. For
    the other direction we first apply commutativity of
    strengthening
    (Lemma~\ref{lem:strengthening}\ref{lem:strengthening-commutative})
    to $(b)$ and then use the induction hypothesis to obtain 
    $\tval_{o1}' = \tupdate{\tval_{o1}'}{x}{\tval}$ and $(d)$ $\tval_{o2}' =
    \tupdate{\tval_{o2}'}{x}{\tval}$. The desired inclusions then
    follow from the derived equalities as well as
    Lemma~\ref{lem:strengthening}\ref{lem:strengthening-monotone}
    and Lemma~\ref{lem:strengthening}\ref{lem:strengthening-idempotent}.
    
    \paragraph{Otherwise.} We must have
    $\tupdate{\tval_2}{x}{\tval} \in \bdomain$ and by definition of
    $\tiprop$, we have $\tval_1' = \tupdate{\tval_1}{x}{\tval}$ and
    $\tval_2' = \tupdate{\tval_1}{x}{\tval} \tvjoin[]
    \tupdate{\tval_2}{x}{\tval}$. Note that by idempotency of
    strengthening
    (Lemma~\ref{lem:strengthening}\ref{lem:strengthening-idempotent})
    we directly have $\tval_1' = \tupdate{\tval_1'}{x}{\tval}$.

    To also show the desired equality for $\tval_2'$, let us first
    assume that $\tupdate{\tval_1}{x}{\tval} \notin \bdomain$. Then by
    we must either have $\tupdate{\tval_1}{x}{\tval} = \tbot$ or, by
    definition of $\tvord[]$ and $\tvjoin[]$, $\tval_2' = \ttop$. In
    the first case, we directly obtain
    $\tval_2' = \tupdate{\tval_1'}{x}{\tval}$. The second case contradicts the
    assumption that $\tval_2'$ is safe. Hence, consider the remaining
    case that $\tupdate{\tval_1}{x}{\tval} \in \bdomain$. Then we have:
  \begin{align*}
     && \tval_2' = {} & \tupdate{\tval_1}{x}{\tval} \tvjoin[]
  \tupdate{\tval_2}{x}{\tval}\\
     &\Rightarrow & \tval_2' & {} = \tupdate{\tval_1}{x}{\tval} \bjoin
  \tupdate{\tval_2}{x}{\tval}\\
     &\Rightarrow & \tupdate{\tval_2'}{x}{\tval} & {} = \tupdate{(\tupdate{\tval_1}{x}{\tval} \bjoin
  \tupdate{\tval_2}{x}{\tval})}{x}{\tval}\\
     &\Rightarrow & \tupdate{\tval_2'}{x}{\tval} & {} \bordinv
  \tupdate{\tupdate{\tval_1}{x}{\tval}}{x}{\tval} \bjoin
  \tupdate{\tupdate{\tval_2}{x}{\tval}}{x}{\tval}
  & \text{(Lemma~\ref{lem:strengthening}\ref{lem:strengthening-monotone})}\\
      &\Rightarrow & \tupdate{\tval_2'}{x}{\tval} & {} \bordinv
  \tupdate{\tval_1}{x}{\tval} \bjoin \tupdate{\tval_2}{x}{\tval}
  & \text{(Lemma~\ref{lem:strengthening}\ref{lem:strengthening-idempotent})}
  \\
  &\Rightarrow & \tupdate{\tval_2'}{x}{\tval} & {} \bordinv
  \tval_2'  
  \end{align*}
  We obtain $\tval_2' \bord \tupdate{\tval_2'}{x}{\tval}$ directly
  from the fact that strengthening is reductive (Lemma~\ref{lem:strengthening}\ref{lem:strengthening-reductive}). Thus, we conclude
  $\tval_2' = \tupdate{\tval_2'}{x}{\tval}$.
\end{proof}

\begin{proof}[Proof of Lemma~\ref{lem:subtyping-prop-fixpoint}]
  Let $\tval_1,\tval_2 \in \tvalues$.
  The proof goes by simultaneous induction over the depth of
  $\tval_1$ and $\tval_2$.

  \paragraph{Case $\tval_1 = \tbot$.} For the left-to-right direction,
  assume $\tval_1 \subtype \tval_2$. Only rule \textsf{s-bot} applies,
  so we must have $\tval_2 \neq \ttop$. Hence, both $\tval_1$ and
  $\tval_2$ are safe. Moreover, by the definition of $\tiprop$ we have
  $\tbot \tiprop \tval_2 = \pair{\tbot}{\tbot \tvjoin[] \tval_2} = \pair{\tbot}{\tval_2}$.

  For the other direction, assume that $\pair{\tval_1}{\tval_2} = \tval_1 \tiprop \tval_2$ and
  $\tval_1,\tval_2$ are safe. Then we have $\tval_2 \neq
  \ttop$. Hence, using rule \textsf{s-bot} we immediately conclude $\tbot \subtype \tval_2$.

  \paragraph{Case $\tval_1 \neq \tbot$ and $\tval_2 \in \bdomain$.}
  For the left-to-right direction, assume $\tval_1 \subtype \tval_2$.
  Since $\tval_2 \in \bdomain$ only rule \text{s-base} applies. Hence,
  we must have $\tval_1 \in \bdomain$, $\tval_1 \bord \tval_2$. 
  Since $\tval_1,\tval_2 \in \bdomain$, both are safe.
  It further
  follows from the definition of $\tiprop$ that
  $\tval_1 \tiprop \tval_2 = \pair{\tval_1}{\tval_1 \tvjoin[]
    \tval_2}$. Moreover, $\tval_1 \bord \tval_2$ implies
  $\tval_2 = \tval_1 \bjoin \tval_2 = \tval_1 \tvjoin[] \tval_2$.

  For the other direction, assume that
  $\pair{\tval_1}{\tval_2} = \tval_1 \tiprop \tval_2$ and
  $\tval_1,\tval_2$ are safe. In particular, it follows that
  $\tval_2 \neq \ttop$. From the definition of $\tiprop$ we can further conclude
  that we must have $\tval_2 = \tval_1 \tvjoin[] \tval_2$. The facts 
  $\tval_2 \neq \ttop$, $\tval_1 \neq \tbot$, and the definitions of $\tvord[]$
  and $\tvjoin[]$ imply that $\tval_1 \in \bdomain$. Hence, we have
  $\tval_2 = \tval_1 \bjoin \tval_2$ which implies $\tval_1 \bord \tval_2$.

  \paragraph{Case $\tval_1 \neq \tbot$ and $\tval_2 \notin \bdomain$.}
  For the left-to-right direction, assume again
  $\tval_1 \subtype \tval_2$. Only rule \textsc{s-fun} applies. Hence,
  we must have $\tval_1 = x:\ttable_1$ and
  $\tval_2 = x:\ttable_2$. Let
  $\pair{x:{\ttable_1}'}{x:{\ttable_2}'} = \tval_1 \tiprop
  \tval_2$. We show for all $\pstack \in \pstack$,
  ${\ttable_1}'(\pstack)=\ttable_1(\pstack)$ and
  ${\ttable_2}'(\pstack)=\ttable_2(\pstack)$. Thus, let
  $\pstack \in \pstacks$ and define
    \begin{align*}
    \pair{\tval_{i1}}{\tval_{o1}} = {} &  \ttable_1(\pstack) &
    \pair{\tval_{i2}'}{\tval_{i1}'} = {} & \tval_{i2} \tiprop
\tval_{i1}\\
    \pair{\tval_{i2}}{\tval_{o2}} = {} & \ttable_2(\pstack) &
    \pair{\tval_{o1}'}{\tval_{o2}'} = {} &
\tupdate{\tval_{o1}}{x}{\tval_{i2}}  \tiprop \tupdate{\tval_{o2}}{x}{\tval_{i2}}
    \end{align*}
    We know from the definition of $\tiprop$ that
    ${\ttable_1}'(\pstack) = \pair{\tval_{i1}'}{\tval_{o1} \tvjoin[]
      \tval_{o1}'}$ and likewise ${\ttable_2}'(\pstack) = \pair{\tval_{i2}'}{\tval_{o2} \tvjoin[]
      \tval_{o2}'}$.

    From the definition of $\subtype$ we further know
    \[\tval_{i2} \subtype \tval_{i1} \quad \text{and} \quad \tval_{o1} \subtype \tval_{o2}\]
    From the induction hypothesis it then follows that
    \begin{align*}
      \pair{\tval_{i2}}{\tval_{i1}} = \tval_{i2} \tiprop \tval_{i1}
\quad \text{and} \quad \pair{\tval_{o1}}{\tval_{o2}} = \tval_{o1} \tiprop \tval_{o2}
    \end{align*}
    Thus, we can directly conclude $\tval_{i1}' = \tval_{i1}$ and
    $\tval_{i2}' = \tval_{i2}$. Moreover,
    $\tval_{i1},\tval_{i2},\tval_{o1}$, and $\tval_{o2}$ must all be
    safe. Further note that we have
    $\tupdate{\tval_{o1}}{x}{\tval_{i2}} \tvord[] \tval_{o1}$ and
    $\tupdate{\tval_{o2}}{x}{\tval_{i2}} \tvord[] \tval_{o2}$ because
    strengthening is reductive (Lemma~\ref{lem:strengthening}\ref{lem:strengthening-reductive}). By monotonicity of $\tiprop$ we can
    therefore conclude $\tval_{o1}' \tvord[] \tval_{o1}$ and
    $\tval_{o2}' \tvord[] \tval_{o2}$. This implies that
    $\tval_{o1} \tvjoin[] \tval_{o1}' = \tval_{o1}$ and
    $\tval_{o2} \tvjoin[] \tval_{o2}' = \tval_{o2}$ hold.

    For the other direction, assume that
    $\pair{\tval_1}{\tval_2} = \tval_1 \tiprop \tval_2$ and that
    $\tval_1$ and $\tval_2$ are safe. It follows from
    $\tval_1 \neq \tbot$ and $\tval_2 \notin \bdomain$ as well as the
    definition of $\tiprop$ that we must have $\tval_1 = x:\ttable_1$
    and $\tval_2 = x:\ttable_2$. Let $\pstack \in \pstacks$ and define
    \begin{align*}
    \pair{\tval_{i1}}{\tval_{o1}} = {} &  \ttable_1(\pstack) &
    \pair{\tval_{i2}'}{\tval_{i1}'} = {} & \tval_{i2} \tiprop
\tval_{i1} & (a) \\
    \pair{\tval_{i2}}{\tval_{o2}} = {} & \ttable_2(\pstack) &
    \pair{\tval_{o1}'}{\tval_{o2}'} = {} &
\tupdate{\tval_{o1}}{x}{\tval_{i2}}  \tiprop
\tupdate{\tval_{o2}}{x}{\tval_{i2}} & (b)
    \end{align*}
    From the assumption that $\tval_1$ and $\tval_2$ are safe we can
    further conclude that $\tval_{i1}, \tval_{i2}, \tval_{o1}$, and
    $\tval_{o2}$ are safe. Moreover, by the fact that strengthening is
    reductive, it follows that $\tupdate{\tval_{o1}}{x}{\tval_{i2}}$
    and $\tupdate{\tval_{o2}}{x}{\tval_{i2}}$ must be safe, too.  Also
    note that $\pair{\tval_1}{\tval_2} = \tval_1 \tiprop \tval_2$
    additionally implies the following equalities
    \begin{align*}
      \tval_{i1} = {} & \tval_{i1}'  \quad (c) & \tval_{o1} = {} &
\tval_{o1} \tvjoin[] \tval_{o1}' \quad (e) \\
      \tval_{21} = {} & \tval_{21}' \quad (d) & \tval_{o2} = {} &
\tval_{o2} \tvjoin[] \tval_{o2}' \quad (f)
    \end{align*}
    Thus we obtain from $(a),(c),(d)$ that
    $\pair{\tval_{i2}}{\tval_{i1}} = {} \tval_{i2} \tiprop \tval_{i1}$
    and, by induction hypothesis, can conclude
    $\tval_{i2} \subtype \tval_{i1}$.

    From $(e)$ we can conclude $\tval_{o1}' \tvord[] \tval_{o1}$ and
    by monotonicity of strengthening
    (Lemma~\ref{lem:strengthening}\ref{lem:strengthening-monotone})
    we obtain
    $\tupdate{\tval_{o1}'}{x}{\tval_{i2}} \tvord[]
    \tupdate{\tval_{o1}}{x}{\tval_{i2}}$. Conversely, we have:
    \begin{align*}
      && \tupdate{\tval_{o1}}{x}{\tval_{i2}} \tvord[] {} & \tval_{o1}' &
\text{($\tiprop$ increasing and $(b)$)}\\
      &\Rightarrow& \tupdate{\tupdate{\tval_{o1}}{x}{\tval_{i2}}}{x}{\tval_{i2}}
\tvord[] {} & \tupdate{\tval_{o1}'}{x}{\tval_{i2}} &
 \text{(strengthening monotone, Lemma~\ref{lem:strengthening}\ref{lem:strengthening-monotone})}\\
      &\Rightarrow& \tupdate{\tval_{o1}}{x}{\tval_{i2}}
\tvord[] {} & \tupdate{\tval_{o1}'}{x}{\tval_{i2}} &
\text{(strengthening idempotent, Lemma~\ref{lem:strengthening}\ref{lem:strengthening-idempotent})}
    \end{align*}
    Thus, we have $\tupdate{\tval_{o1}}{x}{\tval_{i2}} =
    \tupdate{\tval_{o1}'}{x}{\tval_{i2}}$. Using similar reasoning
    we can also conclude $\tupdate{\tval_{o2}}{x}{\tval_{i2}} =
    \tupdate{\tval_{o2}'}{x}{\tval_{i2}}$. From
    Lemma~\ref{lem:prop-strengthening} it also follows that
    $\tval_{o1}' = \tupdate{\tval_{o1}'}{x}{\tval_{i2}}$ and
    $\tval_{o2}' = \tupdate{\tval_{o2}'}{x}{\tval_{i2}}$. Hence, it
    follows from $(b)$ that
    \[\pair{\tupdate{\tval_{o1}}{x}{\tval_{i2}}}{\tupdate{\tval_{o2}}{x}{\tval_{i2}}}
      = {} \tupdate{\tval_{o1}}{x}{\tval_{i2}}  \tiprop
      \tupdate{\tval_{o2}}{x}{\tval_{i2}}\]
    Then, by induction hypothesis we obtain
    $\tupdate{\tval_{o1}}{x}{\tval_{i2}} \subtype
    \tupdate{\tval_{o2}}{x}{\tval_{i2}}$. It follows that $\tval_1
    \subtype \tval_2$.
\end{proof}

To prove the soundness of the typing rules, we first prove the
following theorem, which is slightly stronger than Theorem~\ref{thm:typing-soundness}.

\begin{theorem}
  \label{thm:typing-soundness-ind}
  Let $e$ be an expression, $\tenv$ a valid typing
  environment, $\pstack$ an abstract stack, and $\tval \in \tvalues[]$. If
  $\tenv,\pstack \typrel e: \tval$, then for all $\tmap,\penv$ such
  that $\tmap$ is safe and
  $\tenv = \tmap \circ \penv$ there exists $\tmap_1$ such that
  $\tenv = \tmap_1 \circ \penv$ and 
  $\pair{\tmap_1}{\tval} =
  \ttrans{e}(\penv,\pstack)(\tmap_1)$. Moreover, $\tmap_1$ is safe.
\end{theorem}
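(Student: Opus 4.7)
The plan is to prove Theorem~\ref{thm:typing-soundness-ind} by structural induction on the derivation of $\tenv,\pstack \typrel e : \tval$ (equivalently, on $e$, since the typing rules are syntax-directed). The guiding tool will be Lemma~\ref{lem:subtyping-prop-fixpoint}, which recasts every subtyping side-condition of a typing rule as the exact propagation fixpoint condition needed by the corresponding case of the transformer $\ttransname$. Safety of the constructed $\tmap_1$ is preserved throughout because the subtyping rules forbid $\ttop$ on the right whenever the left-hand side is not $\tbot$, and because strengthening is reductive (Lemma~\ref{lem:strengthening}).

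For the base cases \textsc{t-const} and \textsc{t-var}, I would take $\tmap_1 = \tmap[\pnode \mapsto \tval]$ with $\pnode = \cn{\elabel}{\penv}$. Since $\penv$ ranges only over variable nodes while $\pnode$ is an expression node, the condition $\tenv = \tmap_1 \circ \penv$ survives. In \textsc{t-const}, the subtyping $\rstrengthen{\tconst}{\tenv} \subtype \tval$ combined with Lemma~\ref{lem:subtyping-prop-fixpoint} forces $\rstrengthen{\tconst}{\tenv} \tvord[] \tval$, so $\mupd{\pnode}{\rstrengthen{\tconst}{\tenv}}$ leaves $\tval$ fixed. In \textsc{t-var}, the simultaneous update of $\pnode_x = \penv(x)$ and $\pnode$ is governed by the propagation $\rstrengthen{\tenv(x)\tvareq{x}}{\tenv} \tiprop \rstrengthen{\tval\tvareq{x}}{\tenv}$; by Lemma~\ref{lem:subtyping-prop-fixpoint} applied to the rule's premise, this propagation equals the pair of its arguments, and reductivity of strengthening then ensures the final joins with the current values $\tenv(x)$ and $\tval$ preserve them.

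The inductive cases \textsc{t-app} and \textsc{t-abs} are more subtle because the transformer threads the map through recursive sub-evaluations. My approach is to apply the induction hypothesis to each sub-derivation to obtain a safe per-sub-expression fixpoint, then assemble a single $\tmap_1$ by taking from each sub-map only the values at the nodes that that sub-expression's evaluation actually touches. A straightforward side lemma, proved by structural induction on $e'$, shows that $\ttrans{e'}(\penv,\pstack)$ only reads and writes nodes of the form $\cn{\ell'}{\penv'}$ with $\ell' \in \locs(e')$ (and variable nodes reachable from $\penv$ through the lambdas bound inside $e'$); combined with the fact that all sub-derivations share the same $\tenv$ and hence agree on the values at variable nodes in $\rng(\penv)$, this guarantees the assembled map is simultaneously a fixpoint for every sub-evaluation. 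The final propagation step in each rule is then handled by Lemma~\ref{lem:subtyping-prop-fixpoint} applied to the last subtyping premise: $\tval_\elabela \subtype x{:}[\elabela \pconcat \pstack \triangleleft \tval_\elabelb \to \tval]$ in \textsc{t-app}, and $x{:}[\pstack' \triangleleft \tval_x \to \tval_\elabela] \subtype \restr{\tval}{\pstack'}$ in \textsc{t-abs}.

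I expect the main obstacle to be the \textsc{t-abs} case. First, the rule universally quantifies over all $\pstack' \in \tval$, giving one IH instance per calling stack, and the per-stack witness maps must be merged into a single $\tmap_1$ while preserving the fixpoint property of each. Disjointness of the nodes touched across different $\pstack'$ is key: each branch uses a fresh variable node $\pnode_x = \cns{x}{\penv}{\pstack'}$ and an extended environment $\penv.x{:}\pnode_x$, and I would verify this disjointness directly from the definition of variable nodes. Second, the join over all $\pstack' \in \tval$ must be shown to recover exactly $\tval$ at the abstraction's node, relying on the fact that $\ttable_\bot$ is the zero for per-stack joins at stacks outside $\tval$ and that the restriction $\restr{\tval}{\pstack'}$ in the rule precisely isolates $\tval$'s $\pstack'$-entry. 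Once these bookkeeping invariants are in place, the remaining equational reasoning is routine.
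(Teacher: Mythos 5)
Your plan is correct and follows essentially the same route as the paper's own proof: structural induction on the typing derivation, Lemma~\ref{lem:subtyping-prop-fixpoint} to turn each subtyping premise into the required propagation fixpoint, the witness $\tmap[\pnode \mapsto \tval]$ in the \textsc{t-const}/\textsc{t-var} cases, and a piecewise assembly of the sub-derivations' witness maps (split by location for \textsc{t-app}, and by the per-stack variable nodes $\cns{x}{\penv}{\pstack'}$ for \textsc{t-abs}), with the locality fact you isolate as a side lemma being exactly what the paper uses implicitly. The only detail worth adding is that before invoking the induction hypothesis in \textsc{t-abs} you must check that the extended environment $\tenv.x\!:\tval_x$ is valid, which the paper derives from $\pstack' \in \tval$ together with the subtyping premise (forcing $\tval_x \neq \tbot$) and from safety of the involved types (forcing $\tval_x \neq \ttop$).
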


\begin{proof}
  Assume $\tenv \typrel e_\elabel: \tval$ and let $\tmap,\penv$ such
  that $\tmap$ is safe and $\tenv = \tmap \circ \penv$. Let further $\pnode = \cn{\elabel}{\penv}$.
  The proof goes by structural induction over $e_\elabel$.

  \paragraph{Case $e = c$.}
  Let $\tmap_1=\tmap[\pnode \mapsto \tval]$, then
  $\tenv = \tmap_1 \circ \penv$.  By definition of the typing relation
  we must have $\rstrengthen{\tconst}{\tenv} \subtype
  \tval$. Lemma~\ref{lem:subtyping-prop-fixpoint} thus implies
  $\pair{\rstrengthen{\tconst}{\tenv}}{\tval} =
  \rstrengthen{\tconst}{\tenv} \tiprop \tval$ and $\tval$ is
  safe. Moreover, the fact that $\tenv$ is valid
  and the definition of $\tconst$ and
  strengthening imply that
  $\rstrengthen{\tconst}{\tenv} \in \bdomain_{\pscope_{\pnode}}$. By
  definition of $\tiprop$ we then must have
  $\tval = \rstrengthen{\tconst}{\tenv} \tvjoin[] \tval$. Hence, by
  definition of $\ttransname$ we directly obtain
  $\ttrans{e_\elabel}(\penv,\pstack)(\tmap_1)=\pair{\tmap_1}{\tval}$. Since
  $\tmap$ and $\tval$ are safe, so is $\tmap_1$.

  \paragraph{Case $e = x$.} Let $\tmap_1=\tmap[\pnode \mapsto \tval]$,
  then $\tenv = \tmap_1 \circ \penv$. By definition of the typing
  relation we must have
  $\rstrengthen{\tenv(x)\tvareq{x}}{\tenv} \subtype
  \rstrengthen{\tval\tvareq{x}}{\tenv}$. Lemma~\ref{lem:subtyping-prop-fixpoint} thus implies
  \[\pair{\rstrengthen{\tenv(x)\tvareq{x}}{\tenv}}{\rstrengthen{\tval\tvareq{x}}{\tenv}} =
  \rstrengthen{\tenv(x)\tvareq{x}}{\tenv} \tiprop
  \rstrengthen{\tval\tvareq{x}}{\tenv}\] and that 
  $\tval$ is safe. Let
  $\ttrans{e}(\penv,\pstack)(\tmap_1) = \pair{\tmap_2}{\tval_2}$. We
  know from the definition of $\ttransname$ that we have
  $\tmap_2 = \tmap_1[\pnode \mapsto \tval', \pnode_x \mapsto
  \tval_x']$ where 
  $\tval' = \tmap_1(\pnode) \tvjoin[] \rstrengthen{\tval\tvareq{x}}{\tenv}$ and
$\tval_x' = \tmap_1(\pnode_x) \tvjoin[] \rstrengthen{\tenv(x)\tvareq{x}}{\tenv}]$ for
  $\pnode_x=\penv(x)$. By definition of $\tmap_1$ we have
  $\tmap_1(\pnode)=\tval$ and moreover $\tenv = \tmap_1 \circ \penv$ implies
  $\tmap_1(\pnode_x) = \tenv(x)$. Since
  strengthening is reductive, we further know
  $\rstrengthen{\tval\tvareq{x}}{\tenv} \tvord[] \tval$ and
  $\rstrengthen{\tenv(x)\tvareq{x}}{\tenv} \tvord[] \tenv(x)$. Thus,
  we have
  $\tval' = \tenv(x) \tvjoin[]
  \rstrengthen{\tval\tvareq{x}}{\tenv} = \tval'$ and
  $\tval_x' = \tenv(x) \tvjoin[]
  \rstrengthen{\tenv(x)\tvareq{x}}{\tenv} = \tenv(x)$. Hence, we can conclude
  $\tmap_2 = \tmap_1$.
  
  \paragraph{Case $e = e_\elabela \, e_\elabelb$.}
  By definition of the typing relation we must have $\tenv,\pstack
  \typrel e_\elabela : \tval_\elabela$ and $\tenv,\pstack
  \typrel e_\elabelb : \tval_\elabelb$ and $\tval_\elabela \subtype
  x:[\elabela \pconcat \pstack: \tval_\elabelb \to \tval]$ for some
  $\tval_\elabela,\tval_\elabelb$.
  
  By induction hypothesis we conclude that there exist
  $\tmap_\elabela$ and $\tmap_\elabelb$ such that
  $\pair{\tmap_\elabela}{\tval_\elabela} =
  \ttrans{e_\elabela}(\penv,\pstack)(\tmap_\elabela)$,
  $\pair{\tmap_\elabelb}{\tval_\elabelb} =
  \ttrans{e_\elabelb}(\penv,\pstack)(\tmap_\elabelb)$ and
  $\tmap_\elabela \circ \penv = \tmap_\elabelb \circ \penv =
  \tenv$. Moreover, $\tmap_\elabela$, $\tmap_\elabelb$ are
  safe. Define $\tmap_1$ as follows:
  \[\tmap_1 = \Lambda \pnode_1.\,
    \begin{cases}
      \tmap_\elabela(\pnode_1) & \text{if } \nloc(\pnode_1) \in e_\elabela\\
      \tmap_\elabelb(\pnode_1) & \text{if } \nloc(\pnode_1) \in
      e_\elabelb\\
      \tval & \text{if } \pnode_1 = \pnode\\
      \tmap(\pnode_1) & \text{otherwise}
    \end{cases}
  \]
  Note that we have
  $\pair{\tmap_1}{\tval_\elabela} =
  \ttrans{e_\elabela}(\penv,\pstack)(\tmap_1)$,
  $\pair{\tmap_1}{\tval_\elabelb} =
  \ttrans{e_\elabelb}(\penv,\pstack)(\tmap_1)$ and
  $\tmap_1 \circ \penv = \tenv$. Also, $\tmap_1$ is safe. By
  Lemma~\ref{lem:subtyping-prop-fixpoint} we further know
  $\pair{\tval_\elabela}{ x:[\elabela \pconcat \pstack: \tval_\elabelb
    \to \tval]} = \tval_\elabela \tiprop x:[\elabela \pconcat \pstack:
  \tval_\elabelb \to \tval]$. Thus, we have
  $\ttrans{e}(\penv,\pstack)(\tmap_1) = \pair{\tmap_1}{\tval}$.

  \paragraph{Case $e = \lambda x. e_\elabela$.}
  Let
  $\pstack' \in \tval$.  By definition of the typing relation we must
  have $\tenv_\elabela,\pstack' \typrel e_\elabela : \tval_\elabela$
  and
  $x\!:[\pstack':\tval_x \to \tval_\elabela] \subtype \tval(\pstack')$
  where $\tenv_\elabela = \tenv.x\!:\tval_x$ for some
  $\tval_x,\tval_\elabela$.
   
  First, by Lemma~\ref{lem:subtyping-prop-fixpoint} we know that
  $\pair{x\!:[\pstack':\tval_x \to \tval_\elabela]}{\tval(\pstack')} =
  x\!:[\pstack':\tval_x \to \tval_\elabela] \tiprop \tval(\pstack')$
  and that $\tval(\pstack),\tval_x,\tval_\elabela$ are all safe.

  Define
  $\tmap_{\pstack'} = \tmap[\pnode_{\pstack',x} \mapsto \tval_x]$ and
  $\penv_{\pstack',\elabela} = \penv[x \mapsto \pnode_{\pstack',x}]$
  where $\pnode_{\pstack',x} = \cns{x}{\penv}{\pstack'}$. Then
  $\tenv_\elabel = \tmap_{\pstack'} \circ
  \penv_{\pstack',\elabela}$. Because $\pstack' \in t$ and
  $x\!:[\pstack':\tval_x \to \tval_\elabela] \subtype \tval(\pstack')$
  we know that $\tval_x \neq \tbot$. If follows that $\tenv_\elabela$
  is valid. Hence, applying the induction hypothesis, it follows that
  there exists $\tmap_{\pstack',1}$ such that
  $\ttrans{e_\elabela}(\penv_{\pstack',\elabela},\pstack')(\tmap_{\pstack',1})
  = \pair{\tmap_{\pstack',1}}{\tval_\elabela}$. Altogether, we can
  therefore conclude
  \[\tstepbody(x, e_\elabela, \pnode, \penv,
    \tval)(\pstack')(\tmap_{\pstack',1}) = \pair{\tmap_{\pstack',1}}{\tval(\pstack')}\enspace.\]
  Now define
  \[\tmap_1 = \Lambda \pnode_1.\, \begin{cases}
      \tval & \text{if } \pnode = \pnode_1\\
      \tmap_{\pstack',1}(\pnode_1) & \text{if }
      \pnode_{\pstack',x} \in \pnode_1\\
      \tmap(\pnode_1) & \text{otherwise}
    \end{cases}\] Note that we now have for all
  $\pstack' \in \pstacks$,
  $\tstepbody(x, e_\elabela, \pnode, \penv, \tval)(\pstack')(\tmap_1)
  = \pair{\tmap_1}{\tval(\pstack')}$. Moreover, $\tmap_1$ is safe and
  satisfies $\tenv = \tmap_1 \circ \penv$. Finally, note that because
  $\tval \in \ttables[]$, we have
  $\tmap_1(\pnode) = \tval = \tval \tvjoin[] x:\ttable_\bot =
  \tvjoinb_{\pstack' \in \tval} \tval(\pstack')$. Hence, it follows
  that $\ttrans{e}(\penv,\pstack)(\tmap_1) = \pair{\tmap_1}{\tval}$
  holds.
\end{proof}

\begin{proof}[Proof of Theorem~\ref{thm:typing-soundness}]
The theorem follows directly from Theorem~\ref{thm:typing-soundness-ind} by
letting $\cenv(x) = \cvn{x}{\epsilon}{\epsilon}$ and
\[\tmap(n) = \begin{cases}
    \gamma(x) & \text{if } n = \cenv(x)\\
    \tbot & \text{otherwise}
  \end{cases}
\]
\end{proof}

\begin{proof}[Proof of Theorem~\ref{thm:typing-completeness}]
  Assume $\ttrans{e_\elabel}(\penv,\pstack)(\tmap)=\pair{\tmap}{\tval}$ and
  $\tmap$ is safe. The proof goes by structural induction over
  $e$. Let $\pnode = \cn{\elabel}{\penv}$ and $\Gamma = \tmap \circ \penv$.

  \paragraph{Case $e = c$.}
  By definition of $\ttransname$, we have $\tmap(\pnode) = \tval =
  \tval \tvjoin[] \rstrengthen{\tconst}{\tenv}$. Moreover, since
  $\rstrengthen{\tconst}{\tenv} \in \bdomain$ and since $\tval \neq \ttop$ we
  have by definition of $\tiprop$ that
  $\pair{\rstrengthen{\tconst}{\tenv}}{\tval} =
  \rstrengthen{\tconst}{\tenv} \tiprop \tval$ holds. It follows from
  Lemma~\ref{lem:subtyping-prop-fixpoint}, that
  $\rstrengthen{\tconst}{\tenv} \subtype \tval$ must also hold. By the
  definition of the typing relation we then obtain $\tenv,\pstack \typrel e : \tval$.

  \paragraph{Case $e = x$.}
  Let $\pnode_x = \penv(x)$.
  By definition of $\ttransname$, we have $\pair{\tval_x}{\tval'} =
  \rstrengthen{\tenv(x)\tvareq{x}}{\tenv} \tiprop \rstrengthen{\tenv(x)\tvareq{x}}{\tenv}$ where
  $\tval_x \tvord[] \tenv(x)$ and $\tval' \tvord[] \tval$. Hence by
  monotonicity of strengthening we obtain
  $\rstrengthen{\tval_x\tvareq{x}}{\tenv} \tvord[]
  \rstrengthen{\tenv(x)\tvareq{x}}{\tenv}$ and
  $\rstrengthen{\tval'\tvareq{x}}{\tenv} \tvord[]
  \rstrengthen{\tval\tvareq{x}}{\tenv}$. 

  Because $\tiprop$ is increasing, we further know
  $\rstrengthen{\tenv(x)\tvareq{x}}{\tenv} \tvord[] \tval_x$ and
  $\rstrengthen{\tval\tvareq{x}}{\tenv} \tvord[] \tval'$. Applying
  monotonicity and idempotence of strengthening again we can conclude
  $\rstrengthen{\tenv(x)\tvareq{x}}{\tenv} \tvord[]
  \rstrengthen{\tval_x\tvareq{x}}{\tenv}$ and
  $\rstrengthen{\tval\tvareq{x}}{\tenv} \tvord[]
  \rstrengthen{\tval'\tvareq{x}}{\tenv}$. Hence, we obtain in fact
  $\rstrengthen{\tenv(x)\tvareq{x}}{\tenv} =
  \rstrengthen{\tval_x\tvareq{x}}{\tenv}$ and
  $\rstrengthen{\tval\tvareq{x}}{\tenv} =
  \rstrengthen{\tval'\tvareq{x}}{\tenv}$. By
  Lemma~\ref{lem:prop-strengthening} it further follows that
  $\tval_x = \rstrengthen{\tval_x\tvareq{x}}{\tenv}$ and
  $\tval' = \rstrengthen{\tval'\tvareq{x}}{\tenv}$. Hence, we can
  apply Lemma~\ref{lem:subtyping-prop-fixpoint} to conclude
  $\rstrengthen{\tenv(x)\tvareq{x}}{\tenv} \subtype
  \rstrengthen{\tval\tvareq{x}}{\tenv}$. Finally, using the typing
  rule for variables, it follows that $\tenv,\pstack \typrel e :
  \tval$ holds.

  \paragraph{Case $e = e_\elabela \, e_\elabelb$.}
  By the definition of $\ttransname$ and the fact that $\ttransname$ is
  increasing, we must have
  $\ttrans{e_\elabela}(\penv,\pstack)(\tmap) =
  \pair{\tmap}{\tval_\elabela}$ and $\ttrans{e_\elabelb}(\penv,\pstack)(\tmap) =
  \pair{\tmap}{\tval_\elabelb}$. Hence, by induction hypothesis, we
  conclude $\tenv,\pstack \typrel e_\elabela : \tval_\elabela$ and
  $\tenv,\pstack \typrel e_\elabelb : \tval_\elabelb$. Furthermore, we
  know that
  \[\tval'_\elabela,\, \absn:[\tentry{\elabela \pconcat \pstack}{\tval_\elabelb'}{\tval'}] = \tval_\elabela \tiprop
\absn:[\tentry{\elabela \pconcat \pstack}{\tval_\elabelb}{\tval}]\]
  for some $\tval_\elabela',\tval_\elabelb',\tval'$ such that
  $\tval_\elabela' \tvord[] \tval_\elabela$, $\tval_\elabelb' \tvord[]
  \tval_\elabelb$ and $\tval' \tvord[] \tval$. Since $\tiprop$ is
  increasing, we conclude
  \[\tval_\elabela,\, \absn:[\tentry{\elabela \pconcat \pstack}{\tval_\elabelb}{\tval}] = \tval_\elabela \tiprop
    \absn:[\tentry{\elabela \pconcat \pstack}{\tval_\elabelb}{\tval}]\]
  Moreover, since $\tmap$ is safe, so must be $\tval_\elabela$,
  $\tval_\elabelb$, and $\tval$. Applying
  Lemma~\ref{lem:subtyping-prop-fixpoint} we therefore obtain $\tval_\elabela \subtype
    \absn:[\tentry{\elabela \pconcat \pstack}{\tval_\elabelb}{\tval}]$
    and, hence, $\tenv,\pstack \typrel e : \tval$ using the typing
    rule for function application.

    \paragraph{Case $e = \lambda x.\,e_\elabela$.}
    By the definition of $\ttransname$, we must have $\tmap(\pnode) =
    \tval = \tval \tvjoin[] \ttable_\bot$. Since $\tval \neq \ttop$ it
    follows that $\tval \in \ttables[]$.

    Now let $\pstack \in \tval$. Since $\tstepbody$ is increasing, it
    follows that we must have
    \[\tstepbody(x, e_\elabela, \pnode, \penv, \tval)(\pstack')(\tmap)
      = \pair{\tmap}{\tval(\pstack')}\]
    
    Using similar reasoning again for the definition of $\tstepbody$,
    we conclude that we must have
    $\ttrans{e_\elabela}(\penv_\elabela, \pstack')(\tmap) =
    \pair{\tmap}{\tval_\elabela}$ where
    $\tval_\elabela = \tmap(\pnode_\elabela)$,
    $\pnode_\elabela = \cn{\elabela}{\penv_\elabela'}$,
    $\penv_\elabela, = \penv.x\!:\!\pnode_x$, and
    $\pnode_x = \cns{x}{\penv}{\pstack'}$. Let further $\tval_x =
    \tmap(\pnode_x)$ and $\tenv_\elabela = \tenv.x\!:\tval_x$. It
    follows that $\tenv_\elabela = \tmap \circ \penv_\elabela$. Hence,
    the induction hypothesis entails that we must have
    $\tenv_\elabela,\pstacks' \typrel e_\elabela : \tval_\elabela$.

    Finally, we know
    $x:[\ttab[\pstack']{\tval_x'}{\tval_\elabela'}], \tval' =
    x:[\ttab[\pstack']{\tval_x}{\tval_\elabela}] \tiprop
    \tval(\pstack')$ for some $\tval_x',\tval_\elabela',\tval'$ such
    that $\tval_x' \tvord[] \tval_x$,
    $\tval_\elabela' \tvord[] \tval_\elabela$, and
    $\tval' \tvord \tval(\pstacks')$. From the fact that $\tiprop$ is
    increasing it further thus follows that $\tval_x' = \tval_x$,
    $\tval_\elabela' = \tval_\elabela$, and
    $\tval' = \tval(\pstacks')$. We can therefore conclude using
    Lemma~\ref{lem:subtyping-prop-fixpoint} that
    $x:[\ttab[\pstack']{\tval_x}{\tval_\elabela}] \subtype
    \tval(\pstack')$. Using the typing rule for lambda abstractions
    it follows that $\tenv,\pstack \typrel e : \tval$.    
\end{proof}


}

\end{document}